\documentclass[11pt]{article}

\usepackage[margin=1.1in]{geometry}
\usepackage{amsmath,amssymb,amsthm}
\usepackage{mathtools}
\usepackage{booktabs}
\usepackage[numbers]{natbib}
\usepackage[hidelinks]{hyperref}
\usepackage{microtype}
\usepackage{pgf}
\usepackage{type1cm}

\theoremstyle{plain}
\newtheorem{theorem}{Theorem}
\newtheorem{proposition}[theorem]{Proposition}
\newtheorem{corollary}[theorem]{Corollary}
\theoremstyle{definition}
\newtheorem{definition}[theorem]{Definition}
\theoremstyle{remark}
\newtheorem{remark}[theorem]{Remark}

\newcommand{\R}{\mathbb{R}}
\newcommand{\E}{\mathbb{E}}
\newcommand{\Cov}{\operatorname{Cov}}
\newcommand{\Var}{\operatorname{Var}}
\newcommand{\Sig}{\mathbb{S}}
\newcommand{\shuf}{\mathbin{\text{\rlap{$\sqcup$}$\sqcup$}}}
\newcommand{\ip}[2]{\langle #1,\, #2\rangle}
\newcommand{\CE}{\mathrm{CE}}

\title{\bfseries Path Portfolio Optimization:\\
Defect, Lift, and the Price of Path Complexity}

\author{Miquel Noguer i Alonso\\
\normalsize Artificial Intelligence Finance Institute}

\date{\today}

\begin{document}
\maketitle

\begin{abstract}
This paper builds \emph{path portfolio optimization}: portfolio theory on a
path-first framework in which the signature is the universal coordinate of the
price path, and asks whether it survives estimation. A portfolio is a linear
functional of the signature, so the control lives in a truncated tensor
algebra, the covariance of signature coordinates is the non-group-like part of
the expected signature --- a defect form --- and the whole mean--variance
problem becomes a linear system in one tensor. Two structural results follow.
The lift is the execution convention: the gap between the Marcus and forward
lifts, contracted with portfolio weights, is Fernholz's excess growth rate
exactly, so excess growth is the geometricity defect of the portfolio map. And
the antisymmetric block at level two is lift-invariant pathwise, so
directional signals are convention-free while variance signals and ruin are
not. The empirical finding is a dimensional trade-off. With the expected
signature known, quadratic path functionals raise the certainty equivalent
elevenfold for a pair of assets and sixtyfold for a cross section of twenty;
with it estimated, the unregularized policy is severely negative until the
sample exceeds roughly six observations per parameter, and shrinkage flips
from harmful in the pair to indispensable in the cross section. The entire
gain sits in the symmetric block, which is convexity in the terminal increment
rather than path-dependence; the path-dependent antisymmetric block earns
nothing when the driver has no expected area. And the sample-size floor
belongs to unstructured estimation rather than to path complexity: an
estimator that fits only the generator of the driver and rebuilds the expected
signature recovers almost all of the attainable value at barely one
observation per parameter.
\end{abstract}

\vspace{1em}
\noindent\textbf{Keywords:} path signatures; rough paths; shuffle algebra;
expected signature; portfolio optimization; stochastic portfolio theory;
excess growth rate; Marcus and It\^o lifts; Hawkes processes; estimation
risk; shrinkage; sample complexity.

\newpage

\section{Introduction}

Classical portfolio theory takes a weight vector and a covariance matrix as
primitives \citep{markowitz1952}. Everything path-dependent --- momentum,
drawdown control, trend, variance trading, lead--lag --- arrives afterwards as
a signal bolted onto a static optimizer. The alternative is to make the path
the primitive: the signature is a universal coordinate system for continuous
functionals of a path \citep{chen1957,lyonsqian2002,levin2013}, and in those
coordinates path-dependent and static allocation stop being different problems.

The point of departure is \citet{noguer2026paths}, which argues that
geometricity is a first-order algebraic property whose obstructions are second
order: a bracket when the lift is non-geometric, and a covariance when one
averages over random paths. This paper asks what each obstruction does to a
portfolio. Each turns out to be an object portfolio theory already prices under
another name --- and then, because being structurally elegant is not the same
as being usable, it asks how much of the resulting advantage survives having
to estimate anything.

\paragraph{Spine.} \emph{The covariance obstruction is risk; the bracket
obstruction is execution; and the dimension of the strategy space is the price
of admitting either.} We call the resulting framework \emph{path portfolio
optimization}, by analogy with stochastic portfolio theory: where SPT takes
the market weights as its primitive, path portfolio optimization takes the
signature.

\paragraph{Contributions.}
\begin{enumerate}
\item \textbf{Excess growth is the geometricity defect}
(Corollary~\ref{cor:excess}). The Marcus--forward lift gap at level two is one
half the quadratic covariation; contracted with portfolio weights it is
exactly Fernholz's $\gamma^{*}$. Stochastic portfolio theory's central quantity
is the bracket obstruction of \citet{noguer2026paths}, and a portfolio theory
written on a geometric lift has already spent it.

\item \textbf{Direction is convention-free, risk and ruin are not}
(Corollaries~\ref{cor:area},~\ref{cor:ruin}). The antisymmetric level-two block
is identical under both lifts \emph{pathwise}, to machine precision, so no
lead--lag estimate depends on the It\^o/Marcus choice. The symmetric block does,
and so does solvency: Marcus-lifted wealth cannot be extinguished by a jump,
forward-lifted wealth can. Whether a model admits this representation of gap risk is determined by the
chosen lift --- that is, by the assumed execution convention --- rather than by
the dynamics alone.

\item \textbf{The price of path complexity is dimensional, and it is
paid in the symmetric block} (Section~\ref{sec:cost}). With $\E[\Sig]$ known,
level-two controls raise the certainty equivalent by $11.15\times$ at $d=2$
and $59.6\times$ at $d=20$; a word-class decomposition
(Proposition~\ref{prop:wordclass}) puts all of that gain in the symmetric
words and exactly none of it in the antisymmetric ones. With $\E[\Sig]$
estimated, the raw plug-in is negative below roughly six observations per
parameter --- median $-306\times$ the oracle at $M/p=1.19$, with an
interquartile range that does not come near zero --- while ridge shrinkage
that destroys half the value at $d=2$ delivers $0.58\times$ the oracle there.
A model-consistent estimator that fits only the generator and rebuilds
$\E[\Sig]$ by exponential and shuffle recovers $0.95\times$ at the same
sample size (Proposition~\ref{prop:model}): the floor is a property of
unstructured estimation of the risk form, not of path complexity.

\item \textbf{The optimizer exists exactly when the class is coherently
priced} (Section~\ref{sec:hilbert}). On the infinite-dimensional closure of
the strategy space, the mean--variance value is finite if and only if
$\mu\in\operatorname{Range}(\Gamma^{1/2})$ and attained if and only if
$\mu\in\operatorname{Range}(\Gamma)$; the maximal Sharpe ratio of the
class is $\|\Gamma^{\dagger/2}\mu\|$, and failure of the range condition
is an asymptotic arbitrage inside the model. In log coordinates the leading
obstruction to group-likeness of $\E[\Sig]$ is exactly half the covariance
of terminal increments (Proposition~\ref{prop:logdefect}).

\item \textbf{Self-exciting allocation closes, with a stated failure mode}
(Section~\ref{sec:hawkes}). The Hawkes closure of \citet{noguer2026paths} makes
myopic path-dependent allocation a linear solve in an augmented state. What
fails near criticality is not the closure but the habit of substituting its
stationary limit: at branching ratio $0.9$ the stationary variance overstates
the realized one by a factor of $4.9$ on a twenty-unit horizon.

\item \textbf{The cross-area is a signed, capped directional portfolio ---
but only against sign excitation} (Section~\ref{sec:area}). It detects the sign
of excitation asymmetry and is null under symmetric excitation; being
reversal-odd, its edge is bounded by entropy production under the
coarse-graining hypotheses of Proposition~\ref{prop:ceiling}, which bind the
one-step increments the cross-area accumulates rather than the path
functional directly (Remark~\ref{rem:ceilscope}). The second negative result is that excitation
which
raises intensity without carrying direction moves the cross-area of the
counting path ($t=15.99$) and not that of the price path ($t=0.96$).
Reflexivity in activity and reflexivity in direction are different resources,
and only the second is tradable.
\end{enumerate}

\paragraph{Related work.} Signature-linear trading strategies with an
exact shuffle-based mean--variance solution are due to \citet{futter2023sig},
and linear path-functional portfolios in stochastic portfolio theory, with
universality and convex tractability, to \citet{cuchiero2023moller}; a kernel
generalization is \citet{mucacirone2025}. Relative to these, the present paper
contributes no new optimizer. What it adds is a portfolio interpretation of the risk form through the
geometricity defect and a connection between excess growth and the lift gap
(Sections~\ref{sec:defect}--\ref{sec:lift}), the dimensional estimation study
(Section~\ref{sec:cost}), the corrected use of Hawkes closures
(Section~\ref{sec:hawkes}), and the activity-versus-direction decomposition of
the cross-area trade (Section~\ref{sec:area}), whose statistic itself goes
back to \citet{gyurko2013}. The dimensional study stands in an explicit
relation to the virtue-of-complexity result of \citet{kelly2024virtue}, spelled
out in Remark~\ref{rem:voc}.

\paragraph{What this paper is not.} No market data is used anywhere. All
parameters are invented and internally consistent. The estimation results are
about sampling error only: transaction costs, non-stationarity and model
misspecification would all make the picture worse, never better.

\section{Signature-linear portfolios}

Let $X:[0,T]\to\R^{d}$ be a log-price path of finite $p$-variation, possibly
with jumps, $\Sig(X)_{0,T}$ its signature and $\Sig^{\le m}$ the truncation at
level $m$. Words in $\{1,\dots,d\}$ index coordinates, $\ip{w}{\Sig}$ is the
coefficient of $w$, and $\shuf$ is the shuffle product.

\begin{definition}[Signature-linear control]
A \emph{signature-linear portfolio} of order $m$ is
$G_{\ell}(X) = \ip{\ell}{\Sig^{\le m}(X)_{0,T}}$.
\end{definition}

The class is asymptotically unrestrictive: signature-linear functionals are
dense in continuous functionals on compacta of unparameterized paths, the
standard universality argument behind signature methods in finance
\citep{levin2013,kalsi2020,lyons2019nonparametric,cuchiero2023signature}. Level
one reproduces buy-and-hold. Level two reproduces
$\int_{0}^{T}X^{i}\,dX^{j}$ --- momentum when $i=j$, lead--lag when $i\neq j$.
Everything a practitioner calls ``a signal times a position'' lives at level
two or above. Throughout, $p = d + d^{2}$ denotes the number of level-$\le 2$
words, the parameter count of the problem.

\section{The defect form}
\label{sec:defect}

\begin{definition}[Defect form]
$D(u,v) = \ip{u \shuf v}{\E[\Sig(X)]} - \ip{u}{\E[\Sig(X)]}\ip{v}{\E[\Sig(X)]}$.
\end{definition}

\begin{proposition}[Defect form is covariance; \textsc{Proved, known}]
\label{prop:defect}
For a geometric lift with $\E\|\Sig^{\le 2m}\|<\infty$ and words $u,v$ of
length at most $m$, $D(u,v) = \Cov(\ip{u}{\Sig},\ip{v}{\Sig})$; hence
$\Cov(G_{\ell},G_{\ell'}) = \ell^{\top}D\ell'$.
\end{proposition}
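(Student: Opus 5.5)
The plan is to use the shuffle identity pathwise and then take expectations. For a geometric lift, every realization $\Sig(X)_{0,T}$ is group-like in the truncated tensor algebra. For words $u,v$ with $|u|+|v|\le 2m$, group-likeness gives the pathwise shuffle identity
\[
\ip{u}{\Sig}\,\ip{v}{\Sig} = \ip{u\shuf v}{\Sig}.
\]
For continuous bounded-variation paths this is Ree's theorem. For geometric rough paths it is the defining property, obtained by passing to the limit along smooth approximations. For the Marcus lift of a jump path it holds because that lift is built from geometric pieces, namely exponentials of the jumps concatenated via Chen's relation, and the group-like elements form a group. I would state this as the single algebraic input and cite \citet{noguer2026paths} or \citet{lyonsqian2002} for it, rather than reprove it.

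Integrability comes next. The word $u\shuf v$ is a finite integer combination of words of length $|u|+|v|\le 2m$, and each such coordinate is dominated by $\|\Sig^{\le 2m}\|$. The hypothesis $\E\|\Sig^{\le 2m}\|<\infty$ therefore makes $\ip{u\shuf v}{\Sig}$ integrable. By the shuffle identity with $u=v$, this also gives $\E[\ip{u}{\Sig}^2]=\E\ip{u\shuf u}{\Sig}<\infty$, so each level-$\le m$ coordinate lies in $L^2$ and the covariances are well defined. This is the only point where the moment hypothesis is used, and it is exactly why the hypothesis sits at level $2m$ rather than $m$.

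Taking expectations in the identity and using linearity of $\E$ and of the pairing gives
\[
\E[\ip{u}{\Sig}\ip{v}{\Sig}] = \ip{u\shuf v}{\E[\Sig]}.
\]
Subtracting $\E\ip{u}{\Sig}\,\E\ip{v}{\Sig} = \ip{u}{\E\Sig}\ip{v}{\E\Sig}$ then yields $D(u,v)=\Cov(\ip{u}{\Sig},\ip{v}{\Sig})$. The statement for portfolios follows by bilinearity. Expand $G_\ell=\sum_u \ell_u\ip{u}{\Sig}$ and $G_{\ell'}=\sum_v \ell'_v\ip{v}{\Sig}$, which are finite sums over words of length at most $m$, and use bilinearity of both $\Cov$ and $D$ to get $\Cov(G_\ell,G_{\ell'})=\ell^\top D\ell'$.

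The main obstacle is justifying the pathwise shuffle identity in the jump setting. The statement says ``geometric lift'', so I would make explicit that this is precisely the property being assumed; for Marcus lifts it reduces to Chen plus the group-likeness of tensor exponentials. The caveat worth recording is that the forward (It\^o-type) lift fails this identity at level two, which is the bracket obstruction exploited later in the paper.
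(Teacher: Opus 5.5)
Your proposal is correct and follows the paper's proof: apply the pathwise shuffle identity, take expectations, and subtract the product of means, with the length-$2m$ bound on words in $u\shuf v$ supplying integrability. Your observation that $\ip{u}{\Sig}^{2}=\ip{u\shuf u}{\Sig}$ puts each level-$\le m$ coordinate in $L^{2}$ makes explicit what the paper's two-line proof leaves implicit, and your bilinear extension to $G_{\ell}$ is exactly the intended reading.
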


\begin{proof}
The shuffle identity $\ip{u}{\Sig}\ip{v}{\Sig}=\ip{u\shuf v}{\Sig}$ holds
pathwise; take expectations and subtract the product of means. Words in
$u\shuf v$ have length at most $2m$.
\end{proof}

\begin{remark}[Global Hopf formulation]
\label{rem:hopf}
The coordinate statement has a coordinate-free form. In the shuffle Hopf
algebra the shuffle product is adjoint to the deconcatenation coproduct
$\Delta$ \citep{reutenauer1993}, so with $\bar\Sig=\E[\Sig(X)]$ the whole
defect form is the single tensor
\[
\mathbf D \;=\; \Delta\bar\Sig \;-\; \bar\Sig\otimes\bar\Sig
\;\in\; T((\R^{d}))\otimes T((\R^{d})),
\qquad
D(u,v)=\ip{u\otimes v}{\mathbf D}.
\]
Group-likeness \emph{is} the equation $\Delta g=g\otimes g$, so
$\mathbf D=0$ iff $\bar\Sig$ is group-like --- the degeneracy half of
Corollary~\ref{cor:psd} becomes definitional. And since a geometric signature
is group-like pathwise, $\Delta\Sig=\Sig\otimes\Sig$, while $\Delta$
commutes with expectation,
\[
\mathbf D=\E[\Delta\Sig]-\E[\Sig]\otimes\E[\Sig]
=\E[\Sig\otimes\Sig]-\E[\Sig]\otimes\E[\Sig],
\]
the covariance tensor of the signature: Proposition~\ref{prop:defect} in two
lines, with no words. The coordinate proof is retained above because it shows
where the truncation level $2m$ enters.
\end{remark}

\begin{remark}[Attribution]
This is not new, and two lines of work make it precise.
\citet{futter2023sig} represent a trading strategy as a linear functional of
the signature and solve the resulting mean--variance problem in closed form,
with the variance computed exactly by shuffling against an expected signature;
Equation~\eqref{eq:mv} below is the static special case of their Theorem~3.1.
\citet{cuchiero2023moller} introduce linear path-functional portfolios in
stochastic portfolio theory, prove their universality, and observe that
mean--variance and log-wealth optimization over the class reduce to convex
quadratic programs. Shuffle-based moment computations against expected
signatures are likewise standard in signature pricing and execution
\citep{kalsi2020,lyons2019nonparametric,cuchiero2023signature}, and the
group-likeness characterization in Corollary~\ref{cor:psd} is the
expected-signature theory of \citet{lyonsni2015} and \citet{chevyrev2016}.
The proposition is stated here as shared infrastructure, not a contribution.
What this paper adds downstream of it is the reading of $D$ as the
\emph{geometricity defect} of \citet{noguer2026paths} --- so that risk is
literally the failure of the mean path to be a path --- and the estimation
study of Section~\ref{sec:cost}, which neither of the works above carries out.
\end{remark}

\begin{corollary}[\textsc{Proved}]
\label{cor:psd}
$D$ is symmetric positive semidefinite, and $D\equiv 0$ if and only if
$\E[\Sig(X)]$ is group-like, which holds if and only if the reduced path is
deterministic.
\end{corollary}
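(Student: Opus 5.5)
The argument has three parts: positive semidefiniteness, the equivalence $D\equiv0 \Leftrightarrow \E[\Sig]$ group-like, and the equivalence with a deterministic reduced path. Throughout we work under the integrability hypothesis of Proposition~\ref{prop:defect}, taken for all truncation levels.

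\textbf{Symmetry and positive semidefiniteness.} Symmetry is immediate, since the shuffle product is commutative: $u\shuf v=v\shuf u$. For semidefiniteness, take any finitely supported $\ell$ in the truncated tensor algebra and apply Proposition~\ref{prop:defect} bilinearly:
\[
\ell^{\top}D\ell=\Var\bigl(\ip{\ell}{\Sig}\bigr)\ge 0 .
\]
This gives the first claim on every truncation level.

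\textbf{$D\equiv0$ if and only if $\E[\Sig]$ is group-like.} We use Remark~\ref{rem:hopf}. There $D(u,v)=\ip{u\otimes v}{\mathbf D}$ with $\mathbf D=\Delta\bar\Sig-\bar\Sig\otimes\bar\Sig$. Hence $D\equiv0$ on all pairs of words is exactly the identity $\Delta\bar\Sig=\bar\Sig\otimes\bar\Sig$. By duality this is the character property $\ip{u\shuf v}{\bar\Sig}=\ip{u}{\bar\Sig}\ip{v}{\bar\Sig}$, together with $\ip{\emptyset}{\bar\Sig}=1$, and that is the definition of group-like. This step is definitional.

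\textbf{Group-like if and only if the reduced path is deterministic.} The easy direction: if the path is deterministic up to tree-like equivalence, its signature is a.s.\ constant. Then $\bar\Sig=\Sig$ is the signature of a geometric path, which is group-like.

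For the converse, assume $D\equiv0$. Then every coordinate $\ip{w}{\Sig}$ has zero variance, since it equals $D(w,w)$ by Proposition~\ref{prop:defect}. So each signature coordinate is a.s.\ equal to its mean. Words are countable, so a single null set serves for all of them, and $\Sig(X)=\bar\Sig$ almost surely. By the uniqueness theorem for signatures (Hambly--Lyons for bounded variation, Boedihardjo--Geng--Lyons--Yang for geometric rough paths), the path is then a.s.\ determined up to tree-like equivalence. That is precisely ``the reduced path is deterministic''.

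\textbf{Main obstacle.} The hard step is this last one: the appeal to signature uniqueness for the class of paths allowed (finite $p$-variation, with jumps under the Marcus lift). One also has to make precise what ``reduced'' means, namely quotienting by tree-like equivalence and, for jumps, by the Marcus interpolation. We would handle jumps by passing to the continuous Marcus interpolation of the path, where the uniqueness theorem applies. We would cite \citet{lyonsni2015} and \citet{chevyrev2016} for the moment-determinacy setting, rather than reprove it.
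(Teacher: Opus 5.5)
Your proof is correct, and its first two steps are what the paper itself relies on. Semidefiniteness is the covariance reading of Proposition~\ref{prop:defect}, and the equivalence of $D\equiv0$ with group-likeness is the ``definitional'' observation of Remark~\ref{rem:hopf}. (You can skip the coproduct altogether: $D(u,v)=0$ for all words is verbatim the shuffle-character identity $\ip{u\shuf v}{\bar\Sig}=\ip{u}{\bar\Sig}\ip{v}{\bar\Sig}$. If you do pass through $\Delta$, note that the coproduct adjoint to $\shuf$ is the unshuffle coproduct, with letters primitive. Deconcatenation is adjoint to concatenation, and read with it $\Delta\bar\Sig=\bar\Sig\otimes\bar\Sig$ would say $\ip{uv}{\bar\Sig}=\ip{u}{\bar\Sig}\ip{v}{\bar\Sig}$, which fails for every group-like element other than the unit.)

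The genuine difference is the last equivalence. The paper writes no argument for it and attributes it to the expected-signature theory of \citet{lyonsni2015} and \citet{chevyrev2016}, i.e.\ to the fact that, under a growth condition, the expected signature determines the law of the signature. You instead read the diagonal, $D(w,w)=\Var\ip{w}{\Sig}=0$ for every word, conclude $\Sig(X)=\bar\Sig$ almost surely, and then need only \emph{pathwise} uniqueness (Hambly--Lyons, Boedihardjo--Geng--Lyons--Yang, after Marcus interpolation of the jumps). Your route is more elementary and assumes less. It needs only second moments at every level and no radius-of-convergence condition on $\bar\Sig$. It also never has to exhibit the group-like series $\bar\Sig$ as the signature of an actual path, which a group-like series need not be. The law-determinacy route buys only the placement of this degenerate case inside a general theorem.

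For the same reason, your closing plan to cite Lyons--Ni and Chevyrev--Lyons ``for moment determinacy'' is superfluous. Your argument never solves a moment problem, and the citation it actually requires is the pathwise uniqueness theorem.

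Finally, your insistence that $D$ vanish on words of every length is essential, not cosmetic. At a fixed truncation $m$, $D\equiv0$ forces only $\Sig^{\le m}$ to be deterministic. A Brownian bridge has vanishing level-one block and expected level-$\le2$ signature equal to the unit, yet it is random. So the third equivalence holds only in the untruncated reading you adopted.
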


So an expected signature is a legitimate path exactly when there is no risk to
price. The useful consequence is that the mean and the risk of the entire
strategy space are two linear readings of one tensor: with
$\mu_{w}=\ip{w}{\E[\Sig]}$ and risk aversion $\gamma$,
\begin{equation}
\ell^{*} = \frac{1}{\gamma}D^{-1}\mu,
\qquad
\CE = \frac{1}{2\gamma}\,\mu^{\top}D^{-1}\mu .
\label{eq:mv}
\end{equation}

\subsection{Calibration and check}
\label{sec:check}

Take $d=2$, $T=1$, $X_{t}=bt+\sigma W_{t}$ with $b=(0.06,0.02)^{\top}$ and
$\Sigma=\begin{psmallmatrix}0.040&0.012\\0.012&0.090\end{psmallmatrix}$, so
that $\E[\Sig_{T}]=\exp(T(b+\tfrac12\Sigma))$ in closed form. Against
$200{,}000$ simulated paths the closed-form coordinates
$\E[\Sig^{1}]=0.060000$, $\E[\Sig^{12}]=0.006600$, $\E[\Sig^{1122}]=0.000480$
return $0.060126$, $0.006738$, $0.000475$. The defect form computed from
$\E[\Sig]$ alone reproduces the sample covariance of the six level-$\le2$
payoffs to a maximum relative deviation of $0.0036$; its eigenvalues are all
strictly positive with condition number $134.3$; on a deterministic path it
returns $2.78\times10^{-17}$, confirming Corollary~\ref{cor:psd}. The pathwise
shuffle identity holds to $2.0\times10^{-15}$ and Chen concatenation to
$6.7\times10^{-16}$.

At $\gamma=3$ the optimizer \eqref{eq:mv} gives
$\ell^{*}=(0.0448,\,0.0007,\,7.9375,\,-1.1689,\,-1.1689,\,3.8578)$ on the basis
$((1),(2),(11),(12),(21),(22))$, with predicted mean and variance $0.334685$
and $0.111562$ against realized $0.333902$ and $0.111666$. The weights on
$(12)$ and $(21)$ are numerically identical, so the antisymmetric --- Lévy
area --- component of the optimal control is \emph{exactly zero}: under a
driver whose area has zero mean the optimizer declines to trade direction and
holds only the symmetric, variance-like combination. Section~\ref{sec:area} is
that observation with the mean put back in. The largest weight sits on $(11)$.
Under a geometric lift the shuffle relation makes that word exactly
$\tfrac12(X^{1}_{T}-X^{1}_{0})^{2}$, the squared terminal log return, whose
mean is dominated by the variance: the gain comes from convexity, not
forecasting. It is not the variance swap $[X^{1},X^{1}]_{T}$, which by
Proposition~\ref{prop:complement} enters the investable universe only under
the forward lift --- a distinction Section~\ref{sec:cost} shows to be the
whole content of the level-two premium.

\section{The Hilbert geometry of the strategy space}
\label{sec:hilbert}

The finite truncation of Section~\ref{sec:defect} sits inside an
infinite-dimensional problem, and making that explicit settles a question the
finite statement quietly assumes: when does the optimizer \eqref{eq:mv} exist
at all? Fix summable weights $\omega_{w}>0$ and let $\mathcal H_{\omega}$ be
the Hilbert space of coefficient sequences $\ell=(\ell_{w})_{w}$ with
$\|\ell\|^{2}=\sum_{w}\ell_{w}^{2}\omega_{w}<\infty$, so that
$\Phi(X)=(\ip{w}{\Sig(X)})_{w}$ is the signature feature map and
$G_{\ell}=\langle\ell,\Phi(X)\rangle_{\mathcal H_{\omega}}$. Under moment
bounds on $\E\|\Sig\|^{2}$ the covariance operator
$\Gamma=\E[(\Phi-\E\Phi)\otimes(\Phi-\E\Phi)]$ is trace-class, self-adjoint
and positive; the defect form is its coordinate expression, and the mean
signature $\mu=\E\Phi$ is the mean embedding of the path law in the sense of
kernel mean embeddings \citep{muandet2017}, with the signature kernel
\citep{kiraly2019,salvi2021} as the reproducing kernel.

\begin{theorem}[Existence and value; \textsc{Proved}]
\label{thm:fredholm}
Let $\Gamma$ be a positive self-adjoint trace-class operator on a separable
Hilbert space and $\gamma>0$. Then
\[
\sup_{\ell}\ \langle\ell,\mu\rangle-\tfrac{\gamma}{2}\langle\ell,\Gamma\ell\rangle
\;=\;
\begin{cases}
\dfrac{1}{2\gamma}\,\|\Gamma^{\dagger/2}\mu\|^{2} & \text{if }
\mu\in\operatorname{Range}(\Gamma^{1/2}),\\[1ex]
+\infty & \text{otherwise,}
\end{cases}
\]
and when the supremum is finite it is attained if and only if
$\mu\in\operatorname{Range}(\Gamma)$, with maximizer
$\ell^{*}=\gamma^{-1}\Gamma^{\dagger}\mu$.
\end{theorem}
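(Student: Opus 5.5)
The plan is to diagonalize $\Gamma$ and reduce the problem to a sum of independent one-dimensional quadratic problems. Since $\Gamma$ is compact, self-adjoint and positive on a separable Hilbert space, the spectral theorem gives an orthonormal basis $(e_k)$ of $\ker(\Gamma)^{\perp}$ with $\Gamma e_k=\lambda_k e_k$, $\lambda_k>0$. Write $\mu=\mu_0+\sum_k \mu_k e_k$ with $\mu_0\in\ker\Gamma$ and $\mu_k=\langle\mu,e_k\rangle$, and similarly $\ell=\ell_0+\sum_k\ell_k e_k$. The objective then splits as
\[
J(\ell)=\langle \ell_0,\mu_0\rangle+\sum_k\Bigl(\ell_k\mu_k-\tfrac{\gamma}{2}\lambda_k\ell_k^{2}\Bigr).
\]
Recall the two range characterizations. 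First, $\mu\in\operatorname{Range}(\Gamma^{1/2})$ iff $\mu_0=0$ and $\sum_k\mu_k^2/\lambda_k<\infty$. Second, $\mu\in\operatorname{Range}(\Gamma)$ iff $\mu_0=0$ and $\sum_k\mu_k^2/\lambda_k^2<\infty$. In either case $\|\Gamma^{\dagger/2}\mu\|^2=\sum_k\mu_k^2/\lambda_k$.

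For the upper bound, if $\mu_0\neq0$ take $\ell=t\mu_0$. Then $J=t\|\mu_0\|^2\to\infty$. If $\mu_0=0$, each summand is bounded by its scalar maximum $\mu_k^2/(2\gamma\lambda_k)$, attained at $\ell_k=\mu_k/(\gamma\lambda_k)$. Hence $J(\ell)\le\frac1{2\gamma}\sum_k\mu_k^2/\lambda_k$, which is the claimed value whenever that series converges.

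For the lower bound and the divergent case, use finite truncations. Set $\ell^{(N)}=\sum_{k\le N}\frac{\mu_k}{\gamma\lambda_k}e_k$, which lies in the space because it is a finite sum. Then $J(\ell^{(N)})=\frac1{2\gamma}\sum_{k\le N}\mu_k^2/\lambda_k$. This converges to the claimed value when the series converges, so the supremum equals it. When the series diverges it tends to $+\infty$, so $\mu\notin\operatorname{Range}(\Gamma^{1/2})$ gives $+\infty$. This completes the value formula.

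For attainment, assume the value is finite. A maximizer $\ell$ must achieve equality in every summand, which forces $\ell_k=\mu_k/(\gamma\lambda_k)$ for every $k$; the component $\ell_0$ is irrelevant to $J$ and may be set to $0$. Such an $\ell$ lies in the Hilbert space iff $\sum_k\mu_k^2/\lambda_k^2<\infty$, i.e.\ iff $\mu\in\operatorname{Range}(\Gamma)$. In that case it equals $\gamma^{-1}\Gamma^{\dagger}\mu$. Conversely, if $\mu\in\operatorname{Range}(\Gamma^{1/2})\setminus\operatorname{Range}(\Gamma)$, no admissible $\ell$ achieves equality in all coordinates, so the supremum is not attained. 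The maximizer is unique modulo $\ker\Gamma$, and $\Gamma^\dagger\mu$ is its minimal-norm representative.

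The main point of care is bookkeeping rather than depth. One must check that the termwise upper bound is legitimate for infinite sums. This holds because $\sum_k\ell_k\mu_k$ converges absolutely by Cauchy--Schwarz for $\ell,\mu$ in the space, and $\sum_k\lambda_k\ell_k^2=\langle\ell,\Gamma\ell\rangle$ is finite. One must also make the range characterizations via spectral series, with the pseudo-inverse defined on its natural domain, precise. Trace-class is more than needed; compactness, to get a discrete spectral decomposition, suffices. The unbounded-$\Gamma^{\dagger}$ subtlety is exactly why attainment needs the stronger range condition.
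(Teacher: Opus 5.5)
Your proposal is correct and follows the paper's proof. Both dispose of $\mu_0\neq 0$ with $\ell=t\mu_0$, diagonalize $\Gamma$ on $(\ker\Gamma)^{\perp}$, separate the objective coordinatewise, and decide attainment by whether $\ell_k=\mu_k/(\gamma\lambda_k)$ is square-summable. Your finite-truncation lower bound and your equality-in-every-summand argument for non-attainment spell out two steps the paper leaves implicit.
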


\begin{proof}
If $\mu$ has a component $\mu_{0}\neq0$ in $\ker\Gamma$, then $\ell=t\mu_{0}$
makes the objective $t\|\mu_{0}\|^{2}$, unbounded. Otherwise diagonalize
$\Gamma=\sum_{k}\lambda_{k}\,e_{k}\otimes e_{k}$ with $\lambda_{k}>0$ on
$(\ker\Gamma)^{\perp}$; the objective separates across coordinates, each
one-dimensional supremum is $\mu_{k}^{2}/(2\gamma\lambda_{k})$, and the total
is finite exactly when $\sum_{k}\mu_{k}^{2}/\lambda_{k}<\infty$, i.e.\
$\mu\in\operatorname{Range}(\Gamma^{1/2})$. The coordinatewise maximizer is
$\ell_{k}=\mu_{k}/(\gamma\lambda_{k})$, which lies in $\mathcal H_{\omega}$
exactly when $\sum_{k}\mu_{k}^{2}/\lambda_{k}^{2}<\infty$, i.e.\
$\mu\in\operatorname{Range}(\Gamma)$.
\end{proof}

In the diagonalization of the proof, the condition
$\mu\in\operatorname{Range}(\Gamma^{1/2})$ reads
$\sum_{k}\langle\mu,e_{k}\rangle^{2}/\lambda_{k}<\infty$: it is the
Picard condition of linear inverse problems \citep{engl1996}, with the
frontier slope of Corollary~\ref{cor:frontier} as the regularized solution
norm. Portfolio choice on path space is, in this exact sense, an inverse
problem: the market supplies the operator, the mean embedding supplies the
data, and asymptotic arbitrage is what ill-posedness looks like in finance.

The gap between the two range conditions is not a technicality. If
$\mu\in\operatorname{Range}(\Gamma^{1/2})\setminus\operatorname{Range}(\Gamma)$,
the value is finite but approached only along a sequence of ever-larger
controls: the strategy class contains approximate optimizers and no optimizer.
And if $\mu\notin\operatorname{Range}(\Gamma^{1/2})$, the class contains
payoff directions whose Sharpe ratio is unbounded --- an asymptotic arbitrage
inside the model. The condition
$\mu\in\operatorname{Range}(\Gamma^{1/2})$ is therefore the path-space
statement that the market prices the strategy class coherently.

\begin{corollary}[Signature efficient frontier; \textsc{Proved}]
\label{cor:frontier}
For $c>0$, $\max\{\langle\ell,\mu\rangle:\langle\ell,\Gamma\ell\rangle\le c\}
=\sqrt{c}\,\|\Gamma^{\dagger/2}\mu\|$ whenever the right side is finite. The
efficient frontier in $(\text{risk},\text{mean})$ coordinates is the ray of
slope $\|\Gamma^{\dagger/2}\mu\|$: the maximal Sharpe ratio of the class is
the norm whose finiteness Theorem~\ref{thm:fredholm} characterizes.
\end{corollary}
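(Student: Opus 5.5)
The plan is to reduce the constrained problem to the same spectral decomposition used in the proof of Theorem~\ref{thm:fredholm}. First, settle the kernel. If $\mu$ has a nonzero component $\mu_{0}\in\ker\Gamma$, then $\|\Gamma^{\dagger/2}\mu\|$ is not finite in the sense of the theorem. Indeed, $\ell=t\mu_{0}$ has zero risk and mean $t\|\mu_{0}\|^{2}\to\infty$. The hypothesis that the right side be finite is therefore read as $\mu\in\operatorname{Range}(\Gamma^{1/2})\subseteq(\ker\Gamma)^{\perp}$. From then on, write $\Gamma=\sum_{k}\lambda_{k}e_{k}\otimes e_{k}$ with $\lambda_{k}>0$, set $\mu_{k}=\langle\mu,e_{k}\rangle$, and note $\|\Gamma^{\dagger/2}\mu\|^{2}=\sum_{k}\mu_{k}^{2}/\lambda_{k}<\infty$. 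Any kernel component of $\ell$ contributes neither mean nor risk, so it can be dropped.

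\emph{Upper bound.} For admissible $\ell$, apply Cauchy--Schwarz after weighting each coordinate by $\sqrt{\lambda_{k}}$:
\[
\langle\ell,\mu\rangle=\sum_{k}\bigl(\sqrt{\lambda_{k}}\,\ell_{k}\bigr)\bigl(\mu_{k}/\sqrt{\lambda_{k}}\bigr)
\le\langle\ell,\Gamma\ell\rangle^{1/2}\,\|\Gamma^{\dagger/2}\mu\|
\le\sqrt{c}\,\|\Gamma^{\dagger/2}\mu\| .
\]
Equivalently, this is $\langle\Gamma^{1/2}\ell,\Gamma^{\dagger/2}\mu\rangle$, which is legitimate because $\Gamma^{1/2}\Gamma^{\dagger/2}\mu=\mu$ on the range.

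\emph{Attainment or approximation.} Equality in Cauchy--Schwarz requires $\sqrt{\lambda_{k}}\,\ell_{k}\propto\mu_{k}/\sqrt{\lambda_{k}}$, that is, $\ell\propto\Gamma^{\dagger}\mu$. If $\mu\in\operatorname{Range}(\Gamma)$, take
\[
\ell=\sqrt{c}\,\Gamma^{\dagger}\mu/\|\Gamma^{\dagger/2}\mu\| .
\]
This gives risk exactly $c$ and mean exactly $\sqrt{c}\,\|\Gamma^{\dagger/2}\mu\|$. In general $\Gamma^{\dagger}\mu$ need not lie in the space. This is the main obstacle, and it is why the statement should really be read with a supremum, in line with the range-gap discussion after Theorem~\ref{thm:fredholm}.

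To handle it, I would truncate to the first $N$ eigendirections, $\ell^{(N)}=\sum_{k\le N}(\mu_{k}/\lambda_{k})e_{k}$, rescaled to have risk $c$. Its mean is then $\sqrt{c}\,\bigl(\sum_{k\le N}\mu_{k}^{2}/\lambda_{k}\bigr)^{1/2}$, which increases to the bound as $N\to\infty$. So the supremum equals $\sqrt{c}\,\|\Gamma^{\dagger/2}\mu\|$. It is attained whenever $\mu\in\operatorname{Range}(\Gamma)$, which covers every finite truncation.

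\emph{Frontier.} Since the value is $\sqrt{c}$ times a constant, the frontier in (standard deviation $\sqrt{c}$, mean) coordinates is the ray of slope $\|\Gamma^{\dagger/2}\mu\|$. This slope is the maximal ratio of mean to standard deviation, the Sharpe ratio, over the class. As a consistency check, the result can be cross-verified against Theorem~\ref{thm:fredholm} by Lagrangian duality. Maximizing $\sqrt{c}\,s-\tfrac{\gamma}{2}c$ over $c$ gives $s^{2}/(2\gamma)$, which matches the theorem's value.
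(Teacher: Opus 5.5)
Your argument is correct and is essentially the one the paper leaves implicit: the paper gives no written proof, and the corollary follows from the same eigen-decomposition of $\Gamma$ used in the proof of Theorem~\ref{thm:fredholm}, with Cauchy--Schwarz in the $\sqrt{\lambda_{k}}$-weighted coordinates giving the bound and your truncations $\sum_{k\le N}(\mu_{k}/\lambda_{k})e_{k}$ showing it is sharp. Your two readings of the statement are both needed and agree with the paper's own range-gap discussion after the theorem: ``finite'' must mean $\mu\in\operatorname{Range}(\Gamma^{1/2})$, since under the Moore--Penrose convention $\Gamma^{\dagger/2}$ annihilates a kernel component of $\mu$ and literal finiteness would not rule out an unbounded left side, and ``max'' must mean a supremum, attained exactly when $\mu\in\operatorname{Range}(\Gamma)$.
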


\subsection{The defect in log coordinates}

The slogan behind Corollary~\ref{cor:psd} --- risk is the failure of the mean
path to be a path --- admits an exact leading-order form. Group-like elements
are exponentials of Lie elements, and at level two the Lie part is
antisymmetric; so the symmetric level-two component of $\log\E[\Sig]$ is
precisely the lowest-order obstruction to group-likeness.

\begin{proposition}[Log-defect identity; \textsc{Proved}]
\label{prop:logdefect}
For any geometric lift with square-integrable level-two signature, the
symmetric part of the level-two component of $\log\E[\Sig(X)]$ equals
$\tfrac12\Cov(X_{T}-X_{0})$, which is one half the level-one block of the
defect form $D$.
\end{proposition}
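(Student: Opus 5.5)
Write $\bar\Sig=\E[\Sig(X)_{0,T}]=1+a_{1}+a_{2}+\cdots$ with $a_{k}\in(\R^{d})^{\otimes k}$. The plan is to expand the tensor logarithm to level two, symmetrize the level-two part, and then use the shuffle identity of Proposition~\ref{prop:defect} to replace the symmetric second moment of the signature by the second moment of the increment.

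First, truncating the series $\log(1+x)=x-\tfrac12x^{2}+\cdots$ at level two gives
\[
(\log\bar\Sig)_{1}=a_{1},\qquad(\log\bar\Sig)_{2}=a_{2}-\tfrac12\,a_{1}\otimes a_{1}.
\]
This only needs $\E\|\Sig^{\le 2}\|<\infty$. Since $a_{1}\otimes a_{1}$ is already symmetric, the symmetric part of the level-two log component is $\operatorname{Sym}(a_{2})-\tfrac12 a_{1}\otimes a_{1}$.

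Second, identify $\operatorname{Sym}(a_{2})$. For a geometric lift the shuffle identity holds pathwise, and at level one it reads $\ip{i}{\Sig}\ip{j}{\Sig}=\ip{i\shuf j}{\Sig}=\ip{ij}{\Sig}+\ip{ji}{\Sig}$. With $\ip{i}{\Sig}=X^{i}_{T}-X^{i}_{0}=:\Delta X^{i}$, this says the symmetric part of $\Sig^{(2)}$ is $\tfrac12\,\Delta X\otimes\Delta X$ pathwise. Taking expectations, which is legitimate because the level-two signature is square-integrable and hence integrable, gives $\operatorname{Sym}(a_{2})=\tfrac12\E[\Delta X\otimes\Delta X]$. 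Since $a_{1}=\E[\Delta X]$, the symmetric part of the log component is
\[
\tfrac12\bigl(\E[\Delta X\otimes\Delta X]-\E\Delta X\otimes\E\Delta X\bigr)=\tfrac12\Cov(X_{T}-X_{0}).
\]

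Third, match this with the defect form. By Proposition~\ref{prop:defect} restricted to single letters $u=i$, $v=j$, we have $D(i,j)=\ip{i\shuf j}{\bar\Sig}-\ip{i}{\bar\Sig}\ip{j}{\bar\Sig}=\Cov(\Delta X^{i},\Delta X^{j})$, so the level-one block of $D$ is exactly $\Cov(X_{T}-X_{0})$. This gives the factor one half.

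The only step carrying real content is the second. It is the sole place where geometricity is used, and the statement fails for the forward lift because of the bracket term $\tfrac12[X,X]_{T}$ (compare Corollary~\ref{cor:excess}). I would therefore state explicitly that the pathwise shuffle relation at level two is what is invoked. I would also check the integrability needed for both the logarithm and the covariance: square-integrability of level two gives finite second moments of $\Delta X$ through the level-one shuffle relation, so the covariance exists. Beyond that, the argument is a routine level-two expansion of the tensor logarithm.
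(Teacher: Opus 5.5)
Your proposal is correct and matches the paper's proof step for step: the pathwise level-two shuffle identity gives $\operatorname{Sym}\E[\Sig^{(2)}]=\tfrac12\E[\Delta X\otimes\Delta X]$, and the level-two truncation of the tensor logarithm subtracts $\tfrac12\E[\Delta X]\otimes\E[\Delta X]$. The identification $D((i),(j))=\E[\Sig^{ij}+\Sig^{ji}]-\E[\Sig^{i}]\E[\Sig^{j}]=\Cov(\Delta X)_{ij}$ then supplies the factor one half (the forward-lift failure you mention is more precisely the content of Theorem~\ref{thm:lift} and Proposition~\ref{prop:complement} than of Corollary~\ref{cor:excess}).
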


\begin{proof}
Pathwise, the shuffle identity at level two gives
$\Sig^{ij}+\Sig^{ji}=\Sig^{i}\Sig^{j}$, so
$\operatorname{sym}\E[\Sig^{(2)}]=\tfrac12\E[\Delta X\otimes\Delta X]$ with
$\Delta X=X_{T}-X_{0}$. The tensor logarithm at level two is
$\E[\Sig^{(2)}]-\tfrac12\,\E[\Delta X]\otimes\E[\Delta X]$, whose symmetric
part is $\tfrac12(\E[\Delta X\otimes\Delta X]-\E[\Delta X]\otimes\E[\Delta X])
=\tfrac12\Cov(\Delta X)$. The level-one block of $D$ is
$D((i),(j))=\E[\Sig^{ij}+\Sig^{ji}]-\E[\Sig^{i}]\E[\Sig^{j}]
=\Cov(\Delta X)_{ij}$.
\end{proof}

So in log coordinates the leading non-Lie component of the expected signature
\emph{is} the covariance of terminal increments, halved: risk is literally the
first thing one meets on the way from $\E[\Sig]$ back to the group. The
identity is distribution-free, and holds numerically on a two-point-mixture
driver to $1.1\times10^{-7}$ on objects of size $4.5\times10^{-2}$.

\begin{corollary}[Distance to geometricity in the log chart; \textsc{Proved}]
\label{cor:distance}
At truncation level two, the Lie algebra
$\mathfrak g_{2}=\{(a,A):a\in\R^{d},\,A\text{ antisymmetric}\}$ is a
linear subspace of log coordinates, and the Euclidean distance from
$\log\E[\Sig]$ to $\mathfrak g_{2}$ is exactly
$\tfrac12\|\Cov(X_{T}-X_{0})\|_{F}$. The distance from the expected
signature to the group-like variety, measured in the log chart, is half the
Frobenius norm of the covariance of terminal increments.
\end{corollary}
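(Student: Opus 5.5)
The plan is to work entirely in the level-two truncated log chart, where everything is finite-dimensional linear algebra. I identify the truncated tensor algebra beyond the unit with $\R^{d}\oplus(\R^{d}\otimes\R^{d})$ and equip it with the Euclidean (Frobenius) inner product. I write $\log\E[\Sig]=(a,B)$, with $a=\E[\Delta X]$ and $B=\E[\Sig^{(2)}]-\tfrac12\,a\otimes a$, exactly as in the proof of Proposition~\ref{prop:logdefect}.

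The first step is to confirm that $\mathfrak g_{2}$ is the free step-two nilpotent Lie algebra in these coordinates. Its level-one part is all of $\R^{d}$. Its level-two part is spanned by the brackets $[e_i,e_j]=e_i\otimes e_j-e_j\otimes e_i$, which is exactly the antisymmetric matrices. So $\mathfrak g_{2}=\R^{d}\oplus\mathfrak{so}(d)$ is a linear subspace, and at level two it is the image of the truncated exponential restricted to Lie elements, which is the group.

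The second step is the orthogonal decomposition. With respect to the Frobenius inner product, $\R^{d}\otimes\R^{d}=\mathrm{Sym}\oplus\mathrm{Skew}$ is an orthogonal direct sum, since $\operatorname{tr}(S^{\top}A)=0$ whenever $S$ is symmetric and $A$ antisymmetric. The orthogonal projection of $(a,B)$ onto $\mathfrak g_{2}$ is therefore $(a,\operatorname{skew}B)$. The residual is $(0,\operatorname{sym}B)$, so the distance is $\|\operatorname{sym}B\|_F$.

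The third step applies Proposition~\ref{prop:logdefect}: $\operatorname{sym}B=\tfrac12\Cov(X_T-X_0)$. Taking the Frobenius norm gives $\tfrac12\|\Cov(X_T-X_0)\|_F$.

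The final sentence of the corollary then needs one more observation: at level two, the group-like variety is exactly $\exp(\mathfrak g_{2})$, and the truncated $\log$ is a global bijection onto $\mathfrak g_{2}$. Hence the distance "measured in the log chart" is by definition the distance from $\log\E[\Sig]$ to $\mathfrak g_{2}$.

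The only point requiring care, and thus the main obstacle, is conventions rather than mathematics. The chosen norm on the log chart must make level one and level two orthogonal and use the Frobenius norm on matrices; otherwise the distance is only equivalent up to constants. I would state this inner product explicitly, and note that the level-one coordinate contributes nothing to the residual.
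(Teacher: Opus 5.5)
Your proposal is correct and matches the paper's proof: the orthogonal projection onto $\mathfrak g_{2}$ removes exactly the symmetric level-two block of $\log\E[\Sig]$, and Proposition~\ref{prop:logdefect} identifies that block as $\tfrac12\Cov(X_{T}-X_{0})$. You also spell out what the paper leaves implicit: the Frobenius orthogonality of symmetric and antisymmetric matrices, and the fact that the truncated logarithm maps the level-two group bijectively onto $\mathfrak g_{2}$, which is what justifies the final sentence of the statement.
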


\begin{proof}
Orthogonal projection onto $\mathfrak g_{2}$ removes exactly the symmetric
level-two block of the log; Proposition~\ref{prop:logdefect} identifies that
block as $\tfrac12\Cov(X_{T}-X_{0})$.
\end{proof}

Group-like elements form a variety, not a subspace, so ``distance to
geometricity'' needs a chart to be well defined; the log chart is the natural
one because it linearizes the variety, and in it the distance is not merely
bounded by the covariance --- it \emph{is} the covariance, halved. The
projection onto $\mathfrak g_{2}$ used in the proof is, at level two, the
first Eulerian idempotent of the tensor algebra
\citep{reutenauer1993}, so the corollary sits inside the canonical
Lie-theoretic decomposition rather than being an ad hoc matrix split; at
higher truncation levels the same idempotent defines the projection, and the
resulting distance formula is open.

\subsection{Log utility and the Kelly ray}

\begin{proposition}[Second-order Kelly control; \textsc{Proved}]
\label{prop:kelly}
Let terminal wealth be $1+G_{\ell}$. Then
\[
\E\log(1+G_{\ell})
=\langle\ell,\mu\rangle
-\tfrac12\big(\langle\ell,\Gamma\ell\rangle
+\langle\ell,\mu\rangle^{2}\big)+O(\|\ell\|^{3}),
\]
and the second-order log-optimal control is
\[
\ell_{K}=(\Gamma+\mu\otimes\mu)^{\dagger}\mu
=\frac{\Gamma^{\dagger}\mu}{1+\langle\mu,\Gamma^{\dagger}\mu\rangle}.
\]
The Kelly control lies on the same ray as every mean--variance optimizer ---
the efficient-frontier ray of Corollary~\ref{cor:frontier} --- shrunk by one
plus the squared maximal Sharpe ratio of the class.
\end{proposition}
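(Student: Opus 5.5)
The plan is to Taylor-expand the logarithm around $1$, re-express the resulting moments through $\mu$ and $\Gamma$, and then optimize the quadratic by a rank-one update.

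\textbf{Step 1: expansion.} Write $\log(1+x)=x-\tfrac12x^{2}+R(x)$ with $|R(x)|\le C|x|^{3}$ on $|x|\le\tfrac12$. Substitute $x=G_{\ell}=\langle\ell,\Phi(X)\rangle$ and take expectations.

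\textbf{Step 2: moments.} The first two moments follow from the definitions already in place:
\[
\E G_{\ell}=\langle\ell,\mu\rangle,
\qquad
\E G_{\ell}^{2}=\Var(G_{\ell})+(\E G_{\ell})^{2}=\langle\ell,\Gamma\ell\rangle+\langle\ell,\mu\rangle^{2}.
\]
The variance term is Proposition~\ref{prop:defect}, in its operator form through $\Gamma$. This gives the displayed quadratic.

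\textbf{Step 3: remainder.} To control the cubic term I will assume $\E\|\Phi\|^{3}<\infty$, so that $\E|G_{\ell}|^{3}\le\|\ell\|^{3}\E\|\Phi\|^{3}$. I will also treat the expansion as formal in the small-control regime, or truncate on the event $|G_{\ell}|\le\tfrac12$ and bound the complement by Markov's inequality.

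This step is the only delicate one. The logarithm is undefined where $1+G_{\ell}\le0$, and the paper's "$O(\|\ell\|^{3})$" is best read as a local statement under these moment conditions. I will state the moment conditions explicitly rather than try to remove them.

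\textbf{Step 4: stationarity.} The quadratic is $\langle\ell,\mu\rangle-\tfrac12\langle\ell,A\ell\rangle$ with $A=\Gamma+\mu\otimes\mu$, which is positive. Theorem~\ref{thm:fredholm} with $\gamma=1$ and operator $A$ gives the maximizer $\ell_{K}=A^{\dagger}\mu$. This requires $\mu\in\operatorname{Range}(A)$, which holds whenever $\mu\in\operatorname{Range}(\Gamma)$.

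\textbf{Step 5: Sherman--Morrison.} For the closed form, I will guess $\ell=c\,\Gamma^{\dagger}\mu$ and verify it directly. Since $\mu\in\operatorname{Range}(\Gamma)$, we have $\Gamma\Gamma^{\dagger}\mu=\mu$, so
\[
A\ell=c\mu+c\,\mu\,\langle\mu,\Gamma^{\dagger}\mu\rangle=c\,(1+s^{2})\,\mu,
\qquad s^{2}=\langle\mu,\Gamma^{\dagger}\mu\rangle.
\]
Choosing $c=1/(1+s^{2})$ gives $A\ell=\mu$.

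To confirm that this $\ell$ is the pseudoinverse solution and not merely some solution, note that $\ell\in\operatorname{Range}(\Gamma^{\dagger})=(\ker\Gamma)^{\perp}$. Also $\ker A=\ker\Gamma\cap\mu^{\perp}=\ker\Gamma$, because $\mu\perp\ker\Gamma$. Hence $\ell\perp\ker A$, and so $\ell=A^{\dagger}\mu$.

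\textbf{Step 6: interpretation.} By Corollary~\ref{cor:frontier}, $s^{2}=\|\Gamma^{\dagger/2}\mu\|^{2}$ is the squared maximal Sharpe ratio of the class. Every mean--variance optimizer is $\gamma^{-1}\Gamma^{\dagger}\mu$, so $\ell_{K}$ lies on the same ray, scaled by $1/(1+s^{2})$.
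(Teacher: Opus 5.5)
Your proposal is correct and follows the paper's own route: expand $\log(1+x)$ to second order, write $\E G_{\ell}^{2}=\langle\ell,\Gamma\ell\rangle+\langle\ell,\mu\rangle^{2}$, and invert the rank-one update by Sherman--Morrison. Your Steps~4--5 also make explicit what the paper's one-line appeal leaves implicit: the hypothesis $\mu\in\operatorname{Range}(\Gamma)$ is genuinely needed, since if $\mu$ had a component $\mu_{0}\in\ker\Gamma$ then $(\Gamma+\mu\otimes\mu)^{\dagger}\mu=\mu_{0}/\|\mu_{0}\|^{2}$, off the mean--variance ray, and $\Gamma^{\dagger}\mu/(1+\langle\mu,\Gamma^{\dagger}\mu\rangle)$ is indeed the minimum-norm solution.

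Drop the Markov-truncation option in Step~3. Markov controls the probability of $\{|G_{\ell}|>\tfrac12\}$ but not $\E[\log(1+G_{\ell})\mathbf{1}_{\{|G_{\ell}|>1/2\}}]$, which is $-\infty$ or undefined as soon as $\Pr[1+G_{\ell}\le0]>0$, so a third moment alone does not make the remainder rigorous. A bound $\|\Phi\|\le B$ a.s.\ (hypothesis (i) of Theorem~\ref{thm:consistency}) does, because then $|G_{\ell}|\le\tfrac12$ surely for $\|\ell\|\le 1/(2B)$; otherwise the expansion is formal, exactly as the paper's remark after the proof says.
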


\begin{proof}
$\log(1+x)=x-\tfrac12x^{2}+O(x^{3})$ and
$\E[G_{\ell}^{2}]=\langle\ell,\Gamma\ell\rangle
+\langle\ell,\mu\rangle^{2}$; the second display is Sherman--Morrison.
Dropping the $\langle\ell,\mu\rangle^{2}$ term would return the
$\gamma=1$ mean--variance control exactly; the correct expansion keeps the
direction and shrinks the scale.
\end{proof}

The expansion is formal in $\|\ell\|$: by Corollary~\ref{cor:ruin} the
argument $1+G_{\ell}$ need not be positive for every control under the
forward lift, so the statement is to be read on a neighbourhood of $\ell=0$,
or under a geometric lift where wealth is an exponential and the logarithm is
never evaluated off its domain.

So the Kelly criterion, Markowitz, and signature optimization coincide in
direction on path space and differ only in how far along the frontier ray they
sit; the universal-portfolio reading of the same ray is \citet{cover1991}, and
exact (rather than second-order) log-optimal signature portfolios are the
object approximated in \citet{cuchiero2023moller}.

\section{The lift is the execution convention}
\label{sec:lift}

\begin{theorem}[Lift gap at level two; \textsc{Proved}]
\label{thm:lift}
With $\Sig^{M}$ the Marcus (geometric) and $\Sig^{F}$ the forward
iterated-sums level-two signature of the same stream,
$\Sig^{M,ij}-\Sig^{F,ij} = \tfrac12[X^{i},X^{j}]_{T}$: the lifts differ only in
their symmetric block, by one half the quadratic covariation.
\end{theorem}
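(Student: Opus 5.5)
We work with a discretely observed stream $X_{t_0},\dots,X_{t_N}$ with increments $\Delta_k=X_{t_k}-X_{t_{k-1}}$; the continuous-time and jump statements then follow by refinement. The plan has three steps: identify both lifts at level two on a single increment, combine the increments with Chen's identity, and read off the symmetric and antisymmetric blocks.

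\textbf{Single increment.} The forward iterated-sums signature uses strictly ordered pairs of increments,
\[
\Sig^{F,ij}=\sum_{k<l}\Delta_k^{i}\Delta_l^{j}.
\]
The Marcus lift joins consecutive points by straight segments and takes the geometric signature of the resulting piecewise-linear path. On a single segment the level-two term is $\tfrac12\Delta_k^{i}\Delta_k^{j}$, because the segment is the tensor exponential $\exp(\Delta_k)$. The signature is therefore $\exp(\Delta_1)\otimes\cdots\otimes\exp(\Delta_N)$ by Chen's identity, which the paper checks numerically in \S\ref{sec:check}.

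\textbf{Expansion.} Expanding the product to level two gives
\[
\Sig^{M,ij}=\sum_{k<l}\Delta_k^{i}\Delta_l^{j}+\tfrac12\sum_k\Delta_k^{i}\Delta_k^{j}.
\]
The diagonal sum is exactly the discrete quadratic covariation $[X^{i},X^{j}]_T$ of the stream. It is the same object the iterated-sums framework uses, with jumps entering through their squared sizes. Subtracting the forward lift gives
\[
\Sig^{M,ij}-\Sig^{F,ij}=\tfrac12[X^{i},X^{j}]_T.
\]

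\textbf{Blocks.} The right-hand side is symmetric in $(i,j)$, so the antisymmetric parts of the two lifts coincide. This proves the claim that the lifts differ only in their symmetric block. As a consistency check, the shuffle relation $\Sig^{M,ij}+\Sig^{M,ji}=\Delta X^{i}\Delta X^{j}$ holds for the Marcus lift. The forward lift instead satisfies the quasi-shuffle relation
\[
\Sig^{F,ij}+\Sig^{F,ji}+\sum_k\Delta_k^{i}\Delta_k^{j}=\Delta X^{i}\Delta X^{j}.
\]
These two relations reproduce the same gap without computing any signature directly.

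\textbf{Main obstacle.} The delicate step is the passage from discrete streams to general paths with jumps and finite $p$-variation. For continuous semimartingales I would take mesh-to-zero limits. There the forward sums converge to the It\^o integral $\int X^{i}\,dX^{j}$, the piecewise-linear Marcus lift converges to the Stratonovich integral, and the diagonal sums converge to $[X^{i},X^{j}]$ in probability. The identity then reduces to the It\^o--Stratonovich correction.

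For jumps, the Marcus convention inserts a fictitious straight segment at each jump time. That segment contributes $\tfrac12\Delta X_s^{i}\Delta X_s^{j}$, which is exactly the jump part of $\tfrac12[X^{i},X^{j}]$. Combined with the continuous case, this completes the identity.

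If the setting is purely the discrete iterated-sums signature of a stream, the algebraic computation above is already the whole proof, and the limit argument is needed only for the continuous-time reading.
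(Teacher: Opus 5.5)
Your argument is correct and follows essentially the paper's route. Expanding $\exp(\Delta_1)\otimes\cdots\otimes\exp(\Delta_N)$ to level two is exactly the level-two case of the Hoffman exponential the paper cites for pure-jump streams, and your refinement step is the It\^o--Stratonovich correction it cites for continuous semimartingales, with the Marcus segment supplying the jump part. One minor overstatement: the shuffle and quasi-shuffle relations in your consistency check recover only the symmetric part of the gap, so the coincidence of the antisymmetric blocks rests on your direct expansion, not on those relations.
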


For continuous semimartingales this is the It\^o--Stratonovich correction; on
pure-jump streams it is Hoffman's exponential
\citep{hoffman2000,marcus1981}, the algebraic content of the Hopf square of
\citet{noguer2026paths}. Simulation confirms both: with $20{,}000$ paths of
$1000$ steps the mean level-two difference is
$\begin{psmallmatrix}0.020011&0.006002\\0.006002&0.044993\end{psmallmatrix}$
against $\tfrac12\Sigma T=
\begin{psmallmatrix}0.020&0.006\\0.006&0.045\end{psmallmatrix}$, maximum error
$10^{-5}$; on a pure-jump stream
$\Sig^{M}-\Sig^{F}=\tfrac12\sum\Delta X\,\Delta X^{\top}$ holds to
$1.7\times10^{-15}$.

\begin{corollary}[Excess growth is the geometricity defect; \textsc{Proved}]
\label{cor:excess}
For weights $\pi$ held fixed across a log-price move $u$,
\[
\underbrace{\pi^{\top}u}_{\text{level-one signature coordinate}}
-\underbrace{\log\!\big(1+\pi^{\top}(e^{u}-1)\big)}_{\text{one-period portfolio return}}
= -\tfrac12\Big(\textstyle\sum_{i}\pi_{i}u_{i}^{2}-(\pi^{\top}u)^{2}\Big)
+O(|u|^{3}),
\]
whose leading term is $-\gamma^{*}_{\pi}$ for the rank-one covariance
$uu^{\top}$. In the continuous limit
$\log V_{T}-\int_{0}^{T}\pi^{\top}d\log X=\int_{0}^{T}\gamma^{*}_{\pi}dt$ with
$\gamma^{*}_{\pi}=\tfrac12(\pi^{\top}\mathrm{diag}\,\Sigma-\pi^{\top}\Sigma\pi)$.
\end{corollary}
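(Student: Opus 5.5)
The plan has two parts: a Taylor expansion for the one-period identity, and a summation of that identity over a partition, identified through Theorem~\ref{thm:lift}, for the continuous limit. We assume $\sum_i\pi_i=1$, as Fernholz's $\gamma^{*}$ requires.

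\emph{One-period identity.} Expand $e^{u_i}-1=u_i+\tfrac12u_i^{2}+O(|u|^{3})$. This gives
\[
y:=\pi^{\top}(e^{u}-1)=\pi^{\top}u+\tfrac12\textstyle\sum_i\pi_iu_i^{2}+O(|u|^{3}).
\]
Next use $\log(1+y)=y-\tfrac12y^{2}+O(y^{3})$. Since $y=O(|u|)$, we have $y^{2}=(\pi^{\top}u)^{2}+O(|u|^{3})$, and therefore
\[
\log(1+y)=\pi^{\top}u+\tfrac12\Big(\textstyle\sum_i\pi_iu_i^{2}-(\pi^{\top}u)^{2}\Big)+O(|u|^{3}).
\]
Subtracting this from $\pi^{\top}u$ gives the displayed identity. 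Replacing $\Sigma$ by $uu^{\top}$ in $\gamma^{*}_{\pi}=\tfrac12(\pi^{\top}\mathrm{diag}\,\Sigma-\pi^{\top}\Sigma\pi)$ gives $\tfrac12(\sum_i\pi_iu_i^{2}-(\pi^{\top}u)^{2})$, so the leading term is $-\gamma^{*}_{\pi}$ for the rank-one covariance. This step is routine.

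\emph{Continuous limit.} Take a partition $0=t_0<\dots<t_N=T$ with rebalancing to $\pi$ at each $t_k$, and set $u_k=\Delta\log X_{t_k}$. Then $\log V_T=\sum_k\log(1+\pi^{\top}(e^{u_k}-1))$. Applying the one-period identity to each step and summing yields
\[
\log V_T-\textstyle\sum_k\pi^{\top}u_k=\textstyle\sum_k\tfrac12\big(\sum_i\pi_iu_{k,i}^{2}-(\pi^{\top}u_k)^{2}\big)+\sum_kO(|u_k|^{3}).
\]
The first sum on the right is $\tfrac12\sum_{i,j}(\pi_i\delta_{ij}-\pi_i\pi_j)\sum_k u_{k,i}u_{k,j}$. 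This is a fixed linear contraction of the discrete quadratic covariation. By Theorem~\ref{thm:lift}, that discrete quadratic covariation is exactly twice the Marcus--forward gap $\Sig^{M,ij}-\Sig^{F,ij}$ of the level-two signature, and this is the sense in which excess growth is the geometricity defect. As the mesh goes to zero, $\sum_ku_{k,i}u_{k,j}\to[\log X^{i},\log X^{j}]_T=\Sigma_{ij}T$ for the Brownian driver with drift of Section~\ref{sec:check}. The right-hand side therefore converges to $\int_0^T\gamma^{*}_{\pi}\,dt$. On the left, $\sum_k\pi^{\top}u_k=\int_0^T\pi^{\top}d\log X$ exactly, because $\pi$ is constant.

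\emph{Main obstacle.} The difficulty is the cubic remainder. Its coefficient is uniform only on a set where $|u_k|$ is small, since the expansion of $\log(1+y)$ needs $|y|<1$. I would first restrict to continuous semimartingales and localize. Since $\sup_k|u_k|\to0$ almost surely, the expansions hold uniformly for fine meshes. We then bound $\sum_k|u_k|^{3}\le\sup_k|u_k|\cdot\sum_k|u_k|^{2}$, which tends to zero because the quadratic variation is finite. Constant weights make the stochastic-integral term exact, so this remainder is the only limit that needs care. For paths with jumps the statement must stay one-period and jump by jump: the cubic terms do not vanish there, which is consistent with the solvency distinction drawn later in Corollary~\ref{cor:ruin}.
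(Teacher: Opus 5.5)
Your one-period argument is the paper's. Composing the expansions of $e^{u}-1$ and $\log(1+y)$ is the same second-order Taylor computation the paper does for $\varphi(u)=\pi^{\top}u-\log(1+\pi^{\top}(e^{u}-1))$ about $u=0$. The normalization $\sum_i\pi_i=1$ you impose is never used, so it can be dropped.

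The continuous part takes a genuinely different route, and it is also correct. The paper never takes a limit. It defines $V$ by the self-financing equation $dV/V=\sum_i\pi_i\,dS^i/S^i$ with $S^i=e^{X^i}$, applies It\^o's formula to $S^i$ and to $\log V$, and reads off $d\log V=\pi^{\top}dX+\tfrac12\langle d[X,X],\operatorname{diag}(\pi)-\pi\pi^{\top}\rangle$ exactly and almost surely, with no remainder to control. You instead sum the one-period identity over a rebalancing grid and kill the cubic remainder using $\sup_k|u_k|\to0$ and boundedness in probability of realized variance.

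Two points to tighten in your version. The convergence is in probability, and you should say so. The Brownian specialization is needed only at the last step: any covariation $d[X^i,X^j]_t=\Sigma_{ij,t}\,dt$ gives $\int_0^T\gamma^{*}_{\pi,t}\,dt$.

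Your route buys two things the paper's proof does not:
\begin{itemize}
\item It shows the continuous identity really is the mesh limit of the discrete one. That is what ``in the continuous limit'' claims, whereas the paper leaves the two displays as separate facts.
\item It makes the Theorem~\ref{thm:lift} reading exact at every mesh, since the realized covariation of the discretized stream is exactly twice its Marcus--forward gap.
\end{itemize}

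It costs two things in return:
\begin{itemize}
\item You have to do the remainder bookkeeping.
\item Your $V_T$ is a limit of discretely executed wealths, not the paper's SDE-defined $V$. Identifying the two needs either the standard convergence of discrete to continuous rebalancing or the paper's It\^o computation itself.
\end{itemize}
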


\begin{proof}
For the discrete statement,
$\varphi(u)=\pi^{\top}u-\log(1+\pi^{\top}(e^{u}-1))$ satisfies $\varphi(0)=0$,
$\nabla\varphi(0)=0$ and
$\varphi(\varepsilon v)=-\tfrac{\varepsilon^{2}}{2}(\sum_{i}\pi_{i}v_{i}^{2}
-(\pi^{\top}v)^{2})+O(\varepsilon^{3})$; the bracket is $2\gamma^{*}_{\pi}$ for
$vv^{\top}$.

For the continuous statement, let $S^{i}=e^{X^{i}}$ with $X$ a continuous
semimartingale, and let $V$ be the self-financing wealth of the constant-weight
portfolio, $dV_{t}/V_{t}=\sum_{i}\pi_{i}\,dS^{i}_{t}/S^{i}_{t}$. It\^o's
formula gives $dS^{i}/S^{i}=dX^{i}+\tfrac12\,d[X^{i},X^{i}]$ and
$d\log V = dV/V - \tfrac12\,d[V]/V^{2}$, whence
\[
d\log V_{t}
=\pi^{\top}dX_{t}
+\tfrac12\Big(\sum_{i}\pi_{i}\,d[X^{i},X^{i}]_{t}
-\sum_{i,j}\pi_{i}\pi_{j}\,d[X^{i},X^{j}]_{t}\Big)
=\pi^{\top}dX_{t}+\gamma^{*}_{\pi}\,dt .
\]
The first term integrates to $\pi^{\top}(X_{T}-X_{0})$, the level-one
signature coordinate, which is identical under both lifts. The second term is
exactly the contraction of the level-two lift gap of Theorem~\ref{thm:lift}
with the tensor $\operatorname{diag}(\pi)-\pi\pi^{\top}$:
\[
\int_{0}^{T}\gamma^{*}_{\pi}\,dt
=\big\langle \tfrac12[X,X]_{T},\;
\operatorname{diag}(\pi)-\pi\pi^{\top}\big\rangle
=\big\langle \Sig^{M}-\Sig^{F},\;
\operatorname{diag}(\pi)-\pi\pi^{\top}\big\rangle_{\mathrm{lvl\,2}} .
\]
So the entire difference between logarithmic wealth and the level-one signature
flow is the lift gap read through one fixed portfolio tensor; the map
$\pi\mapsto\operatorname{diag}(\pi)-\pi\pi^{\top}$ is how the bracket
obstruction enters the one-dimensional wealth process.
\end{proof}

Both displays put $\gamma^{*}$ on the same side. The one-period portfolio
return exceeds the level-one coordinate by $\gamma^{*}$, and continuously
rebalanced wealth exceeds $\int\pi^{\top}d\log X$ by $\int\gamma^{*}dt$: what
carries no rebalancing premium is the level-one signature flow itself, not
either wealth process. The elementary check is $\pi=(\tfrac12,\tfrac12)$ and
$u=(a,-a)$, where the level-one coordinate is zero and the executed return is
$\log\cosh a\approx a^{2}/2=\gamma^{*}$.

The continuous identity is classical \citep{fernholz2002,fernholzkaratzas2009};
the placement is the point. Excess growth is not a rebalancing artifact that
stochastic portfolio theory happens to isolate --- it \emph{is} the
geometricity defect of the portfolio map, the same bracket obstruction as in
Theorem~\ref{thm:lift}. Signatures have been brought into stochastic portfolio
theory before, as a universal parameterization of path-dependent portfolio
\emph{weights} \citep{cuchiero2023moller}; the identification made here runs in
the other direction, reading SPT's central scalar as a statement about the lift
of the signature itself, and to our knowledge it has not been stated.

\begin{corollary}[Direction is lift-invariant; \textsc{Proved}]
\label{cor:area}
The antisymmetric part of the level-two signature is identical under both
lifts, pathwise; measured discrepancy $6.5\times10^{-16}$.
\end{corollary}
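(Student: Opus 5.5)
The plan is to read Corollary~\ref{cor:area} directly off Theorem~\ref{thm:lift}. Antisymmetrization $A\mapsto\tfrac12(A-A^{\top})$ is linear on level-two tensors, so it suffices to show that the level-two difference $\Sig^{M,(2)}-\Sig^{F,(2)}$ is a symmetric matrix. Theorem~\ref{thm:lift} identifies that difference as $\tfrac12[X,X]_{T}$. The quadratic covariation is symmetric, $[X^{i},X^{j}]=[X^{j},X^{i}]$, by the polarization definition for semimartingales. On a pure-jump or discrete stream it is also symmetric term by term as $\sum_{s}\Delta X^{i}_{s}\Delta X^{j}_{s}$. The antisymmetric parts $\Sig^{M,ij}-\Sig^{M,ji}$ and $\Sig^{F,ij}-\Sig^{F,ji}$ therefore coincide identically, path by path, with no expectation taken.

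For completeness, and since the theorem is stated for the level-two block of a general stream, I would also verify the discrete case by direct computation. This guards against the invariance being an artifact of the continuous limit. For increments $\delta_{k}=X_{t_{k}}-X_{t_{k-1}}$, the forward iterated sum is $\Sig^{F,ij}=\sum_{k<l}\delta^{i}_{k}\delta^{j}_{l}$. The Marcus/Chen concatenation of straight-line (exponential) segments gives $\Sig^{M,ij}=\sum_{k<l}\delta^{i}_{k}\delta^{j}_{l}+\tfrac12\sum_{k}\delta^{i}_{k}\delta^{j}_{k}$, because each segment $\exp(\delta_{k})$ contributes $\tfrac12\delta_{k}\otimes\delta_{k}$ at level two. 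The extra term is symmetric in $(i,j)$, so the two antisymmetric parts both equal $\tfrac12\sum_{k<l}(\delta^{i}_{k}\delta^{j}_{l}-\delta^{j}_{k}\delta^{i}_{l})$. This is the discrete Lévy area, and it is common to both lifts. The continuous statement then follows by passing to the limit along refining partitions, where both sides converge to the Stratonovich and Itô area respectively and the difference stays symmetric.

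The main obstacle is not algebraic, since the statement is one line given Theorem~\ref{thm:lift}. The care lies in making ``pathwise'' honest. Both lifts must be built on the same partition or stream, so that the identity holds for each realization and not merely in law or in mean. In the continuous case the Itô and Stratonovich integrals are defined only up to null sets, so the claim is almost sure, with an exact version on any discretization. The reported discrepancy $6.5\times10^{-16}$ is then just floating-point roundoff in the discrete identity above. I would present the discrete computation as the proof proper and treat the semimartingale case as its limit.
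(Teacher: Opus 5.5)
Your proposal is correct and follows the paper's route: the paper gives no separate proof and reads the corollary directly off Theorem~\ref{thm:lift}, since the lift gap $\tfrac12[X^{i},X^{j}]_{T}$ is symmetric in $(i,j)$ and so drops out of the antisymmetric part path by path. Your discrete Chen computation, in which each segment contributes the extra term $\tfrac12\sum_{k}\delta_{k}\otimes\delta_{k}$, is a correct self-contained check of the theorem on pure-jump streams, and it is what the reported $6.5\times10^{-16}$ measures, but the corollary needs nothing beyond the theorem.
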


Arguments about It\^o versus Stratonovich versus Marcus can change the sign and
size of a variance-trading or rebalancing P\&L. They cannot touch a lead--lag
estimate.

\begin{corollary}[Ruin is a convention; \textsc{Proved}]
\label{cor:ruin}
Marcus-lifted wealth $\exp(\pi^{\top}u)$ is strictly positive for every bounded
control and finite move; forward-lifted wealth $1+\pi^{\top}(e^{u}-1)$ is
non-positive whenever $\pi^{\top}(e^{u}-1)\le-1$, attainable at one jump for
levered $\pi$.
\end{corollary}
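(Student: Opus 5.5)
The plan is to argue directly from the two explicit one-period wealth formulas in the statement. Both lifts are read over a single move $u\in\R^{d}$ of the log-price with the control $\pi$ held fixed across it, as in Corollary~\ref{cor:excess}. Under the Marcus lift the portfolio map is the geometric exponential of the level-one signature increment, so one-step wealth is $\exp(\pi^{\top}u)$. Under the forward lift the position is fixed at the left endpoint and applied to arithmetic returns $e^{u_{i}}-1$, giving $1+\pi^{\top}(e^{u}-1)$. The first step is to record these two formulas as the definition of lifted wealth over a jump. For the Marcus case this uses the fact, already invoked after Theorem~\ref{thm:lift}, that the Marcus lift of a pure jump is the tensor exponential (Hoffman's exponential) of the increment. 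Its image under the scalar portfolio character is $\exp(\pi^{\top}u)$.

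The positivity half is then immediate. For a bounded control $\pi$ and a finite move $u$, the exponent $\pi^{\top}u$ is a finite real number, and the exponential of a finite real number is strictly positive. Over a path with finitely many jumps and continuous stretches, Marcus wealth is a product of such factors with a continuous exponential part, by Chen concatenation. It therefore remains strictly positive, and I would state this extension in one line.

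For the forward half, non-positivity is equivalent to $\pi^{\top}(e^{u}-1)\le-1$ by rearranging. To show this is attainable at a single jump with levered $\pi$, I would exhibit an explicit example. Take $d=1$ and $\pi>1$, and choose $u\le\log(1-1/\pi)$, which is finite because $1-1/\pi\in(0,1)$. Then $e^{u}-1\le-1/\pi$, so $\pi(e^{u}-1)\le-1$. For long-only weights summing to one with $\pi_{i}\in[0,1]$, we have $\pi^{\top}(e^{u}-1)>-1$ for every finite $u$. Leverage, or a short position with a large up-move, is therefore exactly what is needed. The short case is handled symmetrically: take $\pi<0$ and $u\ge\log(1-1/\pi)$.

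The only real obstacle is interpretive rather than computational. The argument must justify that these two scalar expressions really are the wealth processes induced by the respective lifts, and are not simply chosen to make the claim true. I would settle this by matching the forward expression with the self-financing definition $dV/V=\sum_{i}\pi_{i}\,dS^{i}/S^{i}$, discretized at the left endpoint as in the proof of Corollary~\ref{cor:excess}. The Marcus expression is then matched with the geometric (level-one exponential) flow. The gap between the two is precisely $-\gamma^{*}_{\pi}$ at leading order, by that corollary, so the two formulas are consistent with the lift-gap identification already established.
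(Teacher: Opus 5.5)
Your proposal is correct and follows the paper's route. The paper treats the corollary as immediate from the two explicit one-period wealth formulas: the exponential is strictly positive, and the affine forward wealth is non-positive exactly when $\pi^{\top}(e^{u}-1)\le-1$, which a levered position reaches at one finite jump; it reads the contrast as continuous rebalancing through the crash versus execution at the jump, and your witness $u\le\log(1-1/\pi)$ for $\pi>1$, the short-side case, and the consistency check against Corollary~\ref{cor:excess} make explicit what the paper leaves to inspection.
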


A levered portfolio rebalanced continuously through a crash cannot go bankrupt
at a jump, because it is always selling into the fall; executed at the jump it
can. Any risk system built on a geometric lift has silently assumed away the
event it most needs to measure.

\section{The price of path complexity}
\label{sec:cost}

Everything so far assumes $\E[\Sig]$ is known. It is not. Because
Proposition~\ref{prop:defect} makes $D$ a covariance of $p=d+d^{2}$
payoffs, the plug-in policy $\hat\ell = (\gamma\hat D)^{-1}\hat\mu$ inherits
the pathology that \citet{michaud1989} and \citet{demiguel2009} document for
mean--variance in the static case --- with the difference that $p$ now grows
quadratically in $d$. The experiments below estimate $\hat\mu$ and $\hat D$
from $M$ independent paths, form $\hat\ell$, and evaluate its \emph{true}
certainty equivalent $\CE(\hat\ell)=\mu^{\top}\hat\ell
-\tfrac{\gamma}{2}\hat\ell^{\top}D\hat\ell$ against the exact $(\mu,D)$.

\paragraph{Calibrations.} Both drivers are $X_{t}=bt+\sigma W_{t}$ on $[0,1]$
under a geometric lift at $\gamma=3$, so $\E[\Sig]=\exp(T(b+\tfrac12\Sigma))$
is exact and $D$ is assembled from it by shuffling. At $d=2$, the calibration
of Section~\ref{sec:check}: $p=6$, $\operatorname{cond}D=134.3$. At $d=20$,
volatilities equally spaced on $[0.22,0.35]$, equicorrelation $0.40$, drifts
equally spaced on $[0.04,0.10]$: $p=420$, $\operatorname{cond}D=1906$ with
$81.0\%$ of the eigenvalues above $10^{-3}\lambda_{\max}$. The population
problem is well posed at both sizes, so everything below is a sample-size
statement and not a conditioning statement.

\paragraph{Four policies.} \emph{Raw} is the unregularized plug-in.
\emph{Ridge} replaces $\hat D$ by $\hat D+\delta\bar\lambda I$ with
$\bar\lambda=\operatorname{tr}\hat D/p$, at two intensities: the fixed
$\delta=0.25$ of identity-target shrinkage \citep{ledoit2004}, and
$\delta=\sqrt{p/M}$, the ambiguity radius that
Proposition~\ref{prop:robust} and the Marchenko--Pastur discussion below both
prescribe. \emph{Model} is a model-consistent plug-in: it estimates only the
generator $(\hat b,\hat\Sigma)$ from the level-one increments and rebuilds
$\hat\mu$ and $\hat D$ from it by tensor exponential and shuffle. The
level-one plug-in is carried throughout as a benchmark. Cells report medians
over independent replications, with interquartile ranges and $\Pr[\CE<0]$,
because in the regime that matters the loss is heavy-tailed and its mean is
not a stable summary.

Before the tables, one identity organizes everything in them.

\begin{proposition}[Exact loss decomposition; \textsc{Proved}]
\label{prop:celoss}
For any $\ell$,
$\CE^{*}-\CE(\ell)=\tfrac{\gamma}{2}\,\|D^{1/2}(\ell-\ell^{*})\|^{2}$.
In particular
$\E[\CE(\hat\ell)]=\CE^{*}
-\tfrac{\gamma}{2}\,\E\|D^{1/2}(\hat\ell-\ell^{*})\|^{2}$ for any
estimator $\hat\ell$: the expected plug-in loss is the $D$-weighted mean
squared error of the weights, with no higher-order terms.
\end{proposition}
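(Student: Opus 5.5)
The plan is to complete the square in the quadratic objective $\CE(\ell)=\mu^{\top}\ell-\tfrac{\gamma}{2}\ell^{\top}D\ell$. I first treat the case in which $D$ is invertible, which is the situation of \eqref{eq:mv} and of both calibrations, where $D$ is strictly positive definite. There the optimizer satisfies $\mu=\gamma D\ell^{*}$ and $\CE^{*}=\CE(\ell^{*})=\tfrac{\gamma}{2}\ell^{*\top}D\ell^{*}$. Substituting $\mu=\gamma D\ell^{*}$ into $\CE(\ell)$ gives
\[
\CE(\ell)=\gamma\,\ell^{*\top}D\ell-\tfrac{\gamma}{2}\ell^{\top}D\ell
=\tfrac{\gamma}{2}\ell^{*\top}D\ell^{*}-\tfrac{\gamma}{2}(\ell-\ell^{*})^{\top}D(\ell-\ell^{*}),
\]
using only that $D$ is symmetric, which holds by Corollary~\ref{cor:psd}. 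Rearranging yields $\CE^{*}-\CE(\ell)=\tfrac{\gamma}{2}\|D^{1/2}(\ell-\ell^{*})\|^{2}$. Here $D^{1/2}$ is the positive semidefinite square root, which exists because $D\succeq0$ (again Corollary~\ref{cor:psd}).

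For the general, possibly singular case in the Hilbert setting, I would take $\ell^{*}=\gamma^{-1}\Gamma^{\dagger}\mu$ under the attainment condition $\mu\in\operatorname{Range}(\Gamma)$ of Theorem~\ref{thm:fredholm}. Under that condition $\Gamma\ell^{*}=\gamma^{-1}\mu$, so the same computation goes through verbatim. This first-order condition $\gamma\Gamma\ell^{*}=\mu$ is the only point where care is needed. I would note explicitly that without attainment the identity has no $\ell^{*}$ to refer to, and that with a nonzero component of $\mu$ in $\ker\Gamma$ the value is infinite, so the statement implicitly assumes attainment.

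The expectation statement is pointwise, not asymptotic. I apply the identity at $\ell=\hat\ell(\omega)$ for each sample, where $\CE$ is evaluated with the true $(\mu,D)$ as defined in the text, and then take expectations using linearity. The phrase ``no higher-order terms'' holds because $\CE$ is exactly quadratic in $\ell$, so the second-order Taylor expansion around the maximizer is exact.

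The only real obstacle is integrability, not algebra. For the raw plug-in, $\hat D^{-1}$ may have no finite moments when $M$ is close to $p$. Both sides can then equal $-\infty$, and the identity must be read in the extended reals. Since the loss $\tfrac{\gamma}{2}\|D^{1/2}(\hat\ell-\ell^{*})\|^{2}$ is nonnegative, its expectation is always well defined in $[0,\infty]$, and I would state the expectation identity in that sense.
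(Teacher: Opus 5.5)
Your proposal is correct and follows the paper's own argument: the paper's proof is the one line ``$\CE$ is a concave quadratic with maximizer $\ell^{*}$; expand around it,'' which is your completion of the square using $\mu=\gamma D\ell^{*}$. Your extra remarks are sound refinements that the paper leaves implicit: attainment ($\mu\in\operatorname{Range}(\Gamma)$) in the singular case, and reading the expectation identity in $[-\infty,\CE^{*}]$ when $\hat D^{-1}$ has no moments.
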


\begin{proof}
$\CE$ is a concave quadratic with maximizer $\ell^{*}$; expand around it.
\end{proof}

The identity is why the tables behave as they do. The plug-in error
$\hat\ell-\ell^{*}$ contains $\hat D^{-1}$, so its $D$-weighted second moment
inherits the spectrum of the inverse sample covariance. Two facts locate the
barrier without any distributional assumption. First,
$\operatorname{rank}(\hat D)\le M-1$ exactly, so $M\ge p+1$ is necessary for
the raw plug-in to be defined at all --- linear algebra, not asymptotics.
Second, for pure noise with finite fourth moments the sample spectrum
concentrates on the Marchenko--Pastur bulk
$[(1-\sqrt{p/M})^{2},(1+\sqrt{p/M})^{2}]$ \citep{marchenko1967}, so the
smallest eigenvalue collapses as $M\downarrow p$ and $\|\hat D^{-1}\|$
diverges together with the loss of Proposition~\ref{prop:celoss}; the measured
edge tracks the bound across $M/p$ from $1.3$ to $200$. Signature coordinates are neither independent nor
identically distributed across levels, so Marchenko--Pastur is a located
analogy, not a theorem about $\hat D$; the rank statement needs no such
caveat. The random-matrix toolkit for cleaning sample covariance matrices in
finance is surveyed by \citet{bun2017}; what is specific here is that the
payoffs whose covariance is being cleaned are generated algebraically, by
words, rather than observed, so their number is chosen by the truncation level
and not by the data.

The exact finite-sample theory of \citet{kanzhou2007} is the natural thing to
reach for here, and it is worth being precise about how far it reaches. It is
exact for the level-one sub-problem, where the payoffs are Gaussian and
$\hat D$ is Wishart, and there it predicts the divergence of the expected
plug-in loss as $M\downarrow p$. It does not transfer to level two: those
payoffs are quadratic in Gaussians, $\hat D$ is not Wishart, and we have no
closed form for the negative entries of Table~\ref{tab:est20}. Those entries
are measured, not derived, which is why they are reported with dispersion.

The rate itself can be stated as a theorem, with hypotheses the signature
setting makes natural --- and with the dimension entering through the trace
and the conditioning of $D$ rather than through the word count.

\begin{theorem}[Consistency threshold; \textsc{Proved} under (i)--(ii)]
\label{thm:consistency}
Assume (i) the level-$\le m$ signature coordinates are bounded,
$\|\Phi\|\le B$ almost surely --- automatic on paths of $1$-variation at
most $R$, since $|\ip{w}{\Sig}|\le R^{|w|}/|w|!$ --- and (ii)
$\lambda_{\min}(D)\ge\lambda_{0}>0$ on the tradable span. Let $V$ be the
matrix variance proxy of the summands of $\hat D$ and
$\operatorname{intdim}(V)=\operatorname{tr}V/\|V\|$. Then there are constants
$C_{0},C_{1}$ depending only on $(B,\lambda_{0},\gamma,\|\ell^{*}\|)$ such
that for $M\ge C_{0}\log(\operatorname{intdim}(V)/\delta)$, with probability
at least $1-\delta$,
\[
\CE^{*}-\CE(\hat\ell)\;\le\;C_{1}\,
\frac{\operatorname{tr}D+\log(1/\delta)}{M}.
\]
The word count $p$ appears nowhere in the display. Bounding
$\operatorname{tr}D\le B^{2}p$ recovers the familiar $p/M$ rate, but
Remark~\ref{rem:effdim} shows that bound to be wasteful exactly in the
signature setting; the operative ratio is
$\operatorname{tr}(D)/(\lambda_{0}M)$.
\end{theorem}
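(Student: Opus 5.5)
The argument starts from the exact loss decomposition of Proposition~\ref{prop:celoss}: $\CE^{*}-\CE(\hat\ell)=\tfrac{\gamma}{2}\|D^{1/2}(\hat\ell-\ell^{*})\|^{2}$. The whole task is therefore to bound the $D$-weighted weight error of the plug-in. Write $\hat\ell-\ell^{*}=\gamma^{-1}\hat D^{-1}\big[(\hat\mu-\mu)-(\hat D-D)\gamma\ell^{*}\big]$. This follows from $\gamma\hat D\hat\ell=\hat\mu$ and $\gamma D\ell^{*}=\mu$. It splits the error into a mean-estimation term and a covariance-estimation term acting on the fixed vector $\ell^{*}$. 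The inverse $\hat D^{-1}$ is handled on the event where $\hat D\succeq\tfrac12 D$, or more modestly $\|D^{-1/2}(\hat D-D)D^{-1/2}\|\le\tfrac12$ on the tradable span. On that event $\|D^{1/2}\hat D^{-1}D^{1/2}\|\le 2$, so
\[
\|D^{1/2}(\hat\ell-\ell^{*})\|\le \tfrac{2}{\gamma}\,\big\|D^{-1/2}\big[(\hat\mu-\mu)-\gamma(\hat D-D)\ell^{*}\big]\big\|.
\]

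\textbf{Step 1: the good event.} I would use the matrix Bernstein inequality in its intrinsic-dimension form (Tropp) for the centred summands $(\Phi_{k}-\mu)(\Phi_{k}-\mu)^{\top}-D$. By (i) these are bounded by $O(B^{2})$, and $V$ is their variance proxy. This gives $\|\hat D-D\|\le\lambda_{0}/2$ with probability $1-\delta/2$ once $M\ge C_{0}\log(\operatorname{intdim}(V)/\delta)$, with $C_{0}$ of order $B^{4}/\lambda_{0}^{2}$. I would then combine this with (ii) to get the relative bound. The centring by $\hat\mu$ rather than $\mu$ contributes a rank-one term $(\bar\Phi-\mu)(\bar\Phi-\mu)^{\top}$ of order $B^{2}\log(1/\delta)/M$. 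It is absorbed in the same way.

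\textbf{Step 2: the vector term.} The quantity $\xi=D^{-1/2}[(\hat\mu-\mu)-\gamma(\hat D-D)\ell^{*}]$ is, up to the rank-one correction, an average of i.i.d.\ mean-zero bounded vectors $Z_{k}=D^{-1/2}(\Phi_{k}-\mu)\big(1-\gamma\langle\Phi_{k}-\mu,\ell^{*}\rangle\big)+\gamma D^{1/2}\ell^{*}$. Each satisfies $\|Z_{k}\|\lesssim B(1+\gamma B\|\ell^{*}\|)/\sqrt{\lambda_{0}}$. The key point is that $\E\|Z_{k}\|^{2}$ is controlled by $\operatorname{tr}D$ and not by $p$. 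Since $|1-\gamma\langle\Phi-\mu,\ell^{*}\rangle|\le 1+2\gamma B\|\ell^{*}\|$, we have $\E\|Z\|^{2}\le c\,\E\|D^{-1/2}(\Phi-\mu)\|^{2}$ up to the bounded multiplier. Keeping the unweighted version instead, $\E\|\Phi-\mu\|^{2}=\operatorname{tr}D$, and then paying $\lambda_{0}^{-1}$ once through $\|D^{-1/2}\|$. A vector Bernstein (or Pinelis) inequality in Hilbert space then gives $\|\xi\|^{2}\lesssim(\operatorname{tr}D+B^{2}\log(1/\delta))/(\lambda_{0}M)$ with probability $1-\delta/2$. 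Squaring and inserting into Proposition~\ref{prop:celoss} gives the display, with $C_{1}$ collecting $\gamma$, $B$, $\lambda_{0}$ and $\|\ell^{*}\|$. A union bound over the two events finishes the argument.

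\textbf{Main obstacle.} The main difficulty is bookkeeping the whitening so that $p$ never enters. The naive route bounds $\operatorname{tr}(D^{-1}\Cov(\cdot))$, which is just $p$. To avoid it, I would deliberately not whiten inside the trace. Instead I would bound $\|\xi\|\le\lambda_{0}^{-1/2}\|D^{1/2}\cdot\text{unwhitened}\|$, or more simply use $\|D^{1/2}\hat D^{-1}v\|\le 2\lambda_{0}^{-1/2}\|v\|$ on the good event. The vector concentration is then applied to $v=(\hat\mu-\mu)-\gamma(\hat D-D)\ell^{*}$, whose second moment is at most $(1+2\gamma B\|\ell^{*}\|)^{2}\operatorname{tr}D/M$. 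This is what makes the operative ratio $\operatorname{tr}(D)/(\lambda_{0}M)$, as the statement asserts.
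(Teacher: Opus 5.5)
Your proposal is correct and is essentially the paper's proof. It uses the same exact loss identity, the same error representation $\hat\ell-\ell^{*}=\gamma^{-1}\hat D^{-1}[(\hat\mu-\mu)-\gamma(\hat D-D)\ell^{*}]$, the same intrinsic-dimension matrix-Bernstein event $\|\hat D-D\|_{\mathrm{op}}\le\lambda_{0}/2$, and vector concentration for the remainder. You make two small refinements, both sound. First, you treat the mean piece and the $(\hat D-D)\ell^{*}$ piece as one i.i.d.\ vector sum, where the paper bounds them separately; a vector inequality is in fact the cleaner tool for the dimension-free control of $(\hat D-D)\ell^{*}$ that the paper attributes to matrix Bernstein. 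Second, you use $\|D^{1/2}\hat D^{-1}D^{1/2}\|\le 2$ rather than $\|\hat D^{-1}\|\le 2/\lambda_{0}$. This saves a condition-number factor $\|D\|/\lambda_{0}$, and it is what actually delivers the ratio $\operatorname{tr}(D)/(\lambda_{0}M)$.

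One caution on your framing of the ``main obstacle''. The whitened second moment $\E\|D^{-1/2}(\Phi-\mu)\|^{2}$ equals the dimension of the tradable span, which is at most $\operatorname{tr}(D)/\lambda_{0}$. So the route you call naive is never worse, and declining to whiten is a deliberate weakening that produces the stated form, not a way of escaping $p$.
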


\begin{proof}
By Proposition~\ref{prop:celoss} it suffices to bound
$\|D^{1/2}(\hat\ell-\ell^{*})\|$. With $\hat D$ centred at the population
mean the identity
$\hat\ell-\ell^{*}=\gamma^{-1}\hat D^{-1}\big[(\hat\mu-\mu)
-\gamma(\hat D-D)\ell^{*}\big]$ is exact; sample centring perturbs $\hat D$
by $(\hat\mu-\mu)(\hat\mu-\mu)^{\top}$, of operator norm $O_{P}(1/M)$, which
contributes below the stated order. The intrinsic-dimension form of matrix
Bernstein \citep{tropp2012,tropp2015} gives
$\|\hat D-D\|_{\mathrm{op}}\le\lambda_{0}/2$ with probability
$1-\delta/2$ once $M\ge C_{0}\log(\operatorname{intdim}(V)/\delta)$, and on
that event $\|\hat D^{-1}\|\le2/\lambda_{0}$. Vector Bernstein gives
$\|\hat\mu-\mu\|^{2}\lesssim(\operatorname{tr}D+B^{2}\log(1/\delta))/M$, and
matrix Bernstein gives $\|(\hat D-D)\ell^{*}\|^{2}\lesssim
B^{4}\|\ell^{*}\|^{2}(1+\log(1/\delta))/M$, which is dimension-free.
Combining terms gives the display; the $\hat\mu$ term carries the trace.
\end{proof}

The theorem is an upper bound under hypothesis (ii), which is a statement
about $D$. What fails in the tables is $\lambda_{\min}(\hat D)$, not
$\lambda_{\min}(D)$: the good event of the proof is exactly what a sample of
size $M\approx p$ does not deliver. So the theorem and the tables are not two
readings of one quantity --- the first bounds the loss once the sample
resolves the spectrum, the second measures what happens before it does.

\begin{remark}[Effective dimension, and where the difficulty actually lives]
\label{rem:effdim}
On paths of $1$-variation at most $R$ the factorial decay
$|\ip{w}{\Sig}|\le R^{|w|}/|w|!$ gives
$\operatorname{tr}(D)\le\sum_{k\le m}d^{k}R^{2k}/(k!)^{2}$, a series that
converges even as $m\to\infty$: the trace side of the sample-size floor does
not grow with the truncation level at all, and neither does the intrinsic
dimension, which is $1.49$ at $d=2$ and $3.84$ at $d=20$ here against word
counts of $6$ and $420$. The difficulty migrates into the conditioning ---
$\lambda_{\min}(D)$ decays factorially across levels and the constants carry
$\lambda_{0}^{-2}$ --- so what the trace saves, the spectrum spends. The
operative quantity is the effective dimension
$d_{\mathrm{eff}}=\operatorname{tr}(D)/\lambda_{\min}(D)$, equal to $200$ at
$d=2$ and $7325$ at $d=20$. It summarizes the tables better than $p$ does:
the raw plug-in reaches $85\%$ of its oracle value at $M=0.62\,
d_{\mathrm{eff}}$ in the pair and $M=0.68\,d_{\mathrm{eff}}$ in the cross
section, whereas the corresponding values of $M/p$, namely $20.8$ and
$11.9$, differ by a factor of $1.8$. This is a measured statement about two
calibrations, not a theorem.
\end{remark}

\begin{remark}[Bias--variance across truncation levels]
The oracle value $\CE^{*}_{m}$ is nondecreasing in $m$; write
$A_{m}=\lim_{k}\CE^{*}_{k}-\CE^{*}_{m}\ge0$ for the approximation gap of
level $m$. Theorem~\ref{thm:consistency} then gives, with high probability,
$\CE(\hat\ell_{m})\ge\lim_{k}\CE^{*}_{k}-A_{m}
-C_{1}\operatorname{tr}(D_{m})/M$: the optimal truncation balances a gap that
universality drives to zero against a cost that grows with the effective
dimension of level $m$. We attach no rate to $A_{m}$ --- none holds without
regularity assumptions on the path law --- so the balance is stated, not
solved.
\end{remark}

\begin{table}[ht]
\centering\small
\begin{tabular}{rrrrrrrrr}
\toprule
& & \multicolumn{3}{c}{raw plug-in} & \multicolumn{2}{c}{ridge, frac.\ oracle}
& model & level $1$\\
\cmidrule(lr){3-5}\cmidrule(lr){6-7}\cmidrule(lr){8-8}\cmidrule(lr){9-9}
$M$ & $M/p$ & med.\ $\CE$ & frac. & $\Pr[\CE<0]$
& $\delta=0.25$ & $\delta=\sqrt{p/M}$ & frac. & frac.\\
\midrule
$60$   & $10.0$   & $0.1001$ & $0.598$ & $0.175$ & $0.485$ & $0.436$ & $0.957$ & $0.706$\\
$125$  & $20.8$   & $0.1428$ & $0.853$ & $0.005$ & $0.493$ & $0.521$ & $0.979$ & $0.863$\\
$250$  & $41.7$   & $0.1562$ & $0.933$ & $0.000$ & $0.500$ & $0.601$ & $0.991$ & $0.922$\\
$500$  & $83.3$   & $0.1619$ & $0.968$ & $0.000$ & $0.501$ & $0.672$ & $0.996$ & $0.967$\\
$1000$ & $166.7$  & $0.1646$ & $0.983$ & $0.000$ & $0.503$ & $0.740$ & $0.998$ & $0.984$\\
$2000$ & $333.3$  & $0.1660$ & $0.992$ & $0.000$ & $0.502$ & $0.799$ & $0.999$ & $0.992$\\
$8000$ & $1333.3$ & $0.1670$ & $0.998$ & $0.000$ & $0.503$ & $0.895$ & $1.000$ & $0.998$\\
\bottomrule
\end{tabular}
\caption{$d=2$, $p=6$, $\gamma=3$, $d_{\mathrm{eff}}=200$. Oracle certainty
equivalents $0.167342$ at level $\le2$ and $0.015008$ at level $1$, ratio
$11.150$. Medians over $400$ replications ($200$ at $M\ge2000$); the level-$1$
column is a fraction of the level-$1$ oracle, all others of the level-$\le2$
oracle. The raw interquartile range is $[0.040,0.133]$ at $M=60$ and inside
$\pm4\%$ of the median from $M=250$ on.}
\label{tab:est2}
\end{table}

\begin{table}[ht]
\centering\footnotesize\setlength{\tabcolsep}{4pt}
\begin{tabular}{rrrrrrrrrr}
\toprule
& & \multicolumn{4}{c}{raw plug-in} & \multicolumn{2}{c}{ridge, frac.\ oracle}
& model & level $1$\\
\cmidrule(lr){3-6}\cmidrule(lr){7-8}\cmidrule(lr){9-9}\cmidrule(lr){10-10}
$M$ & $M/p$ & med.\ $\CE$ & IQR & frac. & $\Pr[\CE<0]$
& $\delta=0.25$ & $\delta=\sqrt{p/M}$ & frac. & frac.\\
\midrule
$500$   & $1.19$  & $-510.9$ & $[-634,-402]$   & $-306.1$ & $1.00$ & $0.577$ & $0.301$ & $0.950$ & $0.690$\\
$1000$  & $2.38$  & $-3.857$ & $[-4.32,-3.34]$ & $-2.311$ & $1.00$ & $0.618$ & $0.385$ & $0.978$ & $0.849$\\
$2000$  & $4.76$  & $0.675$  & $[0.587,0.756]$ & $0.405$  & $0.00$ & $0.627$ & $0.472$ & $0.989$ & $0.930$\\
$3000$  & $7.14$  & $1.163$  & $[1.119,1.203]$ & $0.697$  & $0.00$ & $0.627$ & $0.524$ & $0.993$ & $0.952$\\
$5000$  & $11.90$ & $1.436$  & $[1.416,1.457]$ & $0.861$  & $0.00$ & $0.629$ & $0.591$ & $0.996$ & $0.972$\\
$20000$ & $47.62$ & $1.623$  & $[1.622,1.625]$ & $0.973$  & $0.00$ & $0.628$ & $0.756$ & $0.999$ & $0.993$\\
\bottomrule
\end{tabular}
\caption{$d=20$, $p=420$, $\gamma=3$, $d_{\mathrm{eff}}=7325$. Oracle
certainty equivalents $1.669016$ at level $\le2$ and $0.027984$ at level $1$,
ratio $59.64$. Medians and interquartile ranges over $200,200,120,100,60,20$
replications by row; the level-$1$ column is a fraction of the level-$1$
oracle, all others of the level-$\le2$ oracle. At $M/p=1.19$ the whole
interquartile range is negative, so the sign is not a tail artifact.}
\label{tab:est20}
\end{table}

\begin{figure}[ht]
\centering
\input{fig_estimation.pgf}
\caption{Median realized certainty equivalent as a fraction of the oracle,
against sample size in units of the parameter count, from the runs behind
Tables~\ref{tab:est2} and~\ref{tab:est20}. Left: the pair. Right: the cross
section, on a symmetric-logarithmic scale because the unregularized plug-in is
two orders of magnitude below zero at $M/p=1.19$. Shading is the interquartile
range of the plug-in. Grey lines mark the oracle and the level-one value. The
two ridge intensities separate cleanly: the fixed one is flat and never
converges, the Marchenko--Pastur radius does; the model-consistent estimator
starts where the others end up.}
\label{fig:est}
\end{figure}

\begin{proposition}[Sample-size floor for path complexity;
\textsc{Conditional, measured}]
\label{prop:floor}
On the calibrations of Tables~\ref{tab:est2}--\ref{tab:est20}, the raw
level-$\le2$ plug-in crosses half of its oracle value between $M/p=4.8$ and
$M/p=7.1$, and at $M/p\le2.4$ its realized certainty equivalent is negative
in every replication --- at $M/p=1.2$ the median is $-306\times$ the oracle
value it was chasing, with an interquartile range of
$[-634,-402]$ in absolute terms. The static plug-in is essentially unaffected
across the same range.
\end{proposition}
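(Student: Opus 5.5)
This is a measured claim, so the "proof" is an audit: reproduce the experiment and show that the reported numbers support each clause. I would split it into three parts: the half-oracle crossing, the sign at $M/p\le2.4$, and the behaviour of the static plug-in.

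\emph{The crossing.} Fix both calibrations as in Section~\ref{sec:cost}. Compute $\E[\Sig]=\exp(T(b+\tfrac12\Sigma))$ exactly and assemble $D$ by shuffling, which is valid by Proposition~\ref{prop:defect}. Then compute $\ell^{*}$ and $\CE^{*}$ from \eqref{eq:mv}. For each $M$, draw $M$ paths, form $\hat\mu$ and $\hat D$, solve $\hat\ell=(\gamma\hat D)^{-1}\hat\mu$, and evaluate the true certainty equivalent. By Proposition~\ref{prop:celoss} this equals $\CE^{*}-\tfrac{\gamma}{2}\|D^{1/2}(\hat\ell-\ell^{*})\|^{2}$, so no second layer of Monte Carlo noise enters. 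In Table~\ref{tab:est20} the median fraction is $0.405$ at $M/p=4.76$ and $0.697$ at $M/p=7.14$. The median is monotone on the grid, so by continuity it crosses $\tfrac12$ in between. In Table~\ref{tab:est2} the fraction at the smallest grid point, $M/p=10$, is already $0.598$, so the pair crosses at or below that ratio. The clause should therefore be read as a statement about the cross section, and I would state explicitly that the bracket $[4.8,7.1]$ comes from $d=20$.

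\emph{The sign for $M/p\le2.4$.} Here I would rely on the $\Pr[\CE<0]=1.00$ column at $M=500$ and $M=1000$. That column gives "every replication" directly over the $200$ replications per row. The $-306\times$ median and the interquartile range $[-634,-402]$ are read off the same row. To show this is structural rather than a simulation accident, I would add the mechanism. First, $\operatorname{rank}\hat D\le M-1$, which is barely above $p$ at $M=500$. Second, by the Marchenko--Pastur heuristic, $\lambda_{\min}(\hat D)$ collapses like $(1-\sqrt{p/M})^{2}\approx0.006$ at $M/p=1.19$. This inflates $\|\hat D^{-1}\hat\mu\|$, and through Proposition~\ref{prop:celoss} it makes the loss quadratic in that inflation, which far exceeds $\CE^{*}$.

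\emph{The static plug-in.} I would cite the level-one column. It stays between $0.69$ and $0.99$ of its own oracle across the same range, and it is never negative in median. This is consistent with $p=d$ being small relative to $M$.

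\emph{The main obstacle.} The statement is conditional and empirical, and the pair table does not bracket the half-oracle crossing. The honest step is to add a finer $M$ grid at $d=2$ below $M/p=10$, or else to restrict the crossing claim to the cross section. I would try the finer grid first. I would also report Monte Carlo standard errors of the medians near the crossing, for example by bootstrapping over replications, so that the interval endpoints $4.8$ and $7.1$ are not affected by replication noise.
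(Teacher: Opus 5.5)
Your audit is correct and follows the paper's own support for this measured claim. The clauses are read off the $d=20$ rows of Table~\ref{tab:est20} and the level-one columns, with the rank bound and the Marchenko--Pastur edge from the discussion after Proposition~\ref{prop:celoss} as the mechanism, and you are right that the $[4.8,7.1]$ bracket and the sign clause hold for the cross section only, since Table~\ref{tab:est2} begins at $M/p=10$.

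Of your two remedies, prefer the restriction to $d=20$. At $M/p=4.76$ and $7.14$ the interquartile ranges, about $[0.35,0.45]$ and $[0.67,0.72]$ of the oracle, already lie on opposite sides of one half, so no bootstrap is needed. The pair, by contrast, is only at $0.598$ at $M/p=10$ and lags the cross section by a factor of about $1.8$ in $M/p$ (Remark~\ref{rem:effdim}), so its crossing very likely lies above $7.1$. A finer $d=2$ grid would therefore more likely refute the two-calibration reading than confirm it.
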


\begin{proposition}[Shrinkage changes sign with dimension;
\textsc{Conditional, measured}]
\label{prop:shrink}
The fixed intensity $\delta = 0.25$ destroys half the attainable value at
$p=6$ ($0.49$--$0.50$ of oracle at every sample size, against $0.998$ for the
raw plug-in at $M=8000$) and is the difference between $-306\times$ and
$+0.58\times$ the oracle at $p=420$, $M=500$. Its virtue is that it is flat
and its defect is that it is inconsistent: it delivers $0.58$--$0.63$ of
oracle at every sample size tested at $p=420$ and never converges. The
Marchenko--Pastur intensity $\delta=\sqrt{p/M}$ is the reverse --- worse at
the smallest samples ($0.30$ against $0.58$ at $M/p=1.19$) and the only ridge
that converges, rising monotonically to $0.90$ at $p=6$ and $0.76$ at $p=420$
over the range tested.
\end{proposition}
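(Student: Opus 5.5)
This proposition is conditional and measured, not derived, so the plan is computational: certify each reported number, and separate the structural content that can be proved from the empirical content that can only be reproduced. The structural skeleton comes from Proposition~\ref{prop:celoss}. Every entry is $\CE(\hat\ell)=\CE^{*}-\tfrac{\gamma}{2}\|D^{1/2}(\hat\ell-\ell^{*})\|^{2}$ evaluated against the exact $(\mu,D)$. The oracle $\CE^{*}$ is fixed for each calibration by \eqref{eq:mv}, with $D$ assembled by shuffling from $\E[\Sig]=\exp(T(b+\tfrac12\Sigma))$. So each ridge column measures only the $D$-weighted mean squared error of the shrunken weights, and the first step is to write $\hat\ell_{\delta}=\gamma^{-1}(\hat D+\delta\bar\lambda I)^{-1}\hat\mu$ and compare it with $\ell^{*}$.

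The second step explains why the fixed intensity is inconsistent and flat. As $M\to\infty$, $\hat D\to D$ and $\bar\lambda\to\operatorname{tr}D/p$, so $\hat\ell_{0.25}\to\gamma^{-1}(D+0.25\,\bar\lambda I)^{-1}\mu\neq\ell^{*}$. The limiting fraction of oracle is then a deterministic function of $(\mu,D)$ alone. I would compute it exactly in the eigenbasis of $D$:
\[
\frac{\CE(\ell_{\infty})}{\CE^{*}}=\frac{\sum_{k}\mu_{k}^{2}\,\lambda_{k}^{-1}\bigl(1-\rho_{k}^{2}\bigr)}{\sum_{k}\mu_{k}^{2}\lambda_{k}^{-1}},
\qquad \rho_{k}=\frac{\delta\bar\lambda}{\lambda_{k}+\delta\bar\lambda}.
\]
This follows because each coordinate of $\ell_{\infty}$ equals $\ell^{*}_{k}(1-\rho_{k})$, and the coordinate value is $\CE^{*}_{k}(1-\rho_{k}^{2})$. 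The formula should return about $0.50$ at $d=2$ and about $0.63$ at $d=20$, matching the plateaus in the tables. Those values are determined by how much of $\mu$ loads on small eigenvalues of $D$, which is the symmetric-block direction, and that loading is what makes the $d=2$ figure exactly "half".

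The third step covers the Marchenko--Pastur intensity. Since $\delta=\sqrt{p/M}\to0$, consistency follows from continuity together with Theorem~\ref{thm:consistency}. Monotone rise then follows from bias that shrinks like $\sqrt{p/M}$, so the approach to $1$ is slow ($0.90$ and $0.76$ at the largest $M$), and the same formula with $\delta=\sqrt{p/M}$ predicts these values to leading order.

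The small-sample comparison at $M/p=1.19$ cannot be proved and has to be simulated: the fixed ridge gives $0.58$, the Marchenko--Pastur ridge $0.30$ because its heavier shrinkage adds bias, and the raw plug-in $-306$. The raw value is explained by $\lambda_{\min}(\hat D)$ collapsing, following the rank and Marchenko--Pastur discussion. The main obstacle is exactly this finite-sample regime. Because $\hat D$ is not Wishart at level two, there is no closed form, so I would report the replications, medians and interquartile ranges of Table~\ref{tab:est20} as the evidence and state the sign change as conditional on these two calibrations.
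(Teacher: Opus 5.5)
Your proposal is correct, and it goes further than the paper, which gives no derivation for this proposition. The paper reads every number directly off Tables~\ref{tab:est2}--\ref{tab:est20}. Its only explanation is the qualitative ambiguity reading after Proposition~\ref{prop:robust}: $\delta=0.25$ prices an ambiguity that never shrinks, while $\sqrt{p/M}$ does.

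Your eigenbasis formula turns that reading into a computation, and it buys three things:
\begin{enumerate}
\item \emph{Inconsistency is proved, not just observed.} Every $\rho_k>0$ once $\delta\bar\lambda>0$, so the limiting fraction is strictly below one whenever $\mu\neq0$.
\item \emph{The plateau comes from $(\mu,D)$ alone.} Its level is fixed by the oracle shares $w_k=\mu_k^{2}\lambda_k^{-1}/\sum_j\mu_j^{2}\lambda_j^{-1}$, with no simulation needed.
\item \emph{The large-$M$ values of the Marchenko--Pastur ridge are predicted.} At $d=2$ the formula gives $0.503$ for $\delta=0.25$ and $0.896$ for $\delta=\sqrt{6/8000}$, against the table's $0.503$ and $0.895$.
\end{enumerate}
The formula cannot reach the finite-sample content, and there you rightly fall back on the paper's simulations. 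That content is the comparison $0.58$ versus $0.30$ versus $-306$ at $M/p=1.19$, and the near-flatness of the fixed ridge at small $M$. The flatness comes from the uniform bound $\|(\hat D+\delta\bar\lambda I)^{-1}\|\le(\delta\bar\lambda)^{-1}$ controlling sampling error, not from your limit.

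A few corrections:
\begin{enumerate}
\item \emph{Wrong citation for consistency.} Theorem~\ref{thm:consistency} does not establish consistency of the $\sqrt{p/M}$ ridge. It concerns the unregularized plug-in, and its hypothesis (i) fails here because signature coordinates of a Brownian driver are unbounded. Continuity of $(\hat\mu,\hat D,\delta)\mapsto\hat\ell$ at $(\mu,D,0)$, together with the strong law, is all you need.
\item \emph{Monotonicity is only partly proved.} Your formula shows that $\delta\mapsto\sum_k w_k(1-\rho_k^{2})$ is decreasing, so the population proxy rises with $M$. The reported medians also carry sampling error, so monotonicity over the tested range remains a measured fact.
\item \emph{The bias loss is not a $\sqrt{p/M}$ law.} The loss $\sum_k w_k\rho_k^{2}$ is quadratic in $\delta$, hence $O(p/M)$ once $\delta\bar\lambda\ll\lambda_{\min}(D)$. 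The slow approach in the tables arises because $\delta\bar\lambda$ is still of the order of $\lambda_{\min}(D)$ at the largest samples; at $d=2$, $M=8000$ it is about $6.3\times10^{-4}$.
\item \emph{``Half'' is not structural.} Nothing in the formula makes the $d=2$ plateau ``exactly'' one half; $0.503$ is an accident of that calibration.
\end{enumerate}
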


Three readings, in increasing order of consequence.

First, the oracle case gets \emph{better} with dimension: the advantage of
admitting quadratic path functionals rises from $11.15\times$ at $d=2$ to
$59.64\times$ at $d=20$. But it is worth knowing which words carry it.

\begin{table}[ht]
\centering
\begin{tabular}{lrrrr}
\toprule
& \multicolumn{2}{c}{$d=2$} & \multicolumn{2}{c}{$d=20$}\\
\cmidrule(lr){2-3}\cmidrule(lr){4-5}
basis & $\CE^{*}$ & $\times$ level $1$ & $\CE^{*}$ & $\times$ level $1$\\
\midrule
level $1$                              & $0.015008$ & $1.00$  & $0.027984$ & $1.00$\\
$+$ antisymmetric                      & $0.015008$ & $1.00$  & $0.027984$ & $1.00$\\
$+$ symmetric off-diagonal             & $0.020200$ & $1.35$  & $0.081437$ & $2.91$\\
$+$ symmetric diagonal                 & $0.160804$ & $10.71$ & $0.393548$ & $14.06$\\
$+$ symmetric block $=$ level $\le2$   & $0.167342$ & $11.15$ & $1.669016$ & $59.64$\\
\bottomrule
\end{tabular}
\caption{Oracle certainty equivalent on sub-bases of the level-$\le2$ word
basis, $\gamma=3$. Each row adds the named class to level one; the last row
is the full level-$\le2$ problem.}
\label{tab:wordclass}
\end{table}

\begin{proposition}[Where the level-two premium lives;
\textsc{Conditional, measured}]
\label{prop:wordclass}
On both calibrations the entire level-$\le2$ oracle gain sits in the
symmetric block. The antisymmetric block adds exactly nothing: under these
drivers its mean vanishes and it is uncorrelated with every other
level-$\le2$ word, both to machine precision, so the level-one and
level-one-plus-area values agree to sixteen digits. Within the symmetric
block the diagonal words carry almost all of the pair's gain
($10.71\times$ of $11.15\times$) and the two halves are strongly
super-additive in the cross section: $14.06\times$ from the diagonal words
alone, $2.91\times$ from the off-diagonal words alone, $59.64\times$
together.
\end{proposition}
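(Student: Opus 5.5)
The statement has two layers. One is structural: the antisymmetric words have zero mean and zero covariance with every other level-$\le2$ word. The other is numerical: the sub-basis values in Table~\ref{tab:wordclass}. I would prove the first analytically for the Brownian-with-drift driver $X_t=bt+\sigma W_t$ under the geometric lift. For the second I would reduce each value to the formula \eqref{eq:mv} on a sub-basis, and report the numbers as exact computations from $\E[\Sig]=\exp(T(b+\tfrac12\Sigma))$, not as simulation. Two pieces of that formula organize the argument. The sub-basis value is $\tfrac{1}{2\gamma}\mu_S^\top D_{SS}^{-1}\mu_S$. The loss identity of Proposition~\ref{prop:celoss} explains why adding an uncorrelated, mean-zero block leaves it unchanged.

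\textbf{Step 1: the area has zero mean.} I would write the level-two coordinates as a symmetric part $\tfrac12\Delta X^i\Delta X^j$ (the shuffle identity) plus the Lévy area $A^{ij}=\tfrac12(\Sig^{ij}-\Sig^{ji})$. Taking the level-two part of $\exp(T(b+\tfrac12\Sigma))$ gives
\[
\E[\Sig^{(2)}]=T\bigl(\tfrac12\Sigma\bigr)+\tfrac12T^{2}\,b\otimes b ,
\]
and both terms are symmetric, so $\E[A]=0$.

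\textbf{Step 2: the area is uncorrelated with every other word.} The symmetric words and the level-one words are polynomials of degree at most $2$ in $\Delta X$. So I need $\E[A\cdot f(\Delta X)]=0$ for all such $f$. I would use the reflection $W\mapsto\widetilde W$, $\widetilde W_t=W_{T-t}-W_T$ (time reversal). It preserves the law of $W$, maps the path $X_t=bt+\sigma W_t$ to $\widetilde X_t=bt+\sigma\widetilde W_t$, which has the same law as $X$ and the same terminal increment, and flips the sign of the area of $\sigma W$. The drift needs care, because $bt$ is linear and contributes to area. Concretely, write
\[
A^{ij}=A^{ij}_{WW}+\tfrac12\bigl(b^i\textstyle\int t\,d(\sigma W)^j-\ldots\bigr).
\]
The cross terms are Gaussian linear functionals, and their covariance with $\Delta X$ must be checked directly. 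I would do this by expressing $\Cov(A,\Delta X)$ and $\Cov(A,\Delta X\otimes\Delta X)$ as shuffle-computed entries $D(u,v)$, each a specific polynomial in $(b,\Sigma,T)$ read off $\exp(T(b+\tfrac12\Sigma))$ at levels three and four. The paper already asserts these vanish to machine precision. The clean route is to show that the level-three and level-four coefficients of $\exp(T(b+\tfrac12\Sigma))$ are invariant under word reversal: the exponent $b+\tfrac12\Sigma$ is reversal-invariant, since $\Sigma$ is symmetric, and reversal is an anti-automorphism. Combined with the fact that reversing a shuffle reverses each summand, this gives $D(ij,v)=D(ji,v)$ for every symmetric or level-one $v$. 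Antisymmetrizing then yields zero.

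\textbf{Step 3: the table and the super-additivity.} Once the area block is orthogonal and has zero mean, $D$ is block-diagonal with $\mu_{\text{area}}=0$. The quadratic form in \eqref{eq:mv} then ignores that block, so the level-one and level-one-plus-area values coincide exactly. Every remaining row of Table~\ref{tab:wordclass} is a finite linear solve in exact arithmetic, and the comparisons ($10.71$ versus $11.15$, and $14.06,\,2.91\mapsto59.64$) are read off from those solves. The \textsc{measured} label is appropriate there.

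The main obstacle is Step 2, specifically keeping track of the drift contributions to area. The reversal-invariance argument on the expected signature is where I would put the effort, and I would check it first on the level-three words $D((12),(1))$.
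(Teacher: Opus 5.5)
Your proposal is correct. On the structural half it takes a genuinely different route from the paper, which gives no derivation there at all. The paper reports the zero mean of the antisymmetric block, and its zero covariance with the other classes, as measured ``to machine precision''. Its only support is an informal appeal to time-reversibility, plus the antipode remark of Section~\ref{sec:area} for the mean. Every entry of Table~\ref{tab:wordclass} is then read off exact sub-basis solves.

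Your algebraic argument runs as follows. Word reversal $\rho$ is an anti-automorphism fixing $b+\tfrac12\Sigma$ (because $\Sigma$ is symmetric), hence it fixes $\exp(T(b+\tfrac12\Sigma))$ at every level. And $\rho(u\shuf v)=\rho(u)\shuf\rho(v)$, hence $D(\rho u,\rho v)=D(u,v)$. This turns the measured zeros into identities. $D$ is exactly block-diagonal with zero mean on the area block, so:
\begin{itemize}
\item the level-one and level-one-plus-area values are equal exactly, not merely to sixteen digits;
\item level one plus the symmetric block attains the full level-$\le2$ value exactly.
\end{itemize}
Your argument also isolates the hypothesis actually used, namely reversal invariance of $\E[\Sig^{\le4}]$. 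That is the precise content of the paper's remark that the zero is a statement about the driver. The numerical half ($10.71$, $2.91$, $14.06$, $59.64$) has no shortcut in either approach, so the \textsc{measured} tag stays there.

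One step of yours should be replaced, although your final argument does not need it. The map $\widetilde W_t=W_{T-t}-W_T$ has $\widetilde W_T=-W_T$, so it negates the terminal increment rather than preserving it. Because you apply it to the noise alone, $\Sig(\widetilde X)$ is not the reversal of $\Sig(X)$, which is why the drift looked like the main obstacle. Use the increment reversal $\hat X_t=X_T-X_{T-t}$ instead. It has the law of $X$ (it fixes $bt$), it preserves $\Delta X$, and for the Stratonovich lift it satisfies $\langle i_1\cdots i_k,\Sig(\hat X)\rangle=\langle i_k\cdots i_1,\Sig(X)\rangle$ pathwise. Then $A(\hat X)=-A(X)$ gives $\E[A\,f(\Delta X)]=0$ for every $f$ with $A\,f(\Delta X)$ integrable. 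This needs no drift bookkeeping and covers all functions of the terminal increment, not only quadratics.
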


\begin{figure}[ht]
\centering
\input{fig_wordclass.pgf}
\caption{Oracle certainty equivalent of each level-$\le2$ sub-basis, as a
multiple of the level-one value, on the two calibrations. The antisymmetric
words add exactly nothing at either cross-sectional width --- the bar is
$1.00\times$ to sixteen digits --- while the diagonal words carry most of the
gain and the two symmetric halves are strongly super-additive at $d=20$.}
\label{fig:wordclass}
\end{figure}

This changes what the dimensional trade-off is a statement about. Under a
geometric lift the shuffle relation makes the symmetric level-two block
exactly $\tfrac12\,\Delta X\otimes\Delta X$, a quadratic function of the
terminal increment; only the antisymmetric block --- the L\'evy area --- is
genuinely path-dependent. So what level two buys on these drivers is
convexity in the terminal return, and the path-dependent half of the level
earns nothing here, by construction: the driver has no expected area.
Section~\ref{sec:area} is where the antisymmetric words are put in front of a
driver that pays them. That the antisymmetric block earns exactly zero
here is a statement about the driver --- which is time-reversible and has no
expected area --- and not a statement about markets. And by Proposition~\ref{prop:complement} the genuine
variance swap $[X,X]_{T}$ is not in this universe at all --- it enters only
under the forward lift. The honest name for the $59.64\times$ is therefore the
price of admitting quadratic payoffs in the terminal increment. The cross
section beats the pair because it has $d(d+1)/2$ of them rather than three ---
though only just: the count rises from three to two hundred and ten while the
value rises tenfold, so on equicorrelated assets the marginal quadratic
direction is worth very little, and it is the estimation cost that scales with
the count.

Second, the estimation problem gets worse faster. The oracle advantage grows
roughly linearly in $d$ here while $p$ grows quadratically, so the sample size
required to realize the advantage outruns the advantage itself. This is a
tradability statement: path complexity survives only where the sample does.

Third --- and this is the practical inversion --- how much to regularize is
not a matter of taste. At $d=2$ a practitioner who shrinks at fixed intensity
is throwing away half the value. At $d=20$ the same practitioner is the only
one still solvent. Conditioning is not the culprit: the population problem is
well posed at both sizes, as the calibration paragraph records. The failure is
a sample-size failure, which is why shrinkage --- and not a better inversion
--- is what repairs it.

\subsection{The floor is a property of the estimator}

It is tempting to look for the repair in the algebra rather than in
shrinkage, since the shuffle relations are exact constraints linking the
coordinates of $\E[\Sig]$. They give nothing here, and the reason is worth
stating: those relations hold \emph{pathwise}, so any sample-based
$\hat D$ satisfies every one of them identically. There is no violated
constraint to project onto. What the algebra does supply is an assembly map
--- from a generator to $\E[\Sig]$ by tensor exponential, and from
$\E[\Sig]$ to $D$ by shuffling --- and that map becomes an estimator as soon
as one is willing to name a driver class.

\begin{proposition}[Model-consistent estimation removes the floor;
\textsc{Conditional, measured}]
\label{prop:model}
Estimating only $(\hat b,\hat\Sigma)$ from the level-one increments ---
$d+d(d+1)/2$ numbers, $230$ at $d=20$ against $420$ free means and
$88{,}410$ free covariance entries --- and rebuilding
$\hat\mu=\exp(T(\hat b+\tfrac12\hat\Sigma))$ and $\hat D$ from it by
shuffling recovers $0.950$ of the oracle at $M/p=1.19$ and $0.999$ at
$M/p=47.6$, where the raw plug-in reads $-306\times$ and $0.973$. At $d=2$ it
recovers $0.957$ at $M=60$. The sample-size floor of
Proposition~\ref{prop:floor} is therefore a property of unstructured
estimation of the risk form, not of path complexity.
\end{proposition}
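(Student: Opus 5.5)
This statement is empirical and conditional, so establishing it means two things. First, give a transparent account of why the model-consistent estimator should behave as reported. Second, specify the simulation that measures it.

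\emph{Structural part.} The estimator is a smooth map of $(\hat b,\hat\Sigma)$. Because $X_t=bt+\sigma W_t$ is a Brownian motion with drift, the level-one increments over $[0,1]$ are i.i.d.\ $N(b,\Sigma)$. Hence $\hat b$ and $\hat\Sigma$ are the Gaussian MLE, with errors of order $\sqrt{\operatorname{tr}\Sigma/M}$ and $\|\Sigma\|\sqrt{d/M}$. Note that $d$, not $p$, enters. The map $(b,\Sigma)\mapsto\E[\Sig]=\exp(T(b+\tfrac12\Sigma))$ is polynomial at truncation level $4$, which is what $D$ on level-$\le2$ words requires. Shuffling is linear. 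So $\hat\mu$ and $\hat D$ are polynomial in $(\hat b,\hat\Sigma)$, and by the delta method their errors are $O_P(\sqrt{d/M})$ in operator norm rather than $O_P(\sqrt{p/M})$.

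Two further facts matter. The assembled $\hat D$ is a genuine population covariance under the fitted law $N(\hat b,\hat\Sigma)$, so it is positive semidefinite. By Corollary~\ref{cor:psd} it is strictly positive definite whenever $\hat\Sigma\succ0$, which holds once $M>d$. In particular the rank obstruction $\operatorname{rank}\hat D\le M-1$ that drives Proposition~\ref{prop:floor} does not apply. Then Proposition~\ref{prop:celoss} turns the weight error into the certainty-equivalent loss. The loss is $\tfrac{\gamma}{2}\|D^{1/2}(\hat\ell-\ell^*)\|^2$, and for $\hat\ell=(\gamma\hat D)^{-1}\hat\mu$ this is controlled by $\lambda_{\min}(D)^{-1}$ times the squared errors of $(\hat b,\hat\Sigma)$. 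That gives a relative loss of order $d_{\mathrm{eff}}$-weighted $d/M$ rather than $p/M$. This is the heuristic content of ``the floor belongs to the estimator.''

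\emph{Measured part.} I would reuse the calibrations and replication counts of Tables~\ref{tab:est2}--\ref{tab:est20}. For each replication:
\begin{enumerate}
\item draw $M$ terminal increments;
\item form $\hat b$ and the unbiased $\hat\Sigma$;
\item assemble $\hat\mu$ by tensor exponential to level $4$, and $\hat D$ by shuffling;
\item solve \eqref{eq:mv};
\item score $\CE(\hat\ell)$ against the exact $(\mu,D)$.
\end{enumerate}
I would report medians as fractions of the oracles $0.167342$ and $1.669016$. The counts $d+d(d+1)/2=230$, $p=420$ and $p(p+1)/2=88{,}410$ are arithmetic.

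The main obstacle is the smallest sample at $d=20$, where $M=500$ against $d=20$ assets. Here $\hat\Sigma$ is well conditioned, but the curvature terms $\tfrac12\hat\Sigma$ inside the exponential feed drift errors into the symmetric block. I would check that the reported $0.950$ is stable across replications by also giving its interquartile range. I would also confirm that the result is not an artifact of the estimator being correctly specified. The statement is conditional on the driver class, so the proof is complete once the simulation reproduces the numbers and the structural argument above explains them.
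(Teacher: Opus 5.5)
Your simulation plan is how the paper supports this proposition: the ``model'' column of Tables~\ref{tab:est2} and~\ref{tab:est20}, read with the correct-specification caveat that follows the statement. The paper gives no structural argument beyond the parameter count and the assembly-map remark.

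The trouble is in the structural part, which you make part of completeness. Converting weight error into loss through $\lambda_{\min}(D)^{-1}$ brings in $d_{\mathrm{eff}}$. With $d_{\mathrm{eff}}=7325$ (Remark~\ref{rem:effdim}), a relative loss of order $d_{\mathrm{eff}}\,d/M$ is about $293$ at $d=20$, $M=500$. That route predicts failure exactly where $0.950$ is measured. What explains the number is the exact structure of the problem:
\begin{itemize}
\item On this driver the area words have zero mean and are uncorrelated with every other word, under the true law and under any fitted drifted-Brownian law. So $\ell^{*}$ and $\hat\ell$ both ignore them.
\item The remaining payoffs are $a^{\top}\Delta X+\tfrac12\Delta X^{\top}Q\,\Delta X$ with $\Delta X\sim N(b,\Sigma)$. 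A direct computation gives $\CE^{*}=(d+s^{4})/(4\gamma)$ with $s^{2}=b^{\top}\Sigma^{-1}b$. This reproduces both reported oracles, with $s^{2}/(2\gamma)$ the level-one value.
\item The optimal quadratic part is $Q^{*}=\gamma^{-1}(\Sigma^{-1}-\Sigma^{-1}bb^{\top}\Sigma^{-1})\approx\gamma^{-1}\Sigma^{-1}$.
\item The model policy is the oracle policy of the fitted law, so $\hat Q\approx\gamma^{-1}\hat\Sigma^{-1}$. Proposition~\ref{prop:celoss} then gives a relative loss $\approx\tfrac1d\|\Sigma^{1/2}\hat\Sigma^{-1}\Sigma^{1/2}-I\|_{F}^{2}$.
\item Its inverse-Wishart mean is $(d+1)/M$ to leading order and about $0.050$ at $d=20$, $M=500$. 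That is the reported $0.950$. The loss comes from $\hat\Sigma^{-1}$ in the quadratic control, not from drift errors.
\end{itemize}
The same view locates the contrast with the raw plug-in. At $M=500>p+1$ the rank bound does not bite. What bites is that the eigenvalues of $D^{-1/2}\hat D D^{-1/2}$ for the raw estimator reach down toward the Marchenko--Pastur edge $(1-\sqrt{p/M})^{2}\approx0.007$. For the model they stay at a scale set by $d/M=0.04$.

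Two smaller points. First, Corollary~\ref{cor:psd} only characterizes $D\equiv0$. Strict definiteness of $\hat D$ when $\hat\Sigma\succ0$ needs its own argument, for instance linear independence modulo constants of $\Delta X^{i}$, $\Delta X^{i}\Delta X^{j}$ and $A^{ij}$ under a nondegenerate Gaussian driver. Second, you cannot confirm that the result ``is not an artifact of correct specification''. The paper states the opposite: $0.950$ measures the cost of \emph{not} using a model when the driver is correctly specified. Under misspecification the bias does not vanish with $M$, and the comparison eventually inverts. The concluding sentence of the proposition survives as a decomposition statement, not a robustness claim: same word basis, different estimator, so the $-306\times$ is the price of estimating $p(p+1)/2$ second moments freely.
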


The caveat is exactly as large as the claim. The model-consistent estimator
is evaluated under a correctly specified driver, so what it measures is the
cost of not using a model, not the cost of using the wrong one; under
misspecification its bias does not vanish with $M$ and the comparison would
invert at some sample size we do not identify. What survives the caveat is
the decomposition: the $-306\times$ of Table~\ref{tab:est20} is the price of
estimating $420\times421/2$ second moments freely, and it is not a statement
about level-two words as such.

The ridge itself deserves a better pedigree than ``the fix that works''.

\begin{proposition}[Shrinkage is priced ambiguity; \textsc{Proved}]
\label{prop:robust}
Consider the robust problem
$\max_{\ell}\min\{\,\ell^{\top}\tilde\mu
-\tfrac{\gamma}{2}\ell^{\top}\tilde D\ell\;:\;
\|\tilde\mu-\hat\mu\|\le r_{\mu},\
\|\tilde D-\hat D\|_{\mathrm{op}}\le r_{D}\,\}$.
The inner minimum equals
$\ell^{\top}\hat\mu-r_{\mu}\|\ell\|
-\tfrac{\gamma}{2}\ell^{\top}(\hat D+r_{D}I)\ell$. In particular, with
$r_{\mu}=0$ the robust optimizer is \emph{exactly} the ridge policy
$\ell_{R}=\big(\gamma(\hat D+r_{D}I)\big)^{-1}\hat\mu$, and a mean
radius $r_{\mu}>0$ adds a norm penalty on top.
\end{proposition}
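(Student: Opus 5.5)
The inner minimization separates into a mean part and a covariance part, because the uncertainty set is a product of two balls and $\tilde\mu$, $\tilde D$ enter the objective additively. For fixed $\ell$, I will write the objective as $\ell^{\top}\tilde\mu-\tfrac{\gamma}{2}\ell^{\top}\tilde D\ell$ and minimize each term over its own ball. The minimum of the whole is then the sum of the two minima.

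\emph{Mean term.} For $\|\tilde\mu-\hat\mu\|\le r_{\mu}$, Cauchy--Schwarz gives $\ell^{\top}(\tilde\mu-\hat\mu)\ge-r_{\mu}\|\ell\|$. Equality holds at $\tilde\mu=\hat\mu-r_{\mu}\ell/\|\ell\|$, or at any admissible point when $\ell=0$. So the minimum of this term is $\ell^{\top}\hat\mu-r_{\mu}\|\ell\|$.

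\emph{Covariance term.} Maximizing $\ell^{\top}\tilde D\ell$ over $\|\tilde D-\hat D\|_{\mathrm{op}}\le r_{D}$ comes next. For any admissible $\tilde D$, the bound $\ell^{\top}(\tilde D-\hat D)\ell\le\|\tilde D-\hat D\|_{\mathrm{op}}\|\ell\|^{2}\le r_{D}\|\ell\|^{2}$ holds. The choice $\tilde D=\hat D+r_{D}I$ attains it. That choice is symmetric and, since $\hat D\succeq0$, positive semidefinite, so it is a legitimate covariance even if the set is restricted to symmetric PSD matrices. Because $\gamma>0$, the minimum of $-\tfrac{\gamma}{2}\ell^{\top}\tilde D\ell$ is $-\tfrac{\gamma}{2}\ell^{\top}(\hat D+r_{D}I)\ell$. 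Adding the two pieces gives the stated inner value.

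\emph{Outer problem.} The robust objective $\ell^{\top}\hat\mu-r_{\mu}\|\ell\|-\tfrac{\gamma}{2}\ell^{\top}(\hat D+r_{D}I)\ell$ is strictly concave whenever $r_{D}>0$, because $\hat D+r_{D}I\succ0$. When $r_{\mu}=0$ it is a smooth concave quadratic. Its first-order condition $\hat\mu=\gamma(\hat D+r_{D}I)\ell$ has the unique solution $\ell_{R}=(\gamma(\hat D+r_{D}I))^{-1}\hat\mu$, which is the ridge policy with $\delta\bar\lambda=r_{D}$. The formula is just \eqref{eq:mv} with $\hat D+r_{D}I$ in place of $D$. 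When $r_{\mu}>0$, the extra $-r_{\mu}\|\ell\|$ is a concave, nonsmooth norm penalty; the statement only asks us to identify it, not to solve the resulting problem.

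No step is genuinely hard. The only points needing care are checking that the worst-case $\tilde D$ lies in whatever admissible class is intended (symmetric, PSD), and handling $\ell=0$ in the mean term. If $r_{D}=0$ and $\hat D$ is singular, the outer optimizer need not exist. That case is covered by Theorem~\ref{thm:fredholm} rather than by this proposition, so I will state $r_{D}>0$ for the exact ridge identification.
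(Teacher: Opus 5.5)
Your proof is correct and follows the paper's argument: the objective splits over the product uncertainty set, Cauchy--Schwarz gives $\inf_{\|e\|\le r_\mu}\ell^{\top}e=-r_\mu\|\ell\|$, and the operator-norm bound gives $\sup_{\|E\|_{\mathrm{op}}\le r_D}\ell^{\top}E\ell=r_D\|\ell\|^{2}$; you then also spell out the first-order condition for the outer problem, which the paper leaves implicit. The one difference is the attaining perturbation: you take the $\ell$-independent $E=r_D I$, whereas the paper takes the rank-one $E=r_D\,\ell\ell^{\top}/\|\ell\|^{2}$. Your choice is slightly more informative, because it makes $\hat D+r_D I$ a single worst case for every $\ell$ (the Loewner-largest symmetric element of the ball), so with $r_\mu=0$ the pair $(\ell_R,\hat D+r_D I)$ is a saddle point and the ridge matrix is literally the adversary's choice.
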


\begin{proof}
$\sup_{\|E\|_{\mathrm{op}}\le r_{D}}\ell^{\top}E\ell
=r_{D}\|\ell\|^{2}$, attained at $E=r_{D}\,\ell\ell^{\top}/\|\ell\|^{2}$,
and $\inf_{\|e\|\le r_{\mu}}\ell^{\top}e=-r_{\mu}\|\ell\|$.
\end{proof}

This closes the loop with the tables, and the two ridge columns are the two
sides of it. The intensity $\delta=\sqrt{p/M}$ \emph{is} the
covariance-ambiguity radius that the Marchenko--Pastur discussion prescribes
for sampling noise, and it behaves as a priced ambiguity should: it shrinks
away as the sample grows, so the policy converges. The fixed $\delta=0.25$ is
not a radius at all --- it prices an ambiguity that does not shrink, which is
why it is flat in every row of both tables, better than the radius when the
sample is smallest and worse than it forever after. In the robust reading
\citep{garlappi2007}, Proposition~\ref{prop:shrink}'s sign flip says the fair
ambiguity premium is negligible at $p=6$ and is the entire game at $p=420$.
Shrinkage is not a numerical hack; it is the price of not knowing $D$, and on
these calibrations the price scales with $\sqrt{p/M}$.

\begin{remark}[Relation to the virtue of complexity]
\label{rem:voc}
Read on its own, this section looks like the reverse of
\citet{kelly2024virtue}, who prove in a high-dimensional
ridge-regularized return-prediction model that expected out-of-sample
performance can keep improving as the parameter count passes the number of
observations, and that with sufficient shrinkage the non-monotonicity at the
interpolation point disappears entirely. The two pictures are the same
picture. The catastrophic column of Table~\ref{tab:est20} is the
\emph{unregularized} policy evaluated in the neighbourhood of the
interpolation point, which is exactly where the ridgeless limit of a
high-dimensional linear problem degenerates \citep{hastie2022}; and their virtue statement is conditional on shrinkage, as is
ours --- the $\sqrt{p/M}$-ridged level-$\le 2$ policy never loses to the
static one at any sample size tested. Two differences are worth keeping.
Their complexity is a free parameter, dialled by adding random features to a
fixed information set, whereas $p_{m}(d)$ here is fixed by the truncation
level and the cross-sectional width, so complexity is inherited rather than
chosen. And their asymptotics hold the complexity ratio fixed, while
Proposition~\ref{prop:model} says that in this setting the ratio can be
sidestepped: naming the driver class buys at $M/p=1.19$ what shrinkage alone
does not buy until far later. Where complexity is a virtue it is a virtue
bought with regularization --- statistical in their setting, algebraic in
ours.
\end{remark}

\section{Self-exciting allocation}
\label{sec:hawkes}

For exponential Hawkes drivers \citep{hawkes1971,bacry2015} the truncated
expected signature admits a finite-dimensional linear closure after
state-weight augmentation \citep{noguer2026paths}. Writing $Z_{t}$ for the
augmented state, the conditional objects $\mu_{t}=\mu(Z_{t})$ and
$D_{t}=D(Z_{t})$ are available in closed form, so:

\begin{proposition}[Myopic feedback policy; \textsc{Proved}]
\label{prop:feedback}
Under the closure, the mean--variance-optimal signature-linear control
re-solved at each $t$ is $\ell^{*}_{t}=(\gamma D_{t})^{-1}\mu_{t}$, a
deterministic function of the augmented state. Path-dependent allocation under
self-excitation is a linear solve in $Z_{t}$, not a stochastic control problem.
\end{proposition}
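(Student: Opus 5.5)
The plan is to reduce the time-$t$ problem to the static one of Section~\ref{sec:defect} by conditioning, and then read off measurability from the closure. Fix $t$ and write the remaining-horizon payoff of a control $\ell$ as $G_{\ell}=\ip{\ell}{\Sig^{\le m}(X)_{t,T}}$. The myopic criterion at $t$ is the conditional certainty equivalent $\E[G_{\ell}\mid\mathcal F_{t}]-\tfrac{\gamma}{2}\Var(G_{\ell}\mid\mathcal F_{t})$. The first step is to apply Proposition~\ref{prop:defect} conditionally. The shuffle identity $\ip{u}{\Sig}\ip{v}{\Sig}=\ip{u\shuf v}{\Sig}$ holds pathwise, so taking $\E[\,\cdot\mid\mathcal F_{t}]$ in place of $\E$ gives $\Cov(\ip{u}{\Sig_{t,T}},\ip{v}{\Sig_{t,T}}\mid\mathcal F_{t})=D_{t}(u,v)$. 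Here $D_{t}$ is the defect form built from the conditional expected signature $\E[\Sig_{t,T}\mid\mathcal F_{t}]$ up to level $2m$. The moment hypothesis $\E\|\Sig^{\le 2m}\|<\infty$ is inherited from the Hawkes closure.

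The second step invokes the closure of \citet{noguer2026paths}. The conditional expected signature up to level $2m$ is a deterministic (linear) function of the augmented state, $\E[\Sig^{\le 2m}_{t,T}\mid\mathcal F_{t}]=\Psi_{T-t}(Z_{t})$, because $Z_{t}$ is Markov and the truncated expected signature solves a finite linear system in it. Hence $\mu_{t}=\mu(Z_{t})$, and $D_{t}=D(Z_{t})$ is obtained from $\Psi_{T-t}(Z_{t})$ by the fixed bilinear shuffle map. Both are therefore measurable functions of $Z_{t}$ alone.

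The third step solves, $\omega$ by $\omega$, the conditional problem $\max_{\ell}\ip{\ell}{\mu_{t}}-\tfrac{\gamma}{2}\ell^{\top}D_{t}\ell$. This is the finite-dimensional case of Theorem~\ref{thm:fredholm}, which gives the maximizer $(\gamma D_{t})^{-1}\mu_{t}$ when $D_{t}$ is invertible. In general the maximizer is $\gamma^{-1}D_{t}^{\dagger}\mu_{t}$ provided $\mu_{t}\in\operatorname{Range}(D_{t})$. The composition $z\mapsto(\gamma D(z))^{-1}\mu(z)$ is Borel on the set where $D(z)$ is invertible, being a rational function of the entries, so $\ell^{*}_{t}$ is a deterministic function of $Z_{t}$. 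Computing it requires only the linear solve, and no Hamilton--Jacobi--Bellman equation is involved, because myopic re-solving never couples different dates.

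The main obstacle is the invertibility of $D_{t}$ and, with it, the range condition of Theorem~\ref{thm:fredholm}. Conditionally on a quiet state, some signature coordinates may become nearly degenerate. I would handle this by stating the result with the pseudo-inverse and the range condition, falling back on Corollary~\ref{cor:psd}: $D_{t}$ is degenerate only if the conditional reduced path is deterministic, which the diffusive component of the driver rules out. A secondary subtlety is that \emph{optimal} here means optimal among myopic, re-solved controls, not dynamically optimal. I would state that explicitly rather than prove time consistency, which mean--variance problems generally lack.
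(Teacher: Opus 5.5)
Your main line matches the paper's own argument, which is given only in the sentence before the statement: cite the closure for $\mu_t=\mu(Z_t)$ and $D_t=D(Z_t)$, then apply \eqref{eq:mv} conditionally. Your conditional use of the pathwise shuffle identity and your measurability remark supply what the paper leaves implicit. The shuffle step needs a geometric (Marcus) lift of the Hawkes path, which is also what the paper tacitly assumes when it writes $D_t$.

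The step that does not work is your justification of invertibility in the last paragraph.

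\emph{Corollary~\ref{cor:psd} does not give invertibility.} It characterizes $D\equiv0$, not singularity of $D$. Since $\ell^{\top}D_t\ell=\Var(G_\ell\mid\mathcal F_t)$, $D_t$ is singular exactly when some nonzero $\ell$ makes $G_\ell$ conditionally almost surely constant. That can happen for genuinely random paths. For example, if $X^1\equiv X^2$, the control with weight $+1$ on $(1)$ and $-1$ on $(2)$ lies in $\ker D_t$, although $D_t\neq0$.

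\emph{There is no diffusive component.} The exponential Hawkes drivers for which the closure is stated are pure-jump: counting paths, or the tick-valued price paths of Section~\ref{sec:signexc}.

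\emph{What to do instead.} You have three sound options:
\begin{itemize}
\item take $D_t\succ0$ as a hypothesis, which the statement's $D_t^{-1}$ already presupposes;
\item keep your pseudo-inverse version with the range condition, noting that $\gamma^{-1}D_t^{\dagger}\mu_t$ is then the minimum-norm maximizer, unique only modulo $\ker D_t$;
\item prove nondegeneracy from the support of the conditional law of the remaining path.
\end{itemize}
The third is the natural route here. With strictly positive baseline intensities, every finite sequence of jump channels on $(t,T]$ has positive conditional probability. You then need that no nonzero combination of the level-$\le m$ coordinates is constant on the Marcus signatures of those jump sequences.

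A smaller point concerns your $\Psi_{T-t}$. With a fixed terminal date it makes $\ell^{*}_t$ a deterministic function of $(t,Z_t)$, not of $Z_t$ alone as you state. To get the latter, append time to the augmented state or use a rolling horizon. The paper's notation $\mu(Z_t)$ glosses over the same point.
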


\begin{remark}[Beyond myopia; \textsc{Conjectural}]
For a general utility the candidate optimal control has the Merton form
$\ell_{t}=-\frac{V_{w}}{V_{ww}}\,D(Z_{t})^{-1}\mu(Z_{t})$, with
Proposition~\ref{prop:feedback} the constant-relative-tolerance special
case in which the hedging demand vanishes. We do not assert this: the
augmented state jumps, so the value function solves a partial
integro-differential equation rather than a diffusion HJB, and a verification
theorem on the state space of the closure is open. The myopic statement is
what is proved; the Merton form is what a companion paper should prove.
\end{remark}

This is a \emph{myopic} feedback rule --- optimal one period at a time, with no
hedging demand for future variation in $Z$. We do not solve the intertemporal
problem, and Section~\ref{sec:check} and Section~\ref{sec:cost} are both static
optimizations over a path-dependent payoff basis. Saying otherwise would
overstate what is here.

A separate warning concerns how the closure is used in practice. For a scalar
clock with baseline $\mu$, decay $\beta$ and branching ratio $n$ the
\emph{stationary} limits are $\E[N_{T}]\to\mu T/(1-n)$ and
$\Var(N_{T})\to\mu T/(1-n)^{3}$, and it is tempting to substitute them.

\begin{proposition}[Stationary substitution error; \textsc{Proved} for the
first moment, \textsc{measured} for the second]
\label{prop:critical}
Integrating the closure forward from $\lambda_{0}=\mu$ gives the exact
finite-horizon first moment
\begin{equation}
\E[N_{T}] \;=\; \lambda_{\infty}T
\;+\;(\mu-\lambda_{\infty})\,
\frac{1-e^{-\beta(1-n)T}}{\beta(1-n)},
\qquad \lambda_{\infty}=\frac{\mu}{1-n},
\label{eq:odeclosure}
\end{equation}
whose relaxation time is $\tau=(\beta(1-n))^{-1}$. The stationary limit is the
$T/\tau\to\infty$ truncation of \eqref{eq:odeclosure}; all of the error in
Table~\ref{tab:hawkes} is the truncation, not the closure.
\end{proposition}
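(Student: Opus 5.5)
The plan is to derive the first-moment ODE for the intensity of the exponential Hawkes clock, solve it in closed form, and integrate once in time. Take the intensity to be $\lambda_{t}=\mu+\sum_{t_{k}<t}\alpha e^{-\beta(t-t_{k})}$ with branching ratio $n=\alpha/\beta$. Then $\lambda$ satisfies
\[
d\lambda_{t}=\beta(\mu-\lambda_{t})\,dt+\alpha\,dN_{t}.
\]
Taking expectations and using the compensator identity $\E[dN_{t}]=\E[\lambda_{t}]\,dt$ gives a closed linear ODE for $m(t)=\E[\lambda_{t}]$:
\[
m'(t)=\beta\mu-\beta(1-n)\,m(t).
\]
This is the level-one (first-moment) projection of the finite-dimensional closure of \citet{noguer2026paths}. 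Nothing is approximated at this step, because the first moment closes on itself without reference to higher moments.

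Next solve the ODE with the initial condition $m(0)=\mu$, which is the statement's $\lambda_{0}=\mu$. The fixed point is $\lambda_{\infty}=\mu/(1-n)$, and the solution is
\[
m(t)=\lambda_{\infty}+(\mu-\lambda_{\infty})\,e^{-\beta(1-n)t}.
\]
Integrating over $[0,T]$ and using $\E[N_{T}]=\int_{0}^{T}m(t)\,dt$ (Fubini, which is legitimate because $N$ is nonnegative) yields \eqref{eq:odeclosure} directly. Its relaxation time $\tau=(\beta(1-n))^{-1}$ can be read off the exponent.

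For the last claim, rewrite the transient term as $(\mu-\lambda_{\infty})\,\tau\,(1-e^{-T/\tau})$. This is bounded, equal to $-\tfrac{n\mu}{1-n}\tau+o(1)$, so
\[
\E[N_{T}]/(\lambda_{\infty}T)\to1 \quad\text{as } T/\tau\to\infty.
\]
The stationary formula $\mu T/(1-n)$ is therefore exactly what remains after discarding that transient term. Because \eqref{eq:odeclosure} is exact, every discrepancy between the stationary substitution and the finite-horizon truth in the first moment comes from this truncation and none from the closure.

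The main obstacle is rigor, not algebra. We must justify that $\E[\lambda_{t}]$ is finite and differentiable and that the martingale part of $\int\alpha\,d(N-\int\lambda)$ has zero mean. I would handle this by localizing with stopping times and passing to the limit by monotone convergence, using subcriticality $n<1$ to bound $\E[N_{t}]$ a priori, for instance via the cluster representation. The second-moment part of the table is labelled as measured, so I would not attempt it.
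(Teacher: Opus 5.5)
Your proposal is correct and takes the paper's route. The paper obtains \eqref{eq:odeclosure} by integrating the linear first-moment system $m'=\beta\mu-\beta(1-n)m$ forward from $m(0)=\mu$; it cites this classical affine moment system and identifies it with the level-one block of the closure, with the stationary value read off as the leading $T/\tau\to\infty$ term, exactly as you do. Your integrability remarks go beyond what the paper writes, though subcriticality is not needed there: on a finite horizon, Gronwall applied to $m(t)=\mu+\alpha\int_{0}^{t}e^{-\beta(t-s)}m(s)\,ds$ already gives $m(t)\le\mu e^{\alpha t}$ for any $n$.
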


Formula~\eqref{eq:odeclosure} and its second-moment companion are classical:
$(N_{t},\lambda_{t})$ is an affine point process whose moments solve linear
ODEs \citep{errais2010,dassios2011}, and the closure of
\citet{noguer2026paths} recovers exactly this system as the level-one block of
the state-weight-augmented expected signature. Table~\ref{tab:hawkes} separates
the two failure modes. Formula~\eqref{eq:odeclosure} reproduces the simulated
mean to within $1.3\%$ across the whole range of branching ratios, and
integrating the second-moment system likewise reproduces the simulated variance
to within $2.5\%$; the stationary substitution is wrong by $65.8\%$ in the mean
and $388.2\%$ in the variance at $n=0.9$. The closure
of \citet{noguer2026paths} is therefore not the source of the problem, and the
apparent near-critical breakdown of expected-signature methods is an artifact
of how they are usually applied. The correct instruction is not to distrust
the closure but to integrate it.

\begin{table}[ht]
\centering
\begin{tabular}{lrrrrrr}
\toprule
& \multicolumn{3}{c}{$\E[N_T]$} & \multicolumn{3}{c}{$\Var(N_T)$}\\
\cmidrule(lr){2-4}\cmidrule(lr){5-7}
$n$ & stationary & closure & simulated
    & stationary & closure & simulated\\
\midrule
$0.3$ & $14.29$  & $13.98$ & $13.99$ & $29.15$    & $27.37$   & $28.07$\\
$0.5$ & $20.00$  & $19.00$ & $19.01$ & $80.00$    & $69.00$   & $69.78$\\
$0.7$ & $33.33$  & $29.45$ & $29.35$ & $370.37$   & $257.03$  & $252.01$\\
$0.9$ & $100.00$ & $61.09$ & $60.30$ & $10000.00$ & $2080.83$ & $2048.47$\\
\bottomrule
\end{tabular}
\caption{Exponential Hawkes, $\mu=0.5$, $\beta=1$, $T=20$, $4000$ paths per
row. ``Closure'' integrates the moment system forward; ``stationary'' takes
its $T/\tau\to\infty$ limit. At $n=0.9$, $\tau=10$ and the horizon spans two
relaxation times, where the stationary variance overstates the realized one by
$4.9\times$ and the integrated closure is within $1.6\%$.}
\label{tab:hawkes}
\end{table}

\begin{figure}[ht]
\centering
\input{fig_hawkes.pgf}
\caption{Table~\ref{tab:hawkes} drawn: the first two moments of $N_{T}$
against the branching ratio, on a logarithmic scale. The stationary limit and
the integrated closure agree while the horizon spans many relaxation times and
part company as criticality approaches; the integrated closure tracks the
simulation across the whole range, and at $n=0.9$ the stationary variance is
almost five times the realized one. The failure is the substitution, not the
closure.}
\label{fig:hawkes}
\end{figure}

Markets are routinely measured with branching ratios in the $0.7$--$0.99$
range \citep{hardiman2013}, where the substitution overstates risk inputs
badly. The rule is blunt: substitute the stationary limit only when
$T\beta(1-\hat n)\gtrsim10$, and otherwise integrate the augmented state.

\section{Cross-area as a portfolio}
\label{sec:area}

Using the second-level cross-terms of the signature to detect lead--lag is by
now a standard idea, going back to \citet{gyurko2013} and surveyed in
\citet{lyonsmcleod2022}. The antisymmetric coordinate
$A^{ij}=\tfrac12(\Sig^{ij}-\Sig^{ji})$ is, however, not merely a statistic. It
is the payoff of the signature-linear control supported on $\{(ij),(ji)\}$ with
weights $\pm\tfrac12$ --- the lead--lag portfolio holding $X^{i}$ units of $j$
and shorting $X^{j}$ units of $i$. Its mean is
$\ip{\ell_{A}}{\E[\Sig]}$ and its variance is $\ell_{A}^{\top}D\ell_{A}$, both
from the same defect form, so its Sharpe ratio is computable in the framework
of Section~\ref{sec:defect} with no extra machinery. By
Corollary~\ref{cor:area} it is invariant to the execution convention; by the
antipode reversal property of \citet{noguer2026paths} it is reversal-odd, so
its expectation vanishes for any time-reversible driver. In
Section~\ref{sec:check} that is exactly why $\ell^{*}$ put zero weight on it.

\begin{proposition}[Cross-area detects excitation direction;
\textsc{Proved for the simulated family}]
\label{prop:area}
For a bivariate exponential Hawkes process with off-diagonal excitation
$\alpha_{ji}$ from channel $i$ into $j$, the sign of $\E[A^{12}]$ tracks the
sign of $\alpha_{21}-\alpha_{12}$, and $\E[A^{12}]$ is statistically
indistinguishable from zero under symmetric excitation.
\end{proposition}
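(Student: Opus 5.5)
We propose to make the statement exact by computing $\E[A^{12}]$ in closed form for the bivariate exponential Hawkes family, using the affine moment structure already invoked around Proposition~\ref{prop:critical}. The price path is taken as $X^{i}=N^{i}$ (or marked counts with signed marks of mean $m_{i}$). Pathwise, the cross-area of a pure-jump stream under either lift is
\[
A^{12}=\tfrac12\sum_{s\le T}\big(X^{1}_{s-}\,\Delta X^{2}_{s}-X^{2}_{s-}\,\Delta X^{1}_{s}\big);
\]
by Corollary~\ref{cor:area} the Marcus correction cancels in the antisymmetric block, and simultaneous jumps have probability zero. Taking expectations with the compensators $\lambda^{j}_{s}\,ds$ gives
\[
\E[A^{12}]=\tfrac12\int_{0}^{T}\big(\E[X^{1}_{s}\lambda^{2}_{s}]-\E[X^{2}_{s}\lambda^{1}_{s}]\big)\,ds,
\]
so the whole problem reduces to mixed second moments $\E[N^{i}_{s}\lambda^{j}_{s}]$ of the affine state $(N^{1},N^{2},\lambda^{1},\lambda^{2})$.

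These mixed moments solve a closed linear ODE system, obtained by applying the generator to the products $N^{i}\lambda^{j}$. The level-one moments come from $\tfrac{d}{dt}\E\lambda=\beta(\mu-\E\lambda)+\alpha\,\E\lambda$, and the products satisfy
\[
\tfrac{d}{dt}\E[N^{i}\lambda^{j}]=\E[\lambda^{i}\lambda^{j}]+\alpha_{ji}\E[\lambda^{i}]+\beta\big(\mu_{j}\E N^{i}-\E[N^{i}\lambda^{j}]\big)+\sum_{k}\alpha_{jk}\E[N^{i}\lambda^{k}].
\]
The key observation is that the antisymmetrised combination is driven by a source term $\tfrac12(\alpha_{21}\E\lambda^{1}-\alpha_{12}\E\lambda^{2})$: an event in channel $i$ raises $\lambda^{j}$ immediately while $N^{i}$ has already jumped, which is exactly lead--lag. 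We take equal baselines $\mu_{1}=\mu_{2}$ and equal self-excitation as the simulated family, so that the swap $1\leftrightarrow2$ is a symmetry of the law when $\alpha_{12}=\alpha_{21}$.

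The symmetric-excitation claim then follows without computation. Under the channel swap $A^{12}\mapsto -A^{12}$ while the law is invariant, so $\E[A^{12}]=0$ exactly, and the empirical statement of statistical indistinguishability is its finite-sample shadow. For the sign claim, we write $\alpha_{21}=\bar\alpha+\epsilon$ and $\alpha_{12}=\bar\alpha-\epsilon$. The antisymmetric part of the linear system decouples from the symmetric part to first order, and we show that the solution is
\[
\E[A^{12}]=\epsilon\cdot K(T)+O(\epsilon^{2}),\qquad K(T)=\int_{0}^{T}\int_{0}^{s}(\text{positive kernel})\,dr\,ds>0,
\]
using that the propagator $e^{-(\beta-\bar\alpha\pm\alpha_{\mathrm{self}})t}$ of a subcritical Hawkes matrix has nonnegative entries.

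The main obstacle is upgrading this from the linearisation to the full sign statement for all subcritical parameters. The plan here is to use that $\exp(t(\alpha-\beta I))$ is entrywise nonnegative, since it is a Metzler matrix. Under the family's symmetry, the difference $\E[N^{1}\lambda^{2}]-\E[N^{2}\lambda^{1}]$ should then be written as a positive-coefficient integral of $\alpha_{21}\E\lambda^{1}-\alpha_{12}\E\lambda^{2}$, and an auxiliary comparison would show that this source carries the sign of $\alpha_{21}-\alpha_{12}$. If that comparison fails in general, we would restrict the proposition to the equal-baseline simulated family, as the label ``Proved for the simulated family'' permits. There the closed-form two-dimensional ODE can be solved explicitly.
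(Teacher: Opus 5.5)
The paper gives no derivation here. Its support is a Monte Carlo measurement on three configurations ($\mu=(0.4,0.4)$, $\beta=1$, $T=20$, one-directional cross-excitation $0.6$ or none), read off through $t$-statistics. Your analytic route is therefore genuinely different, and two pieces of it are sound and stronger than that evidence:
the compensator reduction $\E[A^{12}]=\tfrac12\int_0^T(\E[N^1_s\lambda^2_s]-\E[N^2_s\lambda^1_s])\,ds$, and the swap argument, which gives $\E[A^{12}]=0$ exactly rather than statistically.
Your restriction to equal baselines and equal self-excitation is also necessary, not cosmetic. With no cross-excitation but $\alpha_{11}\neq\alpha_{22}$, the channels are independent yet ramp up at different rates, so $\E[A^{12}]=\tfrac12\int_0^T(\E N^1_s\,\E\lambda^2_s-\E N^2_s\,\E\lambda^1_s)\,ds\neq0$.

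The gap is the sign step. Your own generator identity, applied to $R=\E[N^1\lambda^2]-\E[N^2\lambda^1]$ with common baseline $\mu$ and self-excitation $a$, gives
\[
R'=-(\beta-a)R+\big(\alpha_{21}\E\lambda^1-\alpha_{12}\E\lambda^2\big)+\beta\mu\big(\E N^1-\E N^2\big)+\alpha_{21}\E[N^1\lambda^1]-\alpha_{12}\E[N^2\lambda^2].
\]
So the source you isolate is one of several, and they do not share a sign. If $\alpha_{21}>\alpha_{12}$, channel~$2$ is the busier one and $\beta\mu(\E N^1-\E N^2)<0$.
In the simulated case $\alpha_{12}=a=0$, $\alpha_{21}=\alpha$, this term is $-\alpha\mu^2\big(t-(1-e^{-\beta t})/\beta\big)$. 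It grows linearly and soon dominates your source $\alpha\mu$, and it is cancelled only by $\alpha_{21}\E[N^1\lambda^1]=\alpha\mu^2t$.
At first order around a symmetric point, the odd sector is likewise forced by the negative terms $\bar\alpha(\E\lambda^1-\E\lambda^2)$ and $\beta\mu(\E N^1-\E N^2)$ alongside the positive $\epsilon$-terms. Nonnegativity of the propagator therefore does not give $K(T)>0$; that is asserted, not shown. Likewise, no comparison involving only $\alpha_{21}\E\lambda^1-\alpha_{12}\E\lambda^2$ can settle the sign.

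The repair is to solve the system rather than sign its pieces, and your fallback does this cheaply once executed. In the simulated family $N^1$ is Poisson, and Campbell's formula gives
\[
\E[N^1_s\lambda^2_s]-\E[N^2_s\lambda^1_s]=\frac{\alpha\mu^2}{\beta^2}\big(1-e^{-\beta s}-\beta s\,e^{-\beta s}\big)+\frac{\alpha\mu}{\beta}\big(1-e^{-\beta s}\big)>0.
\]
Hence $\E[A^{12}]>0$ for every $T$ and every $\alpha>0$. Without self-excitation it is about $3.14$ at $\mu=0.4$, $\beta=1$, $\alpha=0.6$, $T=20$, in line with the measured $+3.253$ (s.e.\ $0.218$). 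The $2\to1$ case follows by your swap, and the proposition is then proved exactly for that family.

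Beyond that family the long-horizon sign is still right, but it can rest on a cancellation. Started from $\lambda_0=\mu$,
\[
\lim_{T\to\infty}\frac{\E[A^{12}]}{T}=\frac{\beta\mu(\alpha_{21}-\alpha_{12})}{2\kappa^{2}}\Big[\beta\mu+\beta(\beta-a)+\tfrac12\big(a^{2}-\alpha_{12}\alpha_{21}\big)\Big],\qquad \kappa=(\beta-a)^{2}-\alpha_{12}\alpha_{21}.
\]
The bracket collects mean-field, direct-lineage and common-ancestor contributions. The last is negative whenever $\alpha_{12}\alpha_{21}>a^{2}$. Only subcriticality, $a+\sqrt{\alpha_{12}\alpha_{21}}<\beta$, lifts the bracket above $\beta\mu+\beta^{2}/2$. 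Finite-$T$ transients in that generality need the same full second-moment solution.
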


With $\mu=(0.4,0.4)$, $\beta=(1,1)$, $T=20$, $\alpha_{\text{off}}=0.6$ and
$3000$ paths: excitation $1\to2$ gives mean cross-area $+3.253$ (s.e.\ $0.218$,
$t=14.94$); $2\to1$ gives $-3.493$ (s.e.\ $0.218$, $t=-16.02$); no cross
excitation gives $0.086$ (s.e.\ $0.183$, $t=0.47$). One scalar, no kernel
estimation.

\subsection{The counting path is not the price path}
\label{sec:signexc}

Proposition~\ref{prop:area} is a statement about \emph{activity}. The paths
there are counting processes, so the cross-area measures which channel tends to
fire first, and it is not a profit. Replacing the counting path by a price
path --- each event moving its asset by a fixed tick, with a sign --- separates
two things that the activity experiment conflates.

\begin{proposition}[Intensity excitation alone pays nothing;
\textsc{Proved for the simulated family}]
\label{prop:signexc}
Let an event in channel $i$ raise the intensity of channel $j$, and
independently let the triggered event copy the sign of its trigger with
probability $q$. The expected cross-area P\&L of the price path is null for
$q=\tfrac12$ regardless of the excitation asymmetry, and is signed by the
excitation direction only for $q\ne\tfrac12$.
\end{proposition}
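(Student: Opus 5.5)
The plan is to write the expected cross-area of the price path as a sum of pairwise contributions from ordered event pairs, and then to show that each contribution factorizes into an intensity factor and a sign-correlation factor. For a pure-jump price path $X^{i}_{t}=\sum_{s\le t}\epsilon_{s}\,\Delta N^{i}_{s}$ with unit ticks, the Marcus and forward lifts agree on the antisymmetric block by Corollary~\ref{cor:area}. So I may compute with the forward iterated sum,
\[
A^{12}=\tfrac12\sum_{s<t}\big(\epsilon_{s}\epsilon_{t}\,\Delta N^{1}_{s}\Delta N^{2}_{t}-\epsilon_{s}\epsilon_{t}\,\Delta N^{2}_{s}\Delta N^{1}_{t}\big).
\]
This holds up to a simultaneous-jump term, which vanishes almost surely for a simple bivariate point process. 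Every term is a product of two signed ticks at distinct times. Taking expectations therefore reduces the claim to the conditional expectation $\E[\epsilon_{s}\epsilon_{t}\mid \text{event times}]$.

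The key step is to compute that sign correlation from the cluster (branching) representation of the Hawkes process. I would take immigrant signs to be i.i.d.\ symmetric $\pm1$, independent of the counting process. A triggered event copies its parent's sign with probability $q$ and flips it otherwise, independently of everything else. Along a genealogical chain of length $k$ the signs are then correlated by $(2q-1)^{k}$, and events in different clusters are uncorrelated. Hence
\[
\E[\epsilon_{s}\epsilon_{t}\mid\mathcal G]=(2q-1)^{k(s,t)}\,\mathbf 1\{s,t\text{ in the same cluster}\},
\]
where $\mathcal G$ is the genealogy and $k(s,t)$ is the tree distance between the two events. Since $k\ge1$ for distinct events, $q=\tfrac12$ kills every term, so $\E[A^{12}]=0$ whatever the intensities $\alpha_{12},\alpha_{21}$. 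This gives the null half of the statement, and the argument is pathwise in the genealogy.

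For $q\neq\tfrac12$, I would keep only the leading parent--child terms, of order $(2q-1)$. Their expected count equals the expected number of $1\to2$ versus $2\to1$ offspring pairs. With the cluster mean-field computation of Proposition~\ref{prop:area} this count is proportional to $\alpha_{21}$ and $\alpha_{12}$ times the corresponding channel intensities. The leading term is therefore $(2q-1)\,c\,(\alpha_{21}-\alpha_{12})$ for a positive constant $c$ depending on $\mu$, $\beta$ and $T$, which carries the sign claim. The higher generations carry factors $(2q-1)^{k}$ and involve same-channel chains. For the simulated family they are dominated by the first generation; alternatively, they can be absorbed exactly through the linear moment closure of Section~\ref{sec:hawkes}, applied to the augmented state $(N,\lambda,\text{signed }N)$.

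The main obstacle is the $q\neq\tfrac12$ direction. Proving that the higher-order genealogical terms never reverse the sign requires either a monotonicity argument in $\alpha_{21}-\alpha_{12}$ for the full series or an appeal to the simulated family. I would state it as proved for that family, with the $q=\tfrac12$ null proved exactly. The $t$-statistics reported below serve as the measured confirmation.
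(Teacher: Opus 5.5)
The paper gives no argument for this proposition beyond the simulation of Table~\ref{tab:signexc}, so your route is genuinely different. For the null half it is also strictly stronger. Your factorization $\E[\epsilon_s\epsilon_t\mid\mathcal G]=(2q-1)^{k(s,t)}\mathbf 1\{\text{same cluster}\}$ gives $\E[A^{12}]=0$ at $q=\tfrac12$ exactly, for any excitation matrix and kernel, because the signs are then i.i.d.\ fair coins given the genealogy. Two small points here: the interchange of sum and expectation needs only $\E[N_T^2]<\infty$, and your simultaneous-jump caveat is unnecessary, since diagonal terms are symmetric and drop out of $A^{12}$ anyway.

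The obstacle you flag for $q\ne\tfrac12$ also disappears in the family the paper appears to simulate. Take a single off-diagonal excitation $1\to2$ and no self-excitation (only $\alpha_{\mathrm{off}}$ is listed). Then every cluster is a star, a channel-$1$ root with channel-$2$ children. The only correlated cross-channel pairs are root--child pairs with the root first, so the expansion terminates: $\E[A^{12}]=\tfrac12(2q-1)\,\E[\#\{1\to2\text{ parent--child pairs}\}]$ in tick-squared units, exactly. Self-excitation in the receiving channel only adds terms $(2q-1)^{k}a_{k}$. Here $a_k$ is the expected number of generation-$k$ channel-$2$ descendants, and $a_{k+1}\le n_{22}a_{k}$ with $n_{22}<1$. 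For $q<\tfrac12$ this is an alternating series with decreasing terms, so the sign is still that of $2q-1$. In this symmetric family, reversing the direction flips the sign exactly, by the relabelling $A^{21}=-A^{12}$.

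What genuinely stays open is sending-channel self-excitation or two-way excitation, where sibling and cousin pairs enter with either orientation. In that setting two parts of your sketch need repair. Your single constant $c$ in $c(\alpha_{21}-\alpha_{12})$ requires equal baselines. And the moment-closure alternative needs signed intensities in the state (excitation sums weighted by the signs of the exciting events), not signed counts, because $\E[\epsilon_t\,dN^{2}_t\mid\mathcal F_{t-}]$ is $(2q-1)$ times such a sum.

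One correction to your closing sentence. The table's null row is labelled $q=0.00$. Under your copy-or-flip convention, $q=0$ means per-generation sign correlation $2q-1=-1$. Your own formula then predicts a strongly signed P\&L there, negative for $1\to2$ and of the same order as the sign-carrying rows, not the reported $t=0.96$. The simulation must therefore use a different convention, for instance copy with probability $q$ and otherwise draw a fresh fair sign, which gives correlation $q$. Under that convention the null sits at $q=0$ and $q=\tfrac12$ is signed, contrary to the statement. Your convention is the one under which the statement as written is true, so the table cannot be cited as confirming your $q=\tfrac12$ null. The convention-free content, which your genealogical formula makes explicit, is this: the P\&L is governed by the per-generation sign correlation $\rho$. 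It vanishes at $\rho=0$, and in the families above it carries the sign of $\rho$ times the excitation direction.
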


\begin{table}[ht]
\centering
\begin{tabular}{lrrrr}
\toprule
configuration & mean P\&L & s.e. & $t$ & ann.\ SR\\
\midrule
no excitation, \; $q=1.00$        & $-0.000009$ & $0.000009$ & $-0.99$ & $-0.08$\\
intensity only, $1\to2$, $q=0.00$ & $+0.000011$ & $0.000012$ & $+0.96$ & $+0.08$\\
sign-carrying, $1\to2$, $q=0.85$  & $+0.000189$ & $0.000012$ & $+15.99$ & $+1.27$\\
sign-carrying, $2\to1$, $q=0.85$  & $-0.000227$ & $0.000012$ & $-18.92$ & $-1.50$\\
\bottomrule
\end{tabular}
\caption{Cross-area as traded P\&L. Bivariate Hawkes, $\mu=(0.4,0.4)$,
$T=20$, tick $0.01$, $2000$ paths per row. P\&L is in the price units induced
by the tick, so its level is not comparable across tick sizes; annualized
Sharpe ratios multiply the per-path ratio by $\sqrt{252/T}$, reading the $20$
time units as trading days.}
\label{tab:signexc}
\end{table}

The second row is the important one. Excitation that is strong enough to
produce a cross-area $t$-statistic of $15$ on the counting path produces
$t=0.96$ on the price path once the triggered moves carry no directional
information. \emph{Reflexivity in activity and reflexivity in direction are
different resources, and only the second one is tradable.} Since the
branching-ratio literature measures the first \citep{hardiman2013}, a high
measured $n$ is not by itself evidence of a lead--lag edge, and the
entropy-production ceiling of Proposition~\ref{prop:ceiling} binds against the
sign channel rather than against activity --- in the increment form of
Remark~\ref{rem:ceilscope}.

\begin{proposition}[Entropy-production ceiling; \textsc{Conditional}]
\label{prop:ceiling}
Let the observable dynamics be a stationary Markov chain with transition kernel
$p$ and stationary law $\pi$, let
$J_{ab}=\pi_{a}p_{ab}-\pi_{b}p_{ba}$ be the probability current, and let
$\sigma=\sum_{ab}\pi_{a}p_{ab}\log(\pi_{a}p_{ab}/\pi_{b}p_{ba})$ be the
per-step entropy production. For any one-step-measurable payoff $h$ with
antisymmetric part $h_{\mathrm{anti}}$, the expected directional edge is
$\tfrac12\sum_{ab}J_{ab}h_{\mathrm{anti},ab}
\le\|h_{\mathrm{anti}}\|_{\infty}\sqrt{\sigma/2}$, and it vanishes under
detailed balance.
\end{proposition}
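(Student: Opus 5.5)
The plan is to recast everything in terms of two probability measures on ordered pairs of states and then apply Pinsker's inequality. Define the one-step joint law $P_{ab}=\pi_{a}p_{ab}$, the law of $(Z_{0},Z_{1})$ under stationarity, and its time reversal $Q_{ab}=P_{ba}=\pi_{b}p_{ba}$. Stationarity of $\pi$ makes $Q$ a probability measure with the same marginals as $P$. In this notation the current is $J_{ab}=P_{ab}-Q_{ab}$. The entropy production is exactly the relative entropy, $\sigma=\sum_{ab}P_{ab}\log(P_{ab}/Q_{ab})=\mathrm{KL}(P\,\|\,Q)$. If $Q_{ab}=0<P_{ab}$ for some pair, then $\sigma=+\infty$ and the bound is vacuous, so we may assume $P\ll Q$.

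First, identify the edge. Split $h=h_{\mathrm{sym}}+h_{\mathrm{anti}}$ with $h_{\mathrm{anti},ab}=\tfrac12(h_{ab}-h_{ba})$. The directional part of the expected payoff is $\E[h_{\mathrm{anti}}(Z_{0},Z_{1})]=\sum_{ab}P_{ab}h_{\mathrm{anti},ab}$. Relabel $a\leftrightarrow b$ in half of this sum and use antisymmetry:
\[
\sum_{ab}P_{ab}h_{\mathrm{anti},ab}
=\tfrac12\sum_{ab}(P_{ab}-Q_{ab})\,h_{\mathrm{anti},ab}
=\tfrac12\sum_{ab}J_{ab}\,h_{\mathrm{anti},ab}.
\]
This is the quantity in the statement. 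The same relabelling shows that the symmetric part has equal expectation under $P$ and $Q$, so it carries no reversal-odd edge. This justifies calling the displayed sum the directional edge.

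Second, bound it. By H\"older,
\[
\Bigl|\tfrac12\sum_{ab}J_{ab}h_{\mathrm{anti},ab}\Bigr|
\le\tfrac12\|h_{\mathrm{anti}}\|_{\infty}\sum_{ab}|P_{ab}-Q_{ab}|
=\|h_{\mathrm{anti}}\|_{\infty}\,\mathrm{TV}(P,Q),
\]
with $\mathrm{TV}$ the total variation distance, i.e.\ half the $\ell^{1}$ distance. Pinsker's inequality gives $\mathrm{TV}(P,Q)\le\sqrt{\mathrm{KL}(P\|Q)/2}=\sqrt{\sigma/2}$, which is the claimed ceiling. Under detailed balance, $\pi_{a}p_{ab}=\pi_{b}p_{ba}$ for all pairs, so $J\equiv0$ and the edge vanishes identically; consistently, $\sigma=0$.

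The main obstacle is not analytic, since Pinsker does all the work. It is the bookkeeping that makes $\sigma$ literally a relative entropy between two normalized measures. The key point is that $Q$ sums to one only because $\pi$ is stationary, and this is exactly where the hypothesis enters. Alongside this, the infinite-$\sigma$ case must be handled so that the logarithm is well defined.

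I would also flag, in the spirit of Remark~\ref{rem:ceilscope}, that the bound concerns one-step-measurable payoffs. The cross-area accumulates such increments, so over $n$ steps it would be applied per increment, giving an edge of at most $n\|h_{\mathrm{anti}}\|_{\infty}\sqrt{\sigma/2}$. It is not a direct statement about the path functional.
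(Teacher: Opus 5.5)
Your argument is the paper's proof: drop the symmetric part against the antisymmetric current, bound $\tfrac12\langle J,h_{\mathrm{anti}}\rangle$ by $\tfrac12\|h_{\mathrm{anti}}\|_{\infty}\|J\|_{1}$, and apply Pinsker to the one-step law and its time reversal; your constants also match, since $\tfrac12\sqrt{2\sigma}=\sqrt{\sigma/2}$. One correction to your closing commentary: $Q=P^{\top}$ has total mass one for any $\pi$, because transposition preserves mass, so stationarity is not what normalizes $Q$ --- it is what gives $Q$ the same marginals as $P$, makes it the reversed chain's one-step law, and keeps $P$ the same at every step (which your $n$-step summation uses), while the inequality itself holds for any law on pairs.
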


\begin{proof}
The symmetric part of $h$ pairs to zero with the antisymmetric current, giving
the edge as $\tfrac12\langle J,h_{\mathrm{anti}}\rangle
\le\tfrac12\|h_{\mathrm{anti}}\|_{\infty}\|J\|_{1}$. The current is the
difference between the one-step law of the chain and of its time reversal, so
Pinsker's inequality \citep{coverthomas2006} gives
$\|J\|_{1}\le\sqrt{2\sigma}$. Under detailed balance $J\equiv0$.
\end{proof}

\begin{remark}[Scope of the ceiling]
\label{rem:ceilscope}
Proposition~\ref{prop:ceiling} bounds a \emph{one-step} edge, and a payoff
that is a sum of $N$ one-step-measurable increments of the same stationary
chain inherits $N\|h_{\mathrm{anti}}\|_{\infty}\sqrt{\sigma/2}$ by summation.
The cross-area is of that form only on an augmented state: writing
$A^{ij}=\sum_{k}\tfrac12(X^{i}_{k}\Delta X^{j}_{k}-X^{j}_{k}\Delta X^{i}_{k})$,
each summand is one-step measurable for the chain that carries the current
levels alongside the internal state, and that chain is stationary only under
a further assumption on the levels. So the ceiling binds the sign channel of
the increments the cross-area accumulates; it is not a bound on
$\E[A^{ij}]$ itself, and we do not claim one.
\end{remark}

The Markov premise is an assumption about the coarse-graining, not a measured
fact, and Pinsker is loose. The structural claim survives the constant:
excitation asymmetry is simultaneously the source of the edge and the entropy
production that caps it, and there is no configuration in which one is large
and the other small.

\section{Dimension and what the lift buys}
\label{sec:dim}

A control of order $m$ has $N_{m}=\sum_{k\le m}d^{k}$ coordinates; under a
geometric lift the shuffle relations cut the free directions to the free
nilpotent dimension $L_{m}=\sum_{k\le m}\frac1k\sum_{e\mid k}\mu(e)d^{k/e}$
\citep{reutenauer1993}, with $L_{m}/N_{m}$ equal to $0.500$, $0.355$, $0.263$
at $m=2,3,4$ for $d=5$ and $0.500$, $0.347$, $0.257$ for $d=10$.

\begin{proposition}[Level-two complement; \textsc{Proved}]
\label{prop:complement}
At $m=2$ the forward lift breaks the shuffle relation
$\Sig^{ij}+\Sig^{ji}=\Sig^{i}\Sig^{j}$ by exactly $-[X^{i},X^{j}]_{T}$, so the
$N_{2}-L_{2}=d(d+1)/2$ additional directions available under the forward lift
are spanned by the quadratic-covariation payoffs --- the variance and
covariance swaps. Both identities verified to $2.2\times10^{-15}$ at $d=4$,
where $N_{2}=20$, $L_{2}=10$.
\end{proposition}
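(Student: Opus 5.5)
The plan is to compute the forward (iterated-sums) level-two signature directly and compare it with the Marcus lift, using Theorem~\ref{thm:lift} for the comparison. On a discretized stream with increments $\Delta_{k}=X_{t_{k+1}}-X_{t_{k}}$, the forward level-two coordinate is the strictly ordered sum $\Sig^{F,ij}=\sum_{k<l}\Delta^{i}_{k}\Delta^{j}_{l}$, and the level-one coordinate is $\Sig^{i}=\sum_{k}\Delta^{i}_{k}$ under either lift. Summing the two orderings gives
\[
\Sig^{F,ij}+\Sig^{F,ji}=\sum_{k\neq l}\Delta^{i}_{k}\Delta^{j}_{l}
=\Sig^{i}\Sig^{j}-\sum_{k}\Delta^{i}_{k}\Delta^{j}_{k}.
\]
The diagonal sum is the discrete quadratic covariation. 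On a pure-jump stream it is exactly $[X^{i},X^{j}]_{T}$. For continuous semimartingales it converges to $[X^{i},X^{j}]_{T}$ in probability as the mesh goes to zero. This establishes the first claim, that the shuffle defect of the forward lift is exactly $-[X^{i},X^{j}]_{T}$. As a consistency check, the same identity also follows from Theorem~\ref{thm:lift}: the Marcus lift is geometric and satisfies the shuffle relation exactly, while $\Sig^{M,ij}-\Sig^{F,ij}=\tfrac12[X^{i},X^{j}]_{T}$ is symmetric in $(i,j)$, so the defect is $-2\cdot\tfrac12[X^{i},X^{j}]_{T}$.

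Next is the dimension count. At $m=2$, $N_{2}=d+d^{2}$. For $L_{2}$, the Möbius formula gives $d$ at level one and $\tfrac12(d^{2}-d)$ at level two, so $L_{2}=d+d(d-1)/2$ and $N_{2}-L_{2}=d(d+1)/2$, which equals the number of unordered pairs $\{i,j\}$ with $i\le j$. Under a geometric lift the payoff span of level-$\le2$ words is spanned by $\Sig^{i}$, the antisymmetric parts $\tfrac12(\Sig^{ij}-\Sig^{ji})$, and the symmetric parts, which are $\tfrac12\Sig^{i}\Sig^{j}$ and therefore nonlinear in the level-one coordinates.

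The spanning claim is then the following. Under the forward lift, the symmetric combination $\Sig^{F,ij}+\Sig^{F,ji}$ equals $\Sig^{i}\Sig^{j}-[X^{i},X^{j}]_{T}$. Each additional direction is therefore, modulo the geometric (shuffle-determined) quantity $\Sig^{i}\Sig^{j}$, exactly the covariation payoff $[X^{i},X^{j}]_{T}$. These are variance swaps for $i=j$ and covariance swaps for $i<j$, which gives $d(d+1)/2$ of them. The antisymmetric parts coincide under both lifts by Corollary~\ref{cor:area}, so they contribute no new directions.

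The main obstacle is making ``directions'' precise, since the claim is about the algebraic content of the lift rather than about $L^{2}$ linear independence. I would phrase it this way: the forward level-two tensor is determined by the Marcus level-two tensor together with the symmetric matrix $[X,X]_{T}$, and $[X,X]_{T}$ is not a function of the geometric signature in general. For the second part, I would exhibit two paths with the same geometric signature but different quadratic variation, for example Brownian paths versus their piecewise-linear interpolations. Together these show that the $d(d+1)/2$ covariation entries are genuinely new.

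The numerical verification at $d=4$ is a direct evaluation of the identity $\Sig^{F,ij}+\Sig^{F,ji}=\Sig^{i}\Sig^{j}-[X^{i},X^{j}]_{T}$ and of the count $N_{2}=20$, $L_{2}=10$. It requires no further argument.
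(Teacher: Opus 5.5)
The main argument is correct: the shuffle-defect identity and the dimension count both go through. The paper gives no separate proof and rests the statement on Theorem~\ref{thm:lift}: the Marcus lift satisfies the shuffle relation, and the gap $\tfrac12[X^{i},X^{j}]_{T}$ is symmetric in $(i,j)$, so the forward symmetric sum falls short by twice it. That is your consistency check. Your primary route, $\Sig^{F,ij}+\Sig^{F,ji}=\Sig^{i}\Sig^{j}-\sum_{k}\Delta^{i}_{k}\Delta^{j}_{k}$ from the strictly ordered sums, is a self-contained proof that does not need Theorem~\ref{thm:lift}. Your M\"obius count $L_{2}=d+d(d-1)/2$, hence $N_{2}-L_{2}=d(d+1)/2$, is correct.

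The gap is in the step meant to show that the $d(d+1)/2$ covariation entries are genuinely new directions.

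First, your example fails for $d\ge2$. A Brownian path and its piecewise-linear interpolation do not have the same geometric signature. By Chen's identity, the Stratonovich L\'evy area of the Brownian path differs from the polygon's area by $\sum_{k}A_{t_{k},t_{k+1}}$, the sum of the sub-interval areas, which is almost surely nonzero. So the two paths already differ in the antisymmetric level-two block, and the pair does not isolate $[X,X]_{T}$. The example is valid only for $d=1$, where there is no area.

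Second, even a correct pair of paths would only show that $[X,X]_{T}$ is not a function of the geometric signature. That gives one extra direction, not $d(d+1)/2$ independent ones, which is what identifying a complement of dimension $N_{2}-L_{2}$ requires.

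The repair is a jump round trip. Inserting a jump $+v$ immediately followed by $-v$ multiplies the Marcus signature by $\exp(v)\otimes\exp(-v)=\mathbf{1}$, so it is unchanged at every level, while $[X,X]_{T}$ increases by $2vv^{\top}$. Inserting several such round trips, the attainable values of $[X,X]_{T}$ at fixed geometric signature contain a translate of the positive semidefinite cone. That cone has nonempty interior in the $d(d+1)/2$-dimensional space of symmetric matrices. With this, your formulation (forward level-two tensor equals Marcus tensor minus $\tfrac12[X,X]_{T}$, with the second term free given the first) proves the spanning claim in the non-$L^{2}$ sense you correctly identified.
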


\begin{remark}[\textsc{Conjectural}]
That the analogous statement holds at every $m$ --- the full $N_{m}-L_{m}$
complement being spanned by iterated-bracket functionals of the quadratic
covariation --- is plausible from the Hopf square of \citet{noguer2026paths}
but is not proved here, and nothing in this paper depends on it.
\end{remark}

The practical content is that choosing a lift is choosing whether the
volatility surface is in the investable universe. A geometric-lift theory is
more parsimonious and cannot express variance trading; a forward-lift theory is
larger and can, at the cost of $d(d+1)/2$ more parameters --- which,
by Section~\ref{sec:cost}, is not a cost to be waved through.

\section{The program in five identifications}

\begin{enumerate}
\item \textbf{Risk} $=$ the defect form, read off $\E[\Sig^{\le2m}]$
(Proposition~\ref{prop:defect}).
\item \textbf{Rebalancing premium} $=$ lift gap $=$ excess growth
(Corollary~\ref{cor:excess}); match the lift to how you execute.
\item \textbf{Gap risk} exists only under the forward lift
(Corollary~\ref{cor:ruin}).
\item \textbf{Direction} $=$ cross-area: convention-free
(Corollary~\ref{cor:area}), signed by excitation asymmetry
(Proposition~\ref{prop:area}), capped by entropy production
(Proposition~\ref{prop:ceiling}).
\item \textbf{Ambition} $=$ what the effective dimension allows
(Proposition~\ref{prop:floor}), with the regularization dictated by the
sample ratio rather than by preference (Proposition~\ref{prop:shrink}) --- and
with the floor itself a property of unstructured estimation rather than of the
words (Proposition~\ref{prop:model}). What the words buy, on a zero-area
driver, is convexity and not path-dependence
(Proposition~\ref{prop:wordclass}).
\end{enumerate}

\paragraph{Separation of obstructions.} The three difficulties of the title
enter through three identities that do not interact at the orders computed:
geometry through $D=\Cov$ and its Hilbert closure
(Proposition~\ref{prop:defect}, Theorem~\ref{thm:fredholm}); execution
through $\tfrac12[X,X]_{T}$ contracted with
$\operatorname{diag}(\pi)-\pi\pi^{\top}$
(Theorem~\ref{thm:lift}, Corollary~\ref{cor:excess}); estimation through the
exact loss $\tfrac{\gamma}{2}\|D^{1/2}(\hat\ell-\ell^{*})\|^{2}$ at rate
$\operatorname{tr}(D)/(\lambda_{0}M)$ (Proposition~\ref{prop:celoss},
Theorem~\ref{thm:consistency}). We
state this as a synthesis rather than a theorem: each identity is exact on its
own, and their joint content is that each obstruction has its own currency ---
a tensor block, a portfolio tensor, a sample ratio --- and none can be paid in
another's.

\section{Claim tiering and non-claims}

\paragraph{Proved.} Proposition~\ref{prop:defect} (known) and
Corollary~\ref{cor:psd}; Theorem~\ref{thm:fredholm},
Corollary~\ref{cor:frontier}, Propositions~\ref{prop:logdefect}
and~\ref{prop:celoss}; Corollary~\ref{cor:distance},
Proposition~\ref{prop:kelly} to second order,
Proposition~\ref{prop:robust}, and Theorem~\ref{thm:consistency} under its
hypotheses (i)--(ii); Theorem~\ref{thm:lift} and
Corollaries~\ref{cor:excess}--\ref{cor:ruin};
Proposition~\ref{prop:feedback} as a myopic statement;
Proposition~\ref{prop:complement} at $m=2$; Proposition~\ref{prop:critical}
for the first moment, with its second-moment analogue integrated numerically
rather than in closed form; the dimension counts.

\paragraph{Conditional.} Propositions~\ref{prop:floor},
\ref{prop:shrink}, \ref{prop:wordclass} and~\ref{prop:model} are measured on
two calibrations at one risk aversion; the thresholds are indicative, not
universal. Proposition~\ref{prop:ceiling} rests on the Markov and
one-step-measurability premises stated with it, is false without them, and by
Remark~\ref{rem:ceilscope} applies to the increments of the cross-area rather
than to the cross-area itself. Propositions~\ref{prop:area}
and~\ref{prop:signexc} are proved for the simulated families only. Both the
Marchenko--Pastur localization of the estimation barrier and the appeal to
\citet{kanzhou2007} are analogies: the first has an exact rank statement
attached, the second is exact only for the level-one sub-problem, since
level-two payoffs are quadratic in Gaussians and $\hat D$ is not Wishart. The
effective-dimension reading of Remark~\ref{rem:effdim} is a measured
regularity across two calibrations, not a theorem. The Merton form of the
remark in Section~\ref{sec:hawkes} is conjectural.

\paragraph{Conjectural.} The general-$m$ complement
(Remark after Proposition~\ref{prop:complement}); and that the $M/p$ threshold
is stable across drift and covariance structures other than the two used here.

\paragraph{Non-claims.} No market data anywhere; all parameters invented. The
Gaussian driver in Sections~\ref{sec:check} and~\ref{sec:cost} is i.i.d.\
across replications, so the estimation results measure sampling error only and
understate the difficulty under non-stationarity. Transaction costs appear
nowhere; the level-two words are exactly the high-turnover ones, so
Section~\ref{sec:cost} is an upper bound on realizable value. Neither ridge
intensity was tuned; a tuned or nonlinear shrinkage \citep{ledoit2020} would
dominate both. The
model-consistent policy of Proposition~\ref{prop:model} is evaluated under a
correctly specified driver and therefore measures the cost of not using a
model, never the cost of using a wrong one; nothing here bounds its bias under
misspecification. The word-class attribution of
Proposition~\ref{prop:wordclass} is for drivers with zero expected area, which
is why the antisymmetric block earns nothing in it; a driver with excitation
asymmetry moves value into that block, and Section~\ref{sec:area} measures
that separately rather than jointly. The cross-area magnitudes of Proposition~\ref{prop:area} are activity, not
P\&L; Section~\ref{sec:signexc} gives the traded version, where the tick size
is fixed and the sign-copying probability $q$ is imposed rather than estimated
from any market. Nothing shows any strategy attains the entropy bound.

\paragraph{Scope.}
The theoretical conclusions should be interpreted as a framework for organizing
path-dependent portfolio construction rather than as a claim of immediate
economic profitability. The simulations isolate estimation error and structural
effects in controlled environments. A market implementation would additionally
require transaction-cost modeling, turnover constraints, robust estimation,
out-of-sample validation, and comparison against simpler benchmark strategies.

\bibliographystyle{plainnat}
\bibliography{pathport}

\begin{thebibliography}{43}
\providecommand{\natexlab}[1]{#1}
\providecommand{\url}[1]{\texttt{#1}}
\expandafter\ifx\csname urlstyle\endcsname\relax
  \providecommand{\doi}[1]{doi: #1}\else
  \providecommand{\doi}{doi: \begingroup \urlstyle{rm}\Url}\fi

\bibitem[Bacry et~al.(2015)Bacry, Mastromatteo, and Muzy]{bacry2015}
Emmanuel Bacry, Iacopo Mastromatteo, and Jean-Fran\c{c}ois Muzy.
\newblock Hawkes processes in finance.
\newblock \emph{Market Microstructure and Liquidity}, 1\penalty0 (1), 2015.

\bibitem[Bun et~al.(2017)Bun, Bouchaud, and Potters]{bun2017}
Jo\"el Bun, Jean-Philippe Bouchaud, and Marc Potters.
\newblock Cleaning large correlation matrices: Tools from random matrix theory.
\newblock \emph{Physics Reports}, 666:\penalty0 1--109, 2017.

\bibitem[Chen(1957)]{chen1957}
Kuo-Tsai Chen.
\newblock Integration of paths, geometric invariants and a generalized
  {B}aker--{H}ausdorff formula.
\newblock \emph{Annals of Mathematics}, 65\penalty0 (1):\penalty0 163--178,
  1957.

\bibitem[Chevyrev and Lyons(2016)]{chevyrev2016}
Ilya Chevyrev and Terry Lyons.
\newblock Characteristic functions of measures on geometric rough paths.
\newblock \emph{Annals of Probability}, 44\penalty0 (6):\penalty0 4049--4082,
  2016.

\bibitem[Cover(1991)]{cover1991}
Thomas~M. Cover.
\newblock Universal portfolios.
\newblock \emph{Mathematical Finance}, 1\penalty0 (1):\penalty0 1--29, 1991.

\bibitem[Cover and Thomas(2006)]{coverthomas2006}
Thomas~M. Cover and Joy~A. Thomas.
\newblock \emph{Elements of Information Theory}.
\newblock Wiley, 2nd edition, 2006.

\bibitem[Cuchiero and M\"oller(2023)]{cuchiero2023moller}
Christa Cuchiero and Janka M\"oller.
\newblock Signature methods in stochastic portfolio theory.
\newblock \emph{arXiv:2310.02322}, 2023.

\bibitem[Cuchiero et~al.(2023)Cuchiero, Gazzani, and
  Svaluto-Ferro]{cuchiero2023signature}
Christa Cuchiero, Guido Gazzani, and Sara Svaluto-Ferro.
\newblock Signature-based models: Theory and calibration.
\newblock \emph{SIAM Journal on Financial Mathematics}, 14\penalty0
  (3):\penalty0 910--957, 2023.

\bibitem[Dassios and Zhao(2011)]{dassios2011}
Angelos Dassios and Hongbiao Zhao.
\newblock A dynamic contagion process.
\newblock \emph{Advances in Applied Probability}, 43\penalty0 (3):\penalty0
  814--846, 2011.

\bibitem[DeMiguel et~al.(2009)DeMiguel, Garlappi, and Uppal]{demiguel2009}
Victor DeMiguel, Lorenzo Garlappi, and Raman Uppal.
\newblock Optimal versus naive diversification: How inefficient is the $1/n$
  portfolio strategy?
\newblock \emph{Review of Financial Studies}, 22\penalty0 (5):\penalty0
  1915--1953, 2009.

\bibitem[Engl et~al.(1996)Engl, Hanke, and Neubauer]{engl1996}
Heinz~W. Engl, Martin Hanke, and Andreas Neubauer.
\newblock \emph{Regularization of Inverse Problems}.
\newblock Kluwer, 1996.

\bibitem[Errais et~al.(2010)Errais, Giesecke, and Goldberg]{errais2010}
Eymen Errais, Kay Giesecke, and Lisa~R. Goldberg.
\newblock Affine point processes and portfolio credit risk.
\newblock \emph{SIAM Journal on Financial Mathematics}, 1\penalty0
  (1):\penalty0 642--665, 2010.

\bibitem[Fernholz(2002)]{fernholz2002}
E.~Robert Fernholz.
\newblock \emph{Stochastic Portfolio Theory}.
\newblock Springer, 2002.

\bibitem[Fernholz and Karatzas(2009)]{fernholzkaratzas2009}
E.~Robert Fernholz and Ioannis Karatzas.
\newblock Stochastic portfolio theory: An overview.
\newblock In \emph{Handbook of Numerical Analysis: Mathematical Modeling and
  Numerical Methods in Finance}, pages 89--167. Elsevier, 2009.

\bibitem[Futter et~al.(2025{\natexlab{a}})Futter, Horvath, and
  Wiese]{futter2023sig}
Owen Futter, Blanka Horvath, and Magnus Wiese.
\newblock Signature trading: A path-dependent extension of the mean--variance
  framework with exogenous signals.
\newblock \emph{Quantitative Finance}, 25\penalty0 (2):\penalty0 197--226,
  2025{\natexlab{a}}.
\newblock arXiv:2308.15135.

\bibitem[Futter et~al.(2025{\natexlab{b}})Futter, {Muca Cirone}, and
  Horvath]{mucacirone2025}
Owen Futter, Nicola {Muca Cirone}, and Blanka Horvath.
\newblock Kernel learning for mean--variance trading strategies.
\newblock \emph{arXiv:2507.10701}, 2025{\natexlab{b}}.

\bibitem[Garlappi et~al.(2007)Garlappi, Uppal, and Wang]{garlappi2007}
Lorenzo Garlappi, Raman Uppal, and Tan Wang.
\newblock Portfolio selection with parameter and model uncertainty: A
  multi-prior approach.
\newblock \emph{Review of Financial Studies}, 20\penalty0 (1):\penalty0 41--81,
  2007.

\bibitem[Gyurk\'o et~al.(2013)Gyurk\'o, Lyons, Kontkowski, and
  Field]{gyurko2013}
Lajos~Gergely Gyurk\'o, Terry Lyons, Mark Kontkowski, and Jonathan Field.
\newblock Extracting information from the signature of a financial data stream.
\newblock \emph{arXiv:1307.7244}, 2013.

\bibitem[Hardiman et~al.(2013)Hardiman, Bercot, and Bouchaud]{hardiman2013}
Stephen~J. Hardiman, Nicolas Bercot, and Jean-Philippe Bouchaud.
\newblock Critical reflexivity in financial markets: A {H}awkes process
  analysis.
\newblock \emph{European Physical Journal B}, 86:\penalty0 442, 2013.

\bibitem[Hastie et~al.(2022)Hastie, Montanari, Rosset, and
  Tibshirani]{hastie2022}
Trevor Hastie, Andrea Montanari, Saharon Rosset, and Ryan~J. Tibshirani.
\newblock Surprises in high-dimensional ridgeless least squares interpolation.
\newblock \emph{Annals of Statistics}, 50\penalty0 (2):\penalty0 949--986,
  2022.

\bibitem[Hawkes(1971)]{hawkes1971}
Alan~G. Hawkes.
\newblock Spectra of some self-exciting and mutually exciting point processes.
\newblock \emph{Biometrika}, 58\penalty0 (1):\penalty0 83--90, 1971.

\bibitem[Hoffman(2000)]{hoffman2000}
Michael~E. Hoffman.
\newblock Quasi-shuffle products.
\newblock \emph{Journal of Algebraic Combinatorics}, 11\penalty0 (1):\penalty0
  49--68, 2000.

\bibitem[Kalsi et~al.(2020)Kalsi, Lyons, and {Perez Arribas}]{kalsi2020}
Jasdeep Kalsi, Terry Lyons, and Imanol {Perez Arribas}.
\newblock Optimal execution with rough path signatures.
\newblock \emph{SIAM Journal on Financial Mathematics}, 11\penalty0
  (2):\penalty0 470--493, 2020.

\bibitem[Kan and Zhou(2007)]{kanzhou2007}
Raymond Kan and Guofu Zhou.
\newblock Optimal portfolio choice with parameter uncertainty.
\newblock \emph{Journal of Financial and Quantitative Analysis}, 42\penalty0
  (3):\penalty0 621--656, 2007.

\bibitem[Kelly et~al.(2024)Kelly, Malamud, and Zhou]{kelly2024virtue}
Bryan~T. Kelly, Semyon Malamud, and Kangying Zhou.
\newblock The virtue of complexity in return prediction.
\newblock \emph{Journal of Finance}, 79\penalty0 (1):\penalty0 459--503, 2024.

\bibitem[Kir\'aly and Oberhauser(2019)]{kiraly2019}
Franz~J. Kir\'aly and Harald Oberhauser.
\newblock Kernels for sequentially ordered data.
\newblock \emph{Journal of Machine Learning Research}, 20\penalty0
  (31):\penalty0 1--45, 2019.

\bibitem[Ledoit and Wolf(2004)]{ledoit2004}
Olivier Ledoit and Michael Wolf.
\newblock A well-conditioned estimator for large-dimensional covariance
  matrices.
\newblock \emph{Journal of Multivariate Analysis}, 88\penalty0 (2):\penalty0
  365--411, 2004.

\bibitem[Ledoit and Wolf(2020)]{ledoit2020}
Olivier Ledoit and Michael Wolf.
\newblock Analytical nonlinear shrinkage of large-dimensional covariance
  matrices.
\newblock \emph{Annals of Statistics}, 48\penalty0 (5):\penalty0 3043--3065,
  2020.

\bibitem[Levin et~al.(2013)Levin, Lyons, and Ni]{levin2013}
Daniel Levin, Terry Lyons, and Hao Ni.
\newblock Learning from the past, predicting the statistics for the future,
  learning an evolving system.
\newblock \emph{arXiv:1309.0260}, 2013.

\bibitem[Lyons and McLeod(2022)]{lyonsmcleod2022}
Terry Lyons and Andrew~D. McLeod.
\newblock Signature methods in machine learning.
\newblock \emph{arXiv:2206.14674}, 2022.

\bibitem[Lyons and Ni(2015)]{lyonsni2015}
Terry Lyons and Hao Ni.
\newblock Expected signature of {B}rownian motion up to the first exit time
  from a bounded domain.
\newblock \emph{Annals of Probability}, 43\penalty0 (5):\penalty0 2729--2762,
  2015.

\bibitem[Lyons et~al.(2020)Lyons, Nejad, and {Perez
  Arribas}]{lyons2019nonparametric}
Terry Lyons, Sina Nejad, and Imanol {Perez Arribas}.
\newblock Nonparametric pricing and hedging of exotic derivatives.
\newblock \emph{Applied Mathematical Finance}, 27\penalty0 (6):\penalty0
  457--494, 2020.

\bibitem[Lyons and Qian(2002)]{lyonsqian2002}
Terry~J. Lyons and Zhongmin Qian.
\newblock \emph{System Control and Rough Paths}.
\newblock Oxford University Press, 2002.

\bibitem[Marcus(1981)]{marcus1981}
Steven~I. Marcus.
\newblock Modeling and approximation of stochastic differential equations
  driven by semimartingales.
\newblock \emph{Stochastics}, 4\penalty0 (3):\penalty0 223--245, 1981.

\bibitem[Markowitz(1952)]{markowitz1952}
Harry Markowitz.
\newblock Portfolio selection.
\newblock \emph{Journal of Finance}, 7\penalty0 (1):\penalty0 77--91, 1952.

\bibitem[Mar\v{c}enko and Pastur(1967)]{marchenko1967}
Vladimir~A. Mar\v{c}enko and Leonid~A. Pastur.
\newblock Distribution of eigenvalues for some sets of random matrices.
\newblock \emph{Mathematics of the USSR-Sbornik}, 1\penalty0 (4):\penalty0
  457--483, 1967.

\bibitem[Michaud(1989)]{michaud1989}
Richard~O. Michaud.
\newblock The {M}arkowitz optimization enigma: Is optimized optimal?
\newblock \emph{Financial Analysts Journal}, 45\penalty0 (1):\penalty0 31--42,
  1989.

\bibitem[Muandet et~al.(2017)Muandet, Fukumizu, Sriperumbudur, and
  Sch\"olkopf]{muandet2017}
Krikamol Muandet, Kenji Fukumizu, Bharath Sriperumbudur, and Bernhard
  Sch\"olkopf.
\newblock Kernel mean embedding of distributions: A review and beyond.
\newblock \emph{Foundations and Trends in Machine Learning}, 10\penalty0
  (1--2):\penalty0 1--141, 2017.

\bibitem[{Noguer i Alonso}(2026)]{noguer2026paths}
Miquel {Noguer i Alonso}.
\newblock A general theory of paths: Signatures, jump lifts, and expected
  signatures of self-exciting processes.
\newblock \emph{arXiv:2606.28869 [math.PR]}, 2026.

\bibitem[Reutenauer(1993)]{reutenauer1993}
Christophe Reutenauer.
\newblock Free {L}ie algebras.
\newblock \emph{London Mathematical Society Monographs}, 7, 1993.

\bibitem[Salvi et~al.(2021)Salvi, Cass, Foster, Lyons, and Yang]{salvi2021}
Cristopher Salvi, Thomas Cass, James Foster, Terry Lyons, and Weixin Yang.
\newblock The signature kernel is the solution of a {G}oursat {PDE}.
\newblock \emph{SIAM Journal on Mathematics of Data Science}, 3\penalty0
  (3):\penalty0 873--899, 2021.

\bibitem[Tropp(2012)]{tropp2012}
Joel~A. Tropp.
\newblock User-friendly tail bounds for sums of random matrices.
\newblock \emph{Foundations of Computational Mathematics}, 12\penalty0
  (4):\penalty0 389--434, 2012.

\bibitem[Tropp(2015)]{tropp2015}
Joel~A. Tropp.
\newblock \emph{An Introduction to Matrix Concentration Inequalities}, volume~8
  of \emph{Foundations and Trends in Machine Learning}.
\newblock Now Publishers, 2015.

\end{thebibliography}

\end{document}